\newtheorem{theorem}{Theorem}
\newtheorem{definition}{Definition}
\newtheorem{assumption}{Assumption}
\theoremstyle{definition}
\newtheorem{remark}{\normalfont\bfseries Remark}
\theoremstyle{definition}
\newtheorem{example}{\normalfont\bfseries Example}
\newcommand{\delayu}{\tau_u}
\newcommand{\delaysensor}{\tau_y}
\newcommand{\diff}{\,\mathrm{d}}
\newcommand{\drupper}{\bar{a}}
\newcommand{\drlower}{\underline{a}}
\newcommand{\eobound}{\Gamma}
\newcommand{\EKF}{extended class $\mathcal{K}_{\infty}$ function}
\title{Robust Safety for Mixed-Autonomy Traffic with\\ Delays and Disturbances}
\author{{Chenguang Zhao, Huan Yu}
\thanks{*Huan Yu is the corresponding author. Email: {\it huanyu@ust.hk}}
 \thanks{Chenguang Zhao and Huan Yu are with the Hong Kong University of Science and Technology (Guangzhou), Thrust of Intelligent Transportation, Nansha, Guangzhou, 511400, Guangdong, China. Huan Yu is also affiliated with the Hong Kong University of Science and Technology, Department of Civil and Environmental Engineering, Hong Kong SAR, China}}
\begin{document}
\maketitle

\begin{abstract}
Various control strategies and field experiments have been designed for connected and automated vehicles (CAVs) to stabilize mixed traffic that contains both CAVs and Human-driven Vehicles (HVs). The effect of these stabilizing CAV control strategies on traffic safety is still under investigation.    In an effort to prioritize safety over stability, a safety-critical filter via control barrier functions (CBFs) can be designed by modifying the stabilizing nominal control input in a minimal fashion and imparting collision-free driving behaviors for CAVs and HVs. However, such formal safety guarantees can be violated if there are delays in the actuation and communication channels of the CAV. Considering both actuator and sensor delays, and disturbances, we propose robust safety-critical traffic control (RSTC) design to ensure ``robust safety" of the mixed traffic. While predictor-based CBF has been developed to compensate for the actuator delay, uncertain speed disturbances from the head vehicle  cause prediction error and require novel robust CBF design. Besides, safety-critical control  with sensor delay also remains an open question. 
In RSTC, a state predictor with bounded error is designed, and robust CBF constraints are constructed to guarantee safety under actuator delay and disturbances. When there is a sensor delay, a state observer is designed and integrated with a predictor-based CBF to ensure robust safety. Numerical simulations demonstrate that the proposed RSTC avoids rear-end collisions for two unsafe traffic scenarios in the presence of actuator, sensor delays and disturbances.
\end{abstract}

\begin{IEEEkeywords}
Mixed-autonomy Systems, Safety-critical Control, Actuator Delay, Sensor Delay
\end{IEEEkeywords}

\section{Introduction}

The potential of connected and automated vehicles (CAVs) in mitigating traffic congestion through stabilization has continuously gained attention over the recent years, including theoretical analysis~\cite{cui2017stabilizing,yu2018stabilization} and field experiments~\cite{stern2018dissipation,jin2018experimental}. To stabilize traffic via CAVs, controllers have been designed under different communication typologies~\cite{kesting2008adaptive,orosz2016connected,wang2021leading}, and via various control methods such as feedback control~\cite{wang2023general}, $\mathcal{H}_{\infty}$ control~\cite{zhou2020stabilizing}, and  optimal control~\cite{wang2020controllability,le2022cooperative,wang2021optimal,jin2020analysis}. However, scarce studies have been conducted on how such stabilizing CAV controllers affect traffic safety.

In real-world traffic scenarios, how to formally guarantee the safety of traffic systems is imperative. For example, one major factor affecting public acceptance of autonomous driving is its safety performance, particularly regarding the possibility of rear-end collisions~\cite{cunningham2019buy}. To guarantee traffic safety, representative technologies include reachability analysis~\cite{althoff2014online,mitchell2005time}, model predictive control~\cite{gong2018cooperative,feng2021robust}, and  control barrier function (CBF)~\cite{ames2014control,xiao2021bridging,zhao2023safety}. Compared with the other approaches, CBF provides certified safety by directly synthesizing a safety-critical controller from modifying user-selected controllers, and thus offers some degree of freedom in design. In the authors' previous work~\cite{zhao2023safety}, a safety-critical traffic control (STC) has been developed using CBF, which provides safety guarantee for pre-designed traffic stabilizing nominal controller for mixed-autonomy systems.

Previous works have focused on control problems of delay-free traffic systems, while in this article, delayed traffic systems with disturbances are considered. In practice, CAV controllers will face intrinsic delays from multiple sources. For example, onboard sensors measure vehicle speed and gap, which are then filtered to reduce measurement noise. This process of sensing and filtering brings delays due to discrete sampling of sensors, radar or lidar filtering~\cite{wang2018delay,xiao2011practical}. The CAV collects information from the surrounding traffic environment or from following vehicles through vehicle-to-infrastructure  or vehicle-to-vehicle wireless communication, which causes transmission delays due to the scheduling algorithms to send packets, the computation of onboard computers, and packet drops~\cite{jin2014dynamics,beregi2021connectivity}. These delays can be lumped as a sensor delay in the measurement~\cite{ma2022string}. In addition to the sensor delay, another type of delay is the actuator delay in the controller.  When the vehicle control system executes the acceleration command from the controller, there are delays from  the engine response, the throttle actuator, or the brake actuator~\cite{xiao2011practical}. Besides delays, practical traffic systems are also subject to disturbances that evolve independently from the system and the control input  and thus bring uncertainty and challenge to controller design. In this paper, the disturbance comes from the speed of the leading vehicle ahead of the controlled CAV.

The stability of traffic systems is jeopardized in the presence of delays. Stability analysis in \cite{jin2014dynamics} and \cite{beregi2021connectivity} shows that for a feedback controller, its stability region, i.e., the range of feedback gain that stabilizes traffic,  shrinks when there are sensor and actuator delays, respectively. While some works have analyzed the effect of delays on stability and designed delay-robust stabilizing controllers~\cite{beregi2021connectivity,wang2022design,bekiaris2023robust}, a formal safety guarantee for delayed mixed-autonomy traffic is still lacking. Simulations in~\cite{molnar2022input} show that delays may cause safety violations, such as rear-end collisions, even when the delayed system is still stable. In this paper, we mainly focus on designing robust safety-critical CAV controllers to guarantee safety in mixed traffic.

\begin{figure*}[t!]
    \centering
    \includegraphics[width=0.65\textwidth]{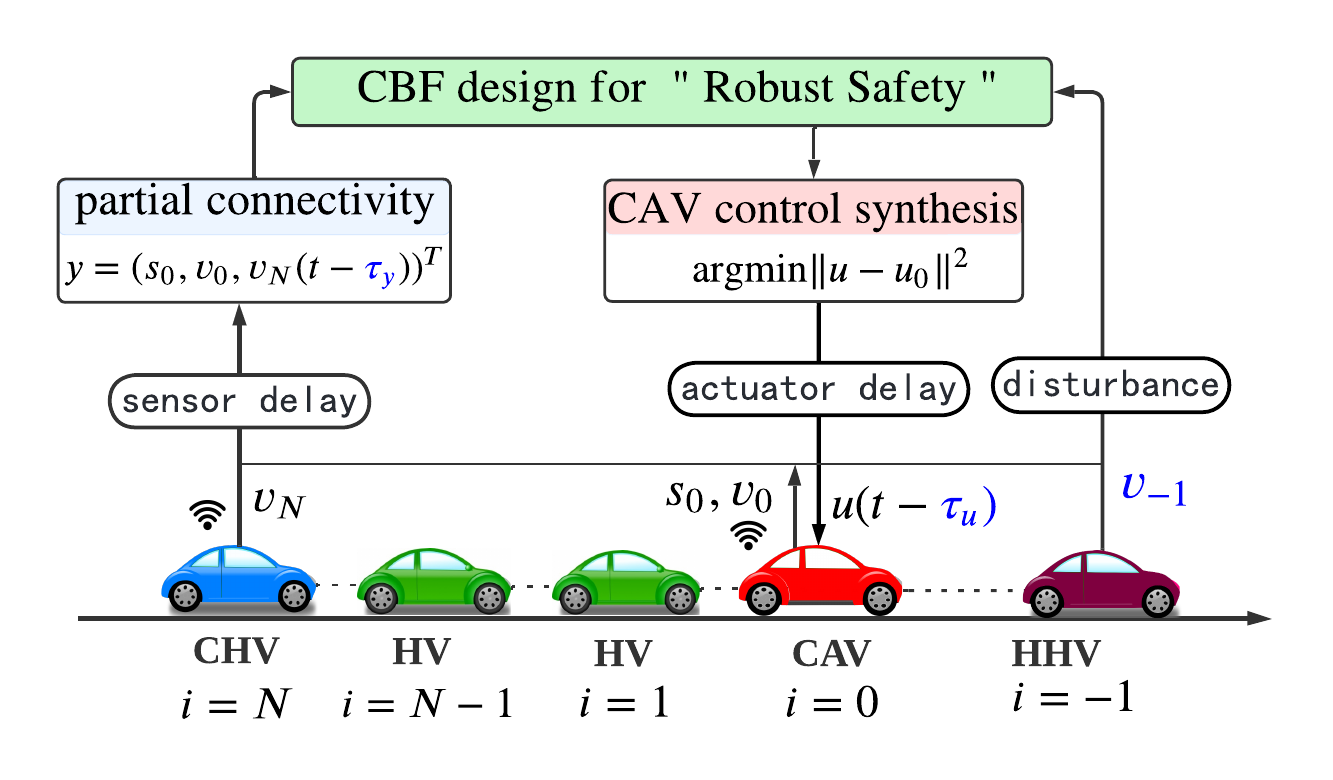}
    \caption{The proposed robust safety-critical traffic control (RSTC) framework, which achieves rear-end collision-free "robust safety" in mixed autonomy traffic  by controlling a CAV in the existence of actuator delay, sensor delay, and disturbances.}
    \label{fig:framework}
\end{figure*}

To ensure safety for systems with actuator delay, CBF is integrated with a state predictor that predicts future state value over the delay interval given the current and historical dynamics~\cite{abel2021safety,singletary2020control}. For systems with disturbances, the notion of  input-to-state safety (ISSf) has been developed and integrated with CBFs to characterize safety under disturbances~\cite{kolathaya2018input,krstic2021inverse}. We consider in this paper a delayed mixed-autonomy system with disturbances that come from the speed of a leading vehicle ahead of the controlled CAV.  Neither the aforementioned predictor-based CBF approach nor the ISSf-CBF approach can be directly applied to such a system. The main challenge remains in that the dynamics of this external disturbance are unknown, which induces prediction errors that could violate the naive predictor-based CBF safety constraints~\cite{molnar2022safety}. In~\cite{molnar2022input}, predictor-based  ISSf-CBF is designed, which allows safety violation and only guarantees forward invariance of a safe set larger than the original one without disturbances. In~\cite{molnar2022safety}, robust CBF constraints are designed to guarantee safety under the worst-case prediction. The analysis in~\cite{molnar2022safety} focuses on the specific scenario of one CAV following one HV, which ignores the risk from following vehicles, such as when the following vehicle suddenly accelerates. Therefore, robust safety guarantee for mixed autonomy against actuator delay and disturbances remains an open question.

For systems without sensor delay, observer-based CBF has been developed in~\cite{agrawal2022safe,wang2022observer}. A state observer is designed to estimate un-measurable states, and the estimated state is then used to construct CBF constraints that guarantee safety considering the estimation error. When there is a sensor delay in the measurement, how to combine a delay-compensating observer~\cite{watanabe1981observer} with CBFs to achieve safety-critical control under delayed partial measurement, to the best of the authors' knowledge, has not been investigated. 
Therefore, this article provides robust safety guarantee via a novel CBF design that accounts for the actuator delay, sensor delay, and disturbances.

To summarize, existing safety-critical controllers designed for CAVs may still cause rear-end collisions due to the three challenges that arise from real-world traffic systems: actuator delay, sensor delay, and external disturbances. 
To bridge this gap, this paper develops robust safety-critical traffic control (RSTC) as shown in Fig.~\ref{fig:framework} to impart formal safety guarantee to mixed-autonomy systems. To compensate for the actuator delay, we design a state predictor and prove its prediction error bound  under the assumption of a bounded derivative of the disturbance, which is the acceleration of the head vehicle. Robust predictor-based CBF constraints are then designed for  both the CAV and the following vehicles. Considering the sensor delay, a predictor-observer is designed to estimate the system state, and CBF constraints are constructed considering both the estimation error and prediction error.  The contribution of this paper is two-fold.
\begin{itemize}
    \item The theoretical novelty lies in providing the first robust CBF design for mixed-autonomy systems with actuator delay, sensor delay, and disturbances. 
    \item For application, the proposed RSTC framework guarantees ``robust safety" for mixed traffic. The design acts as a safety filter that is flexible to integrate with various existing CAV controllers for rear-end collision avoidance.
\end{itemize}

The remainder of this paper is organized as follows. In Section \ref{sec:model}, we formulate the mixed autonomy traffic system with actuator and sensor delays. In Section \ref{sec:preliminary CBF}, we introduce preliminary knowledge of  CBF and give examples of how CBF is utilized to design safety-critical traffic controllers. Section \ref{sec:RSTC} designs CBF constraints to guarantee  mixed traffic safety with delays. The proposed RSTC is validated and analyzed by numerical simulation in Section \ref{sec:simulation}.

\section{Mixed-autonomy traffic with actuator and sensor delays} \label{sec:model}

We consider the longitudinal control of a CAV in a mixed vehicle chain as shown in Fig. \ref{fig:framework}. The CAV follows a head human-driven vehicle (HHV) and leads $N$ following human-driven vehicles (HVs). We index the HHV as $-1$, the CAV as $0$, and the following HVs as $1$ to $N$.

The car-following dynamics of the HVs are described as
\begin{align}
\dot s_i(t) &= v_{i-1}(t) - v_{i}(t), \label{eq:CF HDV s} \\
\dot{v}_{i}(t) &=F_i\left(s_{i}(t), \dot{s}_{i}(t),  v_{i}(t)\right), \label{eq:CF HDV v}
\end{align}
with $s_i(t) \in \mathbb{R}$ being the spacing  between HV-$i$ and its leading vehicle $i-1$, $v_i(t) \in \mathbb{R}$  and $v_{i-1}(t) \in \mathbb{R}$ being the speed of its own and the leader vehicle, respectively. The function $F_i:\mathbb{R}^3\to \mathbb{R}$ describes the human driver's driving decision based on $s_i$, $v_i$, and $v_{i-1}$. The function $F_i$ can be taken from  representative human driver models, such as the optimal velocity model (OVM) or the  intelligent driver model (IDM).

{\bf Actuator delay:} For the CAV, we use a controller to control its motion, and its longitudinal dynamics are governed by
\begin{align}
    \dot{{s}}_{0}(t) &= v_{-1}(t)- {v}_{0}(t), \label{eq:CF CAV s}\\
    \dot{{v}}_{0}(t) &=u(t-\delayu), \label{eq:CF CAV v}
\end{align}
where $v_{-1}(t) \in \mathbb{R}$ is the velocity of the head vehicle, $u(t) \in \mathbb{R}$ is the control input. In practical autonomous driving systems, there is a delay between the controller command and actual vehicle's acceleration due to delays in the engine response, throttle actuator, or brake actuator. We formulate those delays as the actuator delay $\delayu>0$ in the control input.

At the equilibrium states, all vehicles drive at a uniform speed $v^*$, and each vehicle $i$ keeps a constant gap $s_i^*$ decided by $F_i(s_i^*,0,v^*) = 0$. We take the perturbations around the equilibrium states as 
\begin{align}
    \tilde{s}_{i}(t)&=s_{i}(t)-s_i^{\star}, \\
    \tilde{v}_{i}(t)&=v_{i}(t)-v^{\star},
\end{align}
and linearize the car-following model \eqref{eq:CF HDV s}-\eqref{eq:CF HDV v}   as
\begin{align}
    \dot{\tilde{s}}_{i}(t) &= \tilde{v}_{i-1}(t) - \tilde{v}_{i}(t), \label{eq:CF_lin HDV s}\\
    \dot{\tilde{v}}_{i}(t) &= a_{i1} \tilde{s}_{i}(t) - a_{i2} \tilde{v}_{i}(t) + a_{i3} \tilde{v}_{i-1}(t), \label{eq:CF_lin HDV v}
\end{align}
where $a_{i1}=\frac{\partial F_i}{\partial s_i}, a_{i2}=\frac{\partial F_i}{\partial \dot{s}_i}-\frac{\partial F_i}{\partial v_i}, a_{i3}=\frac{\partial F}{\partial \dot{s}_i}$ are parameters evaluated at the steady states $v^*$ and $s_i^*$. For the CAV at equilibrium states, it will drive at the same equilibrium speed $v^*$, and keep a constant gap $s^*$. We linearize the car-following model for CAV \eqref{eq:CF CAV s}-\eqref{eq:CF CAV v} as
\begin{align}
    \dot{\tilde{s}}_{0}(t) &= \tilde{v}_{-1}(t) - \tilde{v}_{0}(t), \label{eq:CF_lin CAV s}\\
    \dot{\tilde{v}}_{0}(t) &= u(t-\delayu).\label{eq:CF_lin CAV v}
\end{align}
For the mixed autonomy system, we have the state variable as 
\begin{align}\label{eq:state sv tilde}
x =[\tilde{s}_{0},\;  \tilde{v}_{0},\; \tilde{s}_{1},\;  \tilde{v}_{1},\cdots,  \tilde{s}_{N},\;  \tilde{v}_{N}
]^\top \in \mathbb{R}^{n},
\end{align}
with $n = 2N+2$. The linearized system model is
\begin{equation}\label{eq:system}
    \dot{x}(t) = Ax(t)  + Bu(t-\delayu) + Dr(t),
\end{equation}
where $r(t) = \tilde{v}_{-1}(t)$ is the disturbance from the head vehicle's speed, and the model coefficients $A\in\mathbb{R}^{n\times n}$, $B\in\mathbb{R}^{n}$, $D\in\mathbb{R}^{n}$ are 
\begin{equation}\label{eq:model ABD}
\begin{split}
    A \! = \!\begin{bmatrix}
    0 & -1     &        &         &    &        &        &\\
    0 & 0      &        &         &    &        &        &\\
    0 & 1      & 0      & -1      &    &        &        &\\
    0 & a_{13} & a_{11} & -a_{12} &    &        &        &\\
      &        &        & \ddots  &    &        &        &\\
      &        &        &         & 0  & 1      & 0      & -1 \\
      &        &        &         & 0  & a_{N3} & a_{N1} & -a_{N2}
    \end{bmatrix} \!, \\
    B = \begin{bmatrix}
        0 & 1 & 0 &\cdots & 0 
        \end{bmatrix}^{\top} \!, 
   D = \begin{bmatrix}    1 & 0 &\cdots &0    \end{bmatrix}^{\top}\!\!\!.
\end{split}
\end{equation}

{\bf Sensor delay:}  In practice, some following HVs may not be equipped with communication devices, so their gap and speed information will be unknown to the CAV. In this case, the CAV only gets partial information of the system state $x(t)$. Besides, there are delays in the measurement that includes sensor detection, filter processing, or wireless communication. In this paper, we consider the sensor delay that mainly comes from communication.  We assume that $0<n_y\le n$ state variables are available to the CAV. The measurement $y(t) \in \mathbb{R}^{n_y}$ is 
\begin{align}\label{eq:yc general}
    y(t) = \sum_{j=1}^{J} C_j x(t-\tau_{y,j}),
\end{align}
where $\tau_{y,j}\ge 0$ are sensor delays and $C_j\in \mathbb{R}^{n_y\times n}$ are observation matrices. For the system to be observable,  the least information required is the speed of the CAV, the gap of the CAV, and the speed of the last vehicle HV-$N$ \cite{watanabe1981observer,wang2021leading}. In this most challenging case, the measurement $y(t) \in \mathbb{R}^3$ becomes 
\begin{equation}\label{eq:y}
    y(t) =
    \begin{bmatrix}
    \tilde{s}_{0}(t) & \tilde{v}_{0}(t) & \tilde{v}_{N}(t-\delaysensor)
    \end{bmatrix}^{\top},
\end{equation}
with $\delaysensor>0$ being the sensor delay that comes from the transmission of the velocity measurement of the last connected vehicle HV-$N$ to the CAV.  We write $y(t)$ in the general form of \eqref{eq:yc general} as 
\begin{align}\label{eq:yC}
    y(t) = C_1 x(t) + C_2 x(t-\delaysensor),
\end{align}
where $\delaysensor>0$ is the sensor delay, and the two observation matrices  $C_1 \in \mathbb{R}^{3\times n}$ and $C_2 \in \mathbb{R}^{3\times n}$ are
\begin{equation}
    C_1 = \begin{bmatrix}
    1  & 0  & \cdots &  0     \\
    0  & 1  & \cdots &  0      \\
    0  & 0  & \cdots & 0   
    \end{bmatrix},
    C_2= \begin{bmatrix}
    0  & 0  & \cdots &  0     \\
    0  & 0  & \cdots &  0      \\
    0  & 0  & \cdots & 1  
    \end{bmatrix}.
\end{equation}

\section{Preliminaries on Control Barrier Function} \label{sec:preliminary CBF}

In this section, we introduce some basic preliminaries on safety-critical control via CBF.

\subsection{CBF for delay-free system}

Consider an affine control system with the state $x\in \mathcal{D}\subset \mathbb{R}^{n_1}$ and control input $u\in \mathcal{U}\subset \mathbb{R}^{n_2}$,
\begin{equation}\label{eq:introCBF x}
    \dot{x}(t) = f(x(t)) + g(x(t))u(t),
\end{equation}
with $f:\mathbb{R}^{n_1}  \to \mathbb{R}^{n_1}$ and $g:\mathbb{R}^{n_1} \to \mathbb{R}^{n_2}$ being locally Lipschitz. To ensure safety, we use CBF defined as follows.

\begin{definition}[CBF \cite{ames2019control}] \label{def:CBF}
For the system \eqref{eq:introCBF x}, let 
\begin{equation}\label{eq:introCBF safe set}
    \mathcal{S} = \left\{ x\in \mathcal{D} : h(x)\ge 0\right\}
\end{equation}
be the superlevel set of a continuously differentiable function $h:\mathcal{D}\to \mathbb{R}$, the function $h$  is called a control barrier function for the system \eqref{eq:introCBF x} on $\mathcal{S}$  if there exists an \EKF \  $\alpha$ such that 
\begin{align}
    \sup_{u\in \mathcal{U}} \dot{h}(x,u)  \geq -\alpha(h(x)), \quad \forall x\in \mathcal{D}, 
\end{align}
where the time derivative of $h$ is
\begin{align}
    \dot{h}(x,u) = L_{f} h(x)+L_{g} h(x) u,
\end{align}
with $L_fh = \frac{\partial h(x)}{\partial x}  f(x)$ and $L_gh  = \frac{\partial h(x)}{\partial x}  g(x)$ being the Lie derivatives. An \EKF \ is a function $\alpha:\mathbb{R} \to \mathbb{R}$ that is strictly increasing with $\alpha(0) = 0$.
\end{definition}

CBF guarantees the safety of the system as stated in Theorem \ref{theorem:CBF}.

\begin{theorem}[Safety guarantee for delay-free system \cite{ames2019control}]\label{theorem:CBF}
    If $h$ is a CBF, then any Lipschitz controller $u$ that satisfies
    \begin{align}
        L_fh(x) + L_gh(x)u+\alpha(h(x))\ge 0
    \end{align}
    renders the set $\mathcal{S}$ \eqref{eq:introCBF safe set}  forward invariant, which means if $x(0) \in \mathcal{S}$, then $x(t) \in \mathcal{S}$ for all $t\ge 0$.
\end{theorem}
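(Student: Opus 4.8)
The plan is to reduce this forward-invariance claim to a scalar differential inequality and then close the argument with a comparison principle. First I would fix an initial condition $x(0)\in\mathcal{S}$ and observe that, since $u$ is Lipschitz and $f,g$ are locally Lipschitz, the closed-loop vector field $f(x)+g(x)u(x)$ is locally Lipschitz; hence the solution $x(t)$ exists, is unique, and is absolutely continuous on its maximal interval of existence. Along this trajectory I would introduce the scalar signal $\eta(t):=h(x(t))$, which is differentiable with $\dot{\eta}(t)=L_fh(x(t))+L_gh(x(t))u(t)$.

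The second step converts the controller constraint into a differential inequality. Because the controller satisfies $L_fh(x)+L_gh(x)u+\alpha(h(x))\ge 0$ pointwise, evaluating along the trajectory gives $\dot{\eta}(t)\ge -\alpha(\eta(t))$ for all $t$. This is the crux of the argument: the CBF condition forces $h$ to decrease no faster than the rate $-\alpha(h)$ permitted by the \EKF{} $\alpha$.

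Next I would compare $\eta$ against the solution of the scalar ODE $\dot{z}=-\alpha(z)$ with $z(0)=\eta(0)=h(x(0))\ge 0$. The key observation is that $\alpha(0)=0$ makes $z=0$ an equilibrium of this comparison system, so any solution starting at $z(0)\ge 0$ remains nonnegative for all $t\ge 0$, since its flow cannot cross zero from above. Applying the comparison lemma to $\dot{\eta}\ge -\alpha(\eta)$ then yields $\eta(t)\ge z(t)\ge 0$, that is, $h(x(t))\ge 0$, so $x(t)\in\mathcal{S}$ for all $t$ in the interval of existence. A standard continuation argument extends this to all of $[0,\infty)$, because confinement to $\mathcal{S}$ precludes finite-time escape.

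I expect the main technical obstacle to be the regularity of $\alpha$: an \EKF{} is only assumed strictly increasing with $\alpha(0)=0$, so it need not be locally Lipschitz, and the comparison ODE $\dot{z}=-\alpha(z)$ may then lack a unique solution, which the comparison lemma nominally requires. To make the step rigorous I would either assume $\alpha$ locally Lipschitz, as is standard in the CBF literature, or bypass the comparison ODE and argue directly via a Nagumo-type subtangentiality condition: on the boundary $\{x : h(x)=0\}$ the inequality $\dot{\eta}\ge -\alpha(0)=0$ shows the trajectory cannot leave $\mathcal{S}$, which is precisely the boundary condition certifying forward invariance of the closed set $\mathcal{S}$.
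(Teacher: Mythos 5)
The paper never proves this statement: Theorem~\ref{theorem:CBF} is quoted as a preliminary, with the proof deferred entirely to the cited reference \cite{ames2019control}. Your argument is essentially the standard proof from that source, so there is nothing substantive to contrast: reduce to the scalar differential inequality $\dot\eta \ge -\alpha(\eta)$ along the closed-loop trajectory, compare against $\dot z = -\alpha(z)$ with $z(0) = h(x(0)) \ge 0$, note $z \equiv 0$ is an equilibrium so $z(t) \ge 0$, and conclude $h(x(t)) \ge z(t) \ge 0$. Your flag on the regularity of $\alpha$ is also apt; assuming $\alpha$ locally Lipschitz, as the CBF literature implicitly does, closes that gap, and otherwise one can run the comparison against the maximal solution of the (Peano-existent) comparison ODE.

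One caution on your fallback argument: the Nagumo-type claim as you state it --- ``$\dot h \ge 0$ whenever $h(x)=0$ certifies forward invariance of $\{h \ge 0\}$'' --- is not sufficient by itself. If $\nabla h$ vanishes at a boundary point, the inequality $\dot h = \nabla h \cdot \bigl(f(x)+g(x)u\bigr) \ge 0$ is vacuously satisfied there and invariance can still fail: take $h(x) = x^3$ on $\mathbb{R}$ with closed-loop dynamics $\dot x = -1$; then $\dot h = -3x^2$ equals $0$ on the boundary $x=0$, yet the trajectory from $x(0)=0$ immediately exits $\{h \ge 0\}$. This is exactly why the cited reference assumes $0$ is a regular value of $h$, i.e., $\nabla h(x) \neq 0$ on $\{x : h(x) = 0\}$, before invoking the subtangentiality characterization. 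In the context of this paper the issue is harmless, since the safety functions $h_i(x) = s_i - \psi_i v_i$ in \eqref{eq:h} are affine with nowhere-vanishing gradient, but your general-purpose fallback needs that hypothesis stated explicitly.
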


\begin{example}[STC for delay-free mixed autonomy systems]\label{example:STC}
The delay-free mixed autonomy system is written as \eqref{eq:system}, 
\begin{align}
    \dot x(t) = Ax(t) + Bu(t) + Dr(t).
\end{align}
A nominal controller designed as
\begin{align}
    u_0(t) = Kx(t) + \alpha_3 r(t)
\end{align}
achieves string stability with proper choice of feedback gain $K$~\cite{wang2021leading}. But it may cause rear-end collisions since safety is ignored when stabilizing traffic. In   our previous work~\cite{zhao2023safety}, STC is proposed to ensure safety of  mixed traffic via CBF. We adopt the constant time headway (CTH) spacing policy for safety as
\begin{equation}\label{eq:safe s}
    s_i \geq \psi_i v_i, \quad \forall i=0,1,\cdots,N,
\end{equation}
with $\psi_i>0$ being a safe time-headway. With CTH, we define the safe set for each vehicle as
\begin{equation}\label{eq:safe set}
    \mathcal{S}_{i}=\left\{x \in \mathbb{R}^{n}: h_{i}(x) \geq 0\right\},
\end{equation}
with the safety function
\begin{align}\label{eq:h}
    h_{i}(x) & =  s_i - \psi_i  v_i.
\end{align}
\begin{itemize}
    \item To guarantee CAV safety, the control input should satisfy 
    \begin{equation}\label{eq:CBFintro CAV}
        L_fh_0(x) + L_gh_0(x) u + \alpha(h(x)) \ge 0,
    \end{equation}
    with $f(x) = Ax + Dr$, $g(x) = B$.
    \item For HV safety, the function $h_i$ for HV-$i$ has a relative degree $i+1$ with respect to the control input $u$. High relative degree CBF requires a more complex formulation and is more sensitive to parameters \cite{ames2019control,xiao2022high}. Therefore, we design reduced-degree CBF for the following vehicles with a relative degree one as $h_{i}^{\mathrm{r}}$:
    \begin{align}\label{eq:hir}
        h_{i}^{\mathrm{r}}(x) = h_i(x) - \eta_i h_0(x),
    \end{align}
    with $\eta_i>0$ being a positive coefficient. We see that if $h_{i}^{\mathrm{r}}(x)\ge0$ and $ h_0(x)\ge0$, then $h_i(x)\ge 0$, which implies that the safe criterion for HVs is met. And the CBF constraints for HV becomes
    \begin{equation}\label{eq:CBFintro HV}
        L_fh_{i}^{\mathrm{r}} (x) + L_gh_{i}^{\mathrm{r}}(x) u + \alpha(h_{i}^{\mathrm{r}}(x)) \ge 0.
    \end{equation}
\end{itemize}
The proposed STC~\cite{zhao2023safety} synthesizes a safety-critical controller by solving a QP:
\begin{align}
    & u = \underset{u \in \mathbb{R}, \sigma_i \geq 0}{\operatorname{argmin}} \;  |u-u_0|^{2} + \sum_{i=1}^N p_i\sigma_i^2 \label{eq:QP CBFintro traffic}\\
  \text{s.t.}\;\; & 
 L_fh_0(x) + L_gh_0(x) u   +  \alpha(h_0(x))  \ge 0  \notag \\
    & L_f h_{1}^{\mathrm{r}} (x)  + L_g h_{1}^{\mathrm{r}} (x)u  +\alpha (h_{1}^{\mathrm{r}} (x) ) +\sigma_1 \ge 0 \notag\\
    & \qquad \vdots \notag\\
    & L_f h_{N}^{\mathrm{r}} (x) \! + \! L_g h_{N}^{\mathrm{r}} (x)u     +  \alpha(h_{N}^{\mathrm{r}} (x) )   +\sigma_N \ge 0 ,\notag 
\end{align}
where $\sigma_i$ are slack variables to ensure the feasibility of the QP,   and $p_i>0$ are penalty coefficients.
\end{example}

\subsection{CBF for system with actuator delay}

Consider the system with an actuator delay $\tau>0$,
\begin{align}\label{eq:introCBFdelay x}
    \dot x(t) = f(x(t)) + g(x(t)) u(t-\tau),
\end{align}
with $f$ and $g$ being the same as \eqref{eq:introCBF x}. To synthesize safety-critical controllers when there is an actuator delay, CBF is integrated with a state predictor to compensate for the actuator delay.

For the system \eqref{eq:introCBFdelay x}, the state over $[t,t+\tau]$ is
\begin{align}
    x(t+s) = P(s,x(t),u_t), \quad \forall s\in [0,\tau],
\end{align}
with
\begin{align}
    P(s,x,u_t) =& x(t) +\int_0^{s}  f\left(P(\theta,x(t),u_t)\right) \diff \theta \notag \\
    &+ \int_0^{s} g\left(P(\theta,x(t),u_t)\right) u_t(s-\tau) \diff \theta, 
\end{align}
where 
\begin{align}
u_t(s) = u(t+s), \quad s\in[-\tau,0),
\end{align}
is the historical input over $[t-\tau,t)$.
The predicted state value of $x(t+\tau)$ is
\begin{align}\label{eq:introCBFdelay predict}
    x_p(t) = P(\tau,x(t),u_t).
\end{align}
For the predicted system $x_p(t)$, it has a delay-free dynamics:
\begin{align}
    \dot{x}_p(t) = f(x_p(t)) + g(x_p(t))u(t).
\end{align}
Therefore, the  actuator delay $\delayu$ is compensated by the state predictor~\cite{molnar2022safety}. The CBF design for systems with actuator delay is given as follows.  
\begin{definition}[CBF with actuator delay \cite{molnar2022safety}] 
A continuous function $h:\mathcal{D}\to\mathbb{R}$  is a CBF for \eqref{eq:introCBFdelay x} if there exists an \EKF \ $\alpha$ such that $\forall x\in \mathcal{D}$, we have 
\begin{align}
\sup_{u\in \mathcal{U}} L_{f} h(x_p)+L_{g} h(x_p) u \geq -\alpha(h(x_p)) ,
\end{align}
with $x_p = P(\tau,x(t),u_t)$ being the predicted state by \eqref{eq:introCBFdelay predict}. 
\end{definition}

The CBF guarantees the safety of the system \eqref{eq:introCBFdelay x} as stated in Theorem \ref{theorem:CBF delay}. 
\begin{theorem}[Safety guarantee for system with actuator delay  \cite{molnar2022safety}]\label{theorem:CBF delay}
    If the initial historical input satisfy $P(s,x(0),u_0)\in \mathcal{S},\;\forall s\in[0,\tau]$,  and $h$ is a CBF for \eqref{eq:introCBFdelay x}, then any locally Lipschitz continuous controller $u$ satisfying 
    \begin{align}\label{eq:introCBFdelay CBF constraint}
        L_{f} h(x_p)+L_{g} h(x_p) u \geq -\alpha(h(x_p)),
    \end{align}
    renders the  safe set $\mathcal{S}$ \eqref{eq:introCBF safe set} forward invariant, that is, $x(t)\in \mathcal{S}$ for all $t>0$.
\end{theorem}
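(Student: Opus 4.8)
The plan is to reduce the delayed safety problem to the delay-free one of Theorem~\ref{theorem:CBF} applied to the predictor state $x_p$. The crucial observation is that the predictor exactly reconstructs the future state: for the system \eqref{eq:introCBFdelay x}, the representation $x(t+s) = P(s,x(t),u_t)$ gives, at $s=\tau$, the identity $x_p(t) = P(\tau,x(t),u_t) = x(t+\tau)$. I would first verify this identity, noting that the input applied to the true system over $[t,t+\tau]$ is precisely the already-committed history $u_t$, so the prediction uses no future (unknown) control and therefore coincides with the true trajectory shifted by $\tau$.

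First I would establish that $x_p$ evolves under the delay-free dynamics $\dot{x}_p(t) = f(x_p(t)) + g(x_p(t))u(t)$, as stated in the excerpt. This makes $x_p$ an ordinary control-affine system driven by the current input $u(t)$, with the same $f$, $g$, and hence the same Lie derivatives $L_f h$ and $L_g h$. The hypothesized constraint \eqref{eq:introCBFdelay CBF constraint}, namely $L_f h(x_p) + L_g h(x_p)u \ge -\alpha(h(x_p))$, is then exactly the delay-free CBF condition for the predictor system. Since $h$ is a CBF and $u$ is locally Lipschitz, Theorem~\ref{theorem:CBF} applies verbatim to $x_p$: if $x_p(0)\in\mathcal{S}$, then $x_p(t)\in\mathcal{S}$ for all $t\ge 0$.

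Next I would discharge the initial condition for the predictor and handle the transient window. By definition $x_p(0) = P(\tau,x(0),u_0)$, so the standing assumption $P(s,x(0),u_0)\in\mathcal{S}$ for all $s\in[0,\tau]$ gives $x_p(0)\in\mathcal{S}$ as the special case $s=\tau$. Forward invariance of $\mathcal{S}$ for $x_p$ then yields $x(t+\tau)\in\mathcal{S}$ for all $t\ge 0$, which covers the true trajectory on $[\tau,\infty)$. To cover $[0,\tau]$, I would reuse the same assumption: over this interval the true trajectory is generated by $x(0)$ together with the committed history $u_0$, so $x(s) = P(s,x(0),u_0)\in\mathcal{S}$ for all $s\in[0,\tau]$. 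Concatenating the two intervals gives $x(t)\in\mathcal{S}$ for all $t\ge 0$.

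The main obstacle I anticipate is the rigorous justification of the predictor identity $x_p(t) = x(t+\tau)$ together with the associated well-posedness: one must argue that $P$ is well defined (existence and uniqueness of the integral equation defining $P$, which follows from local Lipschitzness of $f$ and $g$), that $t\mapsto x_p(t)$ inherits enough regularity for $\dot{x}_p$ to be meaningful, and---most delicately---that the prediction over $[t,t+\tau]$ is driven solely by the past input $u_t$, so that it agrees with the actual future rather than requiring yet-unknown control. Once this equivalence and the differentiation of $x_p$ are in hand, the remainder is a direct invocation of Theorem~\ref{theorem:CBF} plus the bookkeeping that splits $[0,\infty)$ into $[0,\tau]$ and $[\tau,\infty)$.
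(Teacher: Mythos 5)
Your proof is correct and matches the approach the paper intends: the paper states this theorem as a cited preliminary result (from \cite{molnar2022safety}) without giving its own proof, but the surrounding text sketches exactly your reduction --- the predictor is exact for this disturbance-free delayed system, so $x_p(t)=x(t+\tau)$ obeys the delay-free dynamics $\dot{x}_p = f(x_p)+g(x_p)u$ and Theorem~\ref{theorem:CBF} applies to it verbatim. Your handling of the initial window $[0,\tau]$ via the hypothesis $P(s,x(0),u_0)\in\mathcal{S}$ is also the standard (and correct) way to complete the argument.
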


\begin{example}[STC for mixed autonomy systems with actuator delay and without disturbances]
For the mixed autonomy system \eqref{eq:system}, if there is no head vehicle ahead of the CAV, then there is no disturbance, i.e., $D=0$.  The system becomes 
\begin{align}
    \dot x(t) = Ax(t) + Bu(t-\delayu),
\end{align}
and the corresponding predictor has a closed-form solution as
\begin{align}\label{eq:introCBFdelay predict traffic}
    x_p(t) = e^{A\tau}x(t) 
     +\int_{-\delayu}^{0} e^{-A\theta} B u(t+\theta) \diff \theta.
\end{align}
Since there is no leading vehicle for the CAV, only HV safety needs to be guaranteed. CBF constraints are designed  following HVs based on Theorem~\ref{theorem:CBF delay} as:
\begin{align}\label{eq:introCBFdelay CBF constraint HV}
        L_{f} h_{i}^{\mathrm{r}} (x_p)+L_{g} h_{i}^{\mathrm{r}}(x_p) u \geq -\alpha(h(x_p)),
\end{align}
with $f(x) = Ax$ and $g(x) = B$. When there is the external disturbance $Dr(t)$ as in \eqref{eq:system}, the predictor~\eqref{eq:introCBFdelay predict traffic}  has prediction error, and the CBF constraints in \eqref{eq:introCBFdelay CBF constraint HV} no longer guarantees safety. In the next section, we will redesign the predictor and   CBF constraints  to ensure safety for the system  \eqref{eq:system}.
\end{example}

\section{Robust safety-critical traffic control under actuator delay, sensor delay, and disturbances} \label{sec:RSTC}

In this section, we design safe constraints and formulate a QP to solve a safety-critical control input for mixed autonomy with actuator delay and disturbances in section \ref{sec:subsec:safety constraints input delay}, and further  incorporate sensor delay in section \ref{sec:subsec:safety constraints input and sensor delay}.

\subsection{Safety under actuator delay and disturbances}\label{sec:subsec:safety constraints input delay}

Since the control input $u(0)$ is actuated to the system until $\delayu$ in the system  \eqref{eq:system}, we make Assumption \ref{assumption:initial safe} on the system state during $[0,\delayu)$.

\begin{assumption}\label{assumption:initial safe}
    The system is safe before the control input $u(0)$ is actuated, i.e., $x(t)\in \mathcal{S}_i$ holds for all $ t\in[0,\delayu) $ and for all $i=0,1,\cdots,N $.
\end{assumption}

For the disturbance $r(t)$, speed of the head HV,  its current value is available to the CAV by onboard sensors. But  its dynamics $\dot r(t)$, acceleration of the head HV, is unknown. To design safety constraints, we make Assumption \ref{assumption:bound rdot} on the disturbance. 
\begin{assumption}\label{assumption:bound rdot}
For the disturbance $r(t)$, its value is known. Its derivative  $\dot{r}(t)$ is unknown, but is bounded by two known bounds $\drlower<0$ and $\drupper>0$ as:
\begin{align}
    \drlower \le \dot r(t) \le \drupper, \quad \forall t\ge 0. \label{eq:bound rdot}
\end{align}
\end{assumption}
Since the derivative $\dot r$ means the acceleration of the head vehicle. Assumption \ref{assumption:bound rdot} thus holds in practice since the acceleration is always bounded considering the existing physical constraint of vehicles. 

For the mixed autonomy model \eqref{eq:system}, the future state  at $t+\delayu$ is determined by: 1) the current state $x(t)$, 2) the historical control input $u$ from $t-\delayu$ to $t$, and 3) the future disturbance $r$ from $t$ to $t+\delayu$ as:
\begin{align}\label{eq:predict x true}
    x(t+\delayu) =& e^{A\delayu}x(t) +\int_{-\delayu}^{0} e^{-A\theta} B u(t+\theta) \diff \theta  \notag \\
     + &\int_{0}^{\delayu} e^{A(\delayu-\theta)} D r(t+\theta) \diff \theta.
\end{align}
Since the future value of disturbance, $r(t+\theta)$ with $\theta\in [0,\delayu)$, is unknown, we design a predictor using current  $r(t)$. The predicted state $x_p(t)$ is
\begin{align}\label{eq:predict x}
    x_p(t) =& e^{A\delayu}x(t) + \int_{-\delayu}^{0} e^{-A\theta} B u(t+\theta) \diff \theta  \notag \\
     + &\int_{0}^{\delayu} e^{A(\delayu-\theta)} D r(t) \diff \theta.
\end{align}
Based on this predictor, we have Theorem \ref{theorem:safety} to guarantee safety of mixed autonomy system.

\begin{theorem}[Safety guarantee for mixed autonomy systems with actuator delay and disturbances]\label{theorem:safety}
Under Assumption \ref{assumption:initial safe} and Assumption \ref{assumption:bound rdot}, for the mixed autonomy system \eqref{eq:system}, if a Lipschitz continuous controller $u(t)$ satisfies 
\begin{align}
    L_fh_0(x_p(t)) + L_gh_0(x_p(t)) u(t)  \ge -\alpha_0( h_0(x_p(t))) + M_0(t), \label{eq:safe constraint CBF CAV theorem}
\end{align}
where $f(x(t)) = Ax(t)$, $g(x(t)) = B$, $x_p\left(t\right)$ is the predicted state value given by \eqref{eq:predict x}, $\alpha_0>0$ is an \EKF, the function $M_0(t):\mathbb{R}^{+}\to \mathbb{R}$ reflects the effect of actuator delay on safety and is defined as  
\begin{align}
    M_0(t) =  & \alpha_0(h_0(x_p(t))) - \alpha_0\left(h_0(x_p(t)) + \frac{1}{2} \drlower \delayu^2\right) \notag\\
      & -\frac{\partial h_0 (x_p(t))}{\partial x_p} D(r(t) + \drlower \delayu) ,\label{eq:M0}
\end{align}
then the safe set for CAV $\mathcal{S}_0$ \eqref{eq:safe set} is forward invariant. If $u$ also satisfies
\begin{align}
    &L_f h_{i}^{\mathrm{r}} (x_p(t)) + L_g h_{i}^{\mathrm{r}} (x_p(t))u(t)   \ge - \alpha_i(h_{i}^{\mathrm{r}} (x_p(t))) + M_i(t),  \label{eq:safe constraint CBF HDV theorem}
\end{align}
where $\alpha_i$ is an \EKF, and the function $M_i(t):\mathbb{R}^{+}\to \mathbb{R}$ is defined as
\begin{align}
    M_i(t) = &\alpha_i(h_{i}^{\mathrm{r}} (x_p(t))) - \alpha_i\left(h_{i}^{\mathrm{r}} (x_p(t)) - \frac{1}{2} \drlower \eta_i \delayu^2\right) \notag \\
    & - \frac{\partial h_{i}^{\mathrm{r}} (  x_p(t))}{\partial  x_p}  D (r(t) + \drupper \delayu )   \label{eq:Mi}, 
\end{align}
then the safe set for HV-$i$ $\mathcal{S}_i$ \eqref{eq:safe set} is forward invariant.

\end{theorem}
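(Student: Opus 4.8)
The plan is to reduce the delayed, disturbed safety problem to a standard forward-invariance (comparison-lemma) argument applied to the \emph{true future} value of each barrier function, after exploiting a structural feature of the model that makes the prediction error tractable. The crucial observation is that the disturbance channel $D$ and the matrix $A$ in \eqref{eq:model ABD} satisfy $AD = 0$, because the CAV headway $\tilde s_0$ never appears on the right-hand side of any car-following equation (its column in $A$ vanishes). Hence $e^{At}D = D$ for all $t$, the disturbance integral in the predictor collapses to $\int_0^{\delayu} e^{A(\delayu-\theta)}D\diff\theta = \delayu D$, and subtracting \eqref{eq:predict x} from \eqref{eq:predict x true} gives the exact error
\[
x(t+\delayu) - x_p(t) = D\,\xi(t), \qquad \xi(t) = \int_0^{\delayu}\big(r(t+\theta)-r(t)\big)\diff\theta .
\]
Writing $r(t+\theta)-r(t) = \int_0^\theta \dot r(t+\sigma)\diff\sigma$ and invoking Assumption \ref{assumption:bound rdot} bounds this scalar as $\tfrac12\drlower\delayu^2 \le \xi(t) \le \tfrac12\drupper\delayu^2$. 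Thus the entire prediction error is a pure shift along the $\tilde s_0$ coordinate whose size is governed only by the acceleration bounds.

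For the CAV, the barrier $h_0$ in \eqref{eq:h} is affine with constant gradient satisfying $\frac{\partial h_0}{\partial x}D = 1$, so defining $H_0(s) := h_0(x(s+\delayu))$ I get $H_0(s) = h_0(x_p(s)) + \xi(s)$, and differentiating while using $AD = 0$ to discard the error term inside the Lie derivative yields the exact relation
\[
\dot H_0(s) = L_f h_0(x_p(s)) + L_g h_0(x_p(s))\,u(s) + r(s+\delayu),
\]
with $f(x)=Ax$, $g(x)=B$. The only quantity unavailable to the controller is the future disturbance $r(s+\delayu)$, which I would eliminate through $r(s+\delayu)-r(s) \ge \drlower\delayu$. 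Substituting the constraint \eqref{eq:safe constraint CBF CAV theorem} with $M_0$ from \eqref{eq:M0} (so that $L_f h_0 + L_g h_0 u \ge -\alpha_0(h_0(x_p)+\tfrac12\drlower\delayu^2) - r(s) - \drlower\delayu$), using that bound, and applying monotonicity of $\alpha_0$ to $h_0(x_p)+\tfrac12\drlower\delayu^2 \le H_0$, everything collapses to $\dot H_0(s) \ge -\alpha_0(H_0(s))$. Assumption \ref{assumption:initial safe} gives $H_0(0)=h_0(x(\delayu))\ge 0$, so the comparison lemma (as in Theorem \ref{theorem:CBF}) yields $H_0(s)\ge 0$ for all $s$, i.e. $x(t)\in\mathcal S_0$ for $t\ge\delayu$; together with Assumption \ref{assumption:initial safe} on $[0,\delayu)$ this makes $\mathcal S_0$ forward invariant.

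The following-vehicle argument is structurally identical but with a sign reversal I would track carefully: the reduced-degree barrier $h_i^{\mathrm r}$ in \eqref{eq:hir} is again affine, now with $\frac{\partial h_i^{\mathrm r}}{\partial x}D = -\eta_i < 0$, so $H_i^{\mathrm r}(s) := h_i^{\mathrm r}(x(s+\delayu)) = h_i^{\mathrm r}(x_p(s)) - \eta_i\xi(s)$. Because this coefficient is negative, the worst (smallest) value of $H_i^{\mathrm r}$ is attained at the \emph{upper} disturbance extreme: both the shift (via $\xi \le \tfrac12\drupper\delayu^2$) and the derivative term (via $\dot r \le \drupper$) are dominated using $\drupper$. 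Repeating the CAV computation with these extremes gives $\dot H_i^{\mathrm r}(s)\ge -\alpha_i(H_i^{\mathrm r}(s))$ from \eqref{eq:safe constraint CBF HDV theorem}, and the comparison lemma delivers forward invariance of $\mathcal S_i$, provided the trajectory starts in the reduced safe set ($h_i^{\mathrm r}\ge 0$ initially, the natural reduced-CBF reading of Assumption \ref{assumption:initial safe}). I note that consistency of the two worst-case bounds makes the argument of $\alpha_i$ in $M_i$ involve $\drupper$; this is the one place where I would re-check \eqref{eq:Mi} against the sign of $\frac{\partial h_i^{\mathrm r}}{\partial x}D$.

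The main obstacle is converting the unknown future disturbance, and the resulting prediction error, into a usable differential inequality. The two devices that make this possible are the structural identity $AD=0$, which turns the error into a single scalar $\xi(t)$ with explicit bounds (if $AD\neq 0$ the error would spread across all coordinates through $e^{A(\delayu-\theta)}D$ and force a far more conservative norm-based constraint), and the idea of differentiating the realized future value $H(s)=h(x(s+\delayu))$ rather than the predicted barrier directly, so that the unavailable $r(s+\delayu)$ cancels against the available $r(s)$ through the derivative bound. The most delicate points to state precisely are the initialization bookkeeping (Assumption \ref{assumption:initial safe}, phrased for the $h_i$, must cover the $h_i^{\mathrm r}$) and the sign/extreme accounting for the following vehicles, where it is easiest to pick the wrong bound.
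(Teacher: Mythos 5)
Your CAV half is correct and is, in substance, the paper's own argument: the same structural facts ($AD=0$, hence $e^{At}D=D$ and a scalar prediction error $\xi(t)\in[\tfrac12\drlower\delayu^2,\,\tfrac12\drupper\delayu^2]$ confined to the $\tilde s_0$ coordinate), the same worst-case substitution $\dot r\ge\drlower$, and the same comparison-lemma closure. The only difference is bookkeeping: you run the differential inequality on the realized future value $H_0(s)=h_0(x(s+\delayu))$, whereas the paper runs it on the shifted predicted barrier $h_{0R}(x_p(t))=h_0(x_p(t))+\tfrac12\drlower\delayu^2$; these close in the same way.

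The HV half has a genuine gap, and it sits exactly where you flagged it. Working directly with $H_i^{\mathrm r}(s)=h_i^{\mathrm r}(x(s+\delayu))=h_i^{\mathrm r}(x_p(s))-\eta_i\xi(s)$, the monotonicity step of the comparison argument needs the quantity inside $\alpha_i$ to be a \emph{lower} bound on $H_i^{\mathrm r}$, and since $\xi\le\tfrac12\drupper\delayu^2$ this forces $\alpha_i\bigl(h_i^{\mathrm r}(x_p)-\tfrac12\eta_i\drupper\delayu^2\bigr)$. What you then prove is a variant of the theorem with a strictly stronger constraint than \eqref{eq:safe constraint CBF HDV theorem} (since $\drupper>\drlower$), i.e., not the stated theorem; and your conjecture that \eqref{eq:Mi} should therefore carry $\drupper$ inside $\alpha_i$ is incorrect. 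The paper's $M_i$ with $\drlower$ is right because the paper uses a different invariant: it never robustifies $h_i^{\mathrm r}$ against the prediction error at all (the HV coordinates carry \emph{zero} prediction error, so $h_i(x_p(t))=h_i(x(t+\delayu))$ exactly). Instead it defines $h_{iR}^{\mathrm r}:=h_i-\eta_i h_{0R}=h_i^{\mathrm r}-\tfrac12\eta_i\drlower\delayu^2$ and proves forward invariance of $h_{iR}^{\mathrm r}(x_p(t))\ge0$. Since this is a constant shift of $h_i^{\mathrm r}(x_p)$, its time derivative is just $\dot h_i^{\mathrm r}(x_p)$, the worst case $\dot r\le\drupper$ handles the Lie-derivative term (this is the only place $\drupper$ enters), and the argument of $\alpha_i$ is the invariant function itself, so no upper/lower-bound mismatch ever arises. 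HV safety then follows by combining the two invariants --- which is precisely why the theorem requires $u$ to satisfy the CAV constraint \eqref{eq:safe constraint CBF CAV theorem} \emph{as well}:
\begin{align*}
h_i(x(t+\delayu)) \;=\; h_i(x_p(t)) \;=\; h_{iR}^{\mathrm r}(x_p(t)) + \eta_i\, h_{0R}(x_p(t)) \;\ge\; 0,
\end{align*}
the two $\tfrac12\drlower\delayu^2$ shifts cancelling exactly. The idea missing from your proposal is this routing: because the prediction error lives only in $\tilde s_0$, all robustification for the followers should pass through $h_{0R}$ inside the reduced barrier, not be applied to $h_i^{\mathrm r}$ itself; doing the latter is what produces the spurious $\drupper$ and the extra conservatism.

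Your caveat on initialization is fair, but note it applies equally to the paper: Assumption \ref{assumption:initial safe} only guarantees $h_i\ge0$ on $[0,\delayu)$, while both your argument and the paper's implicitly need nonnegativity of the reduced (respectively, robustified reduced) barrier at the initial time.
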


\begin{proof}
We prove the theorem in three steps: 
We first prove a bound on the prediction error $x(t+\delayu) - x_p(t)$. Based on the error bound, we then construct a robust safety creation $h_{iR}(x)$ such that  $h_{iR} \left(x_p(t)\right)\ge 0$ guarantees  $h_i \left(x\left(t+\delayu\right)\right)\ge 0$. Then we derive the dynamics of $x_p(t)$ and construct the CBF constraints to ensure $h_{iR} \left(x_p(t)\right)\ge 0$.  The detailed proofs are given in Appendix \ref{sec:appendix proof safety}.
\end{proof}

\begin{remark}[Effect of the actuator delay on CBF  constraints]\label{remark:actuator delay}

We note that CBF constraints in Theorem~\ref{theorem:safety} also work for delay-free systems. When  $\delayu=0$, we have $x_p = x$, and  CBF constraints for CAV~\eqref{eq:safe constraint CBF CAV theorem} and for HV~\eqref{eq:safe constraint CBF HDV theorem} reduce to the CBF designed for delay-free system as in~\eqref{eq:CBFintro CAV} and~\eqref{eq:CBFintro HV} respectively. From the CBF defined in Definition~\ref{def:CBF}, we see that both the function $h$ and its derivative $\dot h$ affect the CBF constraints. The actuator delay affects CBF constraints also from these two sides. Take the CAV safety constraint as an example, we see that since there is a prediction error on the future state, we ensure  safety  of a conservative $h_0(x_p) + \drlower\delayu^2/2$ instead of the original $h_0(x_p)$.  And when calculating $\dot h_0 
(x_p)$, the term $\drupper \delayu$ is introduced due to unknown future disturbance value over the delay interval.
\end{remark}

\begin{example}[RSTC for mixed autonomy systems with actuator delay and disturbances]
    Theorem \ref{theorem:safety} gives the constraints on the control input to ensure safety for mixed traffic. Similar to \eqref{eq:QP CBFintro traffic} in Example~\ref{example:STC}, given a nominal controller $u_0$, we can formulate a QP  with $N+1$ constraints to solve a safe control input: 
    \begin{align}
    & u = \underset{u \in \mathbb{R}, \sigma_i \geq 0}{\operatorname{argmin}} \;  |u-u_0|^{2} + \sum_{i=1}^N p_i\sigma_i^2 \label{eq:QP}  \\
        &\text{s.t.} \notag \\ 
        & L_fh_0(x_p) + L_gh_0(x_p) u   +  \alpha_0 (h_0(x_p) )+ M_0 \ge 0,  \notag \\
        & L_f h_{1}^{\mathrm{r}} (x_p) +L_g h_{1}^{\mathrm{r}} (x_p)u   + \alpha_1 (h_{1}^{\mathrm{r}} (x_p)) + M_1 +\sigma_1\ge 0, \notag\\
        & \qquad \vdots \notag\\
        & L_f h_{N}^{\mathrm{r}} (x_p) +L_g h_{N}^{\mathrm{r}} (x_p)u   + \alpha_N (h_{N}^{\mathrm{r}} (x_p)) + M_N +\sigma_N\ge 0 \notag,
    \end{align}
    where $x_p = x_p(t)$ is the predicted state by~\eqref{eq:predict x}.
\end{example}

\subsection{Safety under actuator delay, sensor delay, and disturbances}\label{sec:subsec:safety constraints input and sensor delay}

For the system \eqref{eq:system} with delayed measurement  in \eqref{eq:yc general} as
\begin{align*}
    y(t) =  \sum_{j=1}^{J} C_j x(t-\tau_{y,j}),
\end{align*}
a predictor   observer is designed to obtain the estimated state $\hat x$ as:
\begin{align}\label{eq:observer}
    \dot{\hat{x}} (t)  =&  A \hat x(t) + Bu(t-\delayu) + Dr(t) \notag \\
    &+ L\left(Y(t) - \bar{C} \hat x(t)\right), 
\end{align}
where $L\in \mathbb{R}^{n\times 
 n_y}$ is the gain matrix to be designed, the vector $Y(t) \in \mathbb{R}^{n_y}$ is given by 
\begin{align}
    Y(t) = &y(t) + \sum_{j=1}^{J} C_j e^{-A\tau_{y,j}} \!\! \int_{-\delayu-\tau_{y,j}}^{-\delayu} \!\! e^{-A(\theta+\delayu)}Bu(t+\theta)\diff \theta \notag \\
     + & \sum_{j=1}^{J} C_j e^{-A\tau_{y,j}}  \int_{-\tau_{y,j}}^{0} e^{-A\theta}Dr(t+\theta)\diff \theta, \label{eq:Ytransformed}
\end{align}
and matrix $\bar{C}\in  \mathbb{R}^{n_y\times n}$ is defined as:
\begin{align}\label{eq:Cbar}
    \bar C = \sum_{j=1}^{J} C_j e^{-A\tau_{y,j}}.
\end{align}
Since $Y(t)$ defined in \eqref{eq:Ytransformed} involves the historical value of $u(t)$ and $r(t)$, to analyze the stability of the estimation error
\begin{align}
    \epsilon(t) = x(t) - \hat{x}(t),
\end{align}
we first make the following assumption.

\begin{assumption}\label{assumption:initial estimation error}
    The historical estimation error $\epsilon(t) \in L_{\infty}[-\delayu-\delaysensor,0]$, which means the historical estimation error $\epsilon(t)$ is finite with $t\in [-\delayu-\delaysensor,0]$.
\end{assumption}

We have Theorem \ref{theorem:observer convergence} for the estimation error of the designed observer.

\begin{theorem}[Convergence of the predictor-observer] \label{theorem:observer convergence} 
Under Assumption~\ref{assumption:initial estimation error}, for the system \eqref{eq:system} with measurement  \eqref{eq:yc general}, if the gain matrix $L$ in the  observer  \eqref{eq:observer}  is chosen so that $A-L\bar{C}$ is Hurwitz, then for the estimation error $\epsilon$, the equilibrium $\epsilon = 0$ is exponentially stable in the sense that there exists $\Upsilon>$ and $\lambda>0$ such that
\begin{align}
    \Vert \epsilon(t) \Vert \le \Upsilon \Vert \epsilon(0)  \Vert e^{-\lambda t}.
\end{align}

\end{theorem}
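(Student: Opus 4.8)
The plan is to reduce the estimation error to a \emph{finite-dimensional, delay-free} linear system and then read off exponential decay from the Hurwitz hypothesis. Concretely, I will show that $\epsilon = x - \hat{x}$ satisfies $\dot{\epsilon}(t) = (A - L\bar{C})\epsilon(t)$, after which the result is immediate. The entire argument therefore hinges on a single identity: the reconstructed output $Y(t)$ in \eqref{eq:Ytransformed} equals $\bar{C}x(t)$ for all $t\ge 0$, with $\bar{C}$ as in \eqref{eq:Cbar}. This is the mechanism (in the spirit of the Watanabe predictor-observer \cite{watanabe1981observer}) by which the delayed measurement $y(t)=\sum_j C_j x(t-\tau_{y,j})$ is converted into an instantaneous readout $\bar{C}x(t)$ of the current state, turning \eqref{eq:observer} into a classical Luenberger observer whose error analysis is standard.

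To establish $Y(t)=\bar{C}x(t)$, I would apply the variation-of-constants formula to \eqref{eq:system} on each window $[t-\tau_{y,j},t]$,
\[ x(t) = e^{A\tau_{y,j}}x(t-\tau_{y,j}) + \int_{-\tau_{y,j}}^{0} e^{-A\theta}\bigl(Bu(t+\theta-\delayu) + Dr(t+\theta)\bigr)\diff\theta, \]
solve for $x(t-\tau_{y,j})$, left-multiply by $C_j$, and sum over $j$. Using $y(t)=\sum_j C_j x(t-\tau_{y,j})$ and the definition of $\bar{C}$, this rearranges to
\[ \bar{C}x(t) = y(t) + \sum_{j=1}^{J} C_j e^{-A\tau_{y,j}}\int_{-\tau_{y,j}}^{0} e^{-A\theta}\bigl(Bu(t+\theta-\delayu)+Dr(t+\theta)\bigr)\diff\theta. \]
A shift of the integration variable by $\delayu$ in the control term moves the limits from $[-\tau_{y,j},0]$ to $[-\delayu-\tau_{y,j},-\delayu]$ and replaces $e^{-A\theta}Bu(t+\theta-\delayu)$ by $e^{-A(\theta+\delayu)}Bu(t+\theta)$, reproducing exactly the two integral terms of \eqref{eq:Ytransformed}. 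Hence $Y(t)=\bar{C}x(t)$.

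With the identity in hand, substituting $Y(t)=\bar{C}x(t)$ into \eqref{eq:observer} yields $\dot{\hat{x}} = A\hat{x}+Bu(t-\delayu)+Dr(t)+L\bar{C}(x-\hat{x})$, and subtracting from \eqref{eq:system} cancels the shared $Bu(t-\delayu)$ and $Dr(t)$ terms, leaving $\dot{\epsilon}(t)=(A-L\bar{C})\epsilon(t)$. Since $L$ is chosen so that $A-L\bar{C}$ is Hurwitz, we have $\epsilon(t)=e^{(A-L\bar{C})t}\epsilon(0)$, and the standard bound on the matrix exponential of a Hurwitz matrix provides constants $\Upsilon\ge 1$ and $\lambda>0$ with $\Vert\epsilon(t)\Vert\le \Upsilon\Vert\epsilon(0)\Vert e^{-\lambda t}$, which is the claim.

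The main obstacle is not the concluding ODE argument, which is routine, but the careful bookkeeping in the output-reconstruction step: one must verify that the historical control and disturbance integrals assembled in $Y(t)$ align \emph{exactly}—through the variable shift and the factor $e^{-A\tau_{y,j}}$—with those generated by variation of constants, so that no residual delayed term survives in the error equation. This is also where Assumption~\ref{assumption:initial estimation error} is needed: the integrals in \eqref{eq:Ytransformed} draw on $u$ and $r$ over the past window $[-\delayu-\delaysensor,0]$, so finiteness of the historical estimation error guarantees that $Y(t)$, and hence the reduction to the delay-free error dynamics, is well defined for all $t\ge 0$ and that $\Vert\epsilon(0)\Vert$ appearing in the bound is finite.
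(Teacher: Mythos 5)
Your proposal is correct and follows essentially the same route as the paper's proof: both establish the key identity $Y(t)=\bar{C}x(t)$ by applying variation of constants to \eqref{eq:system} over $[t-\tau_{y,j},t]$ (with the $\delayu$-shift aligning the control integral with \eqref{eq:Ytransformed}), which reduces \eqref{eq:observer} to a standard Luenberger observer with error dynamics $\dot{\epsilon}=(A-L\bar{C})\epsilon$, and then conclude exponential decay from the Hurwitz hypothesis.
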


\begin{proof}
By the system model \eqref{eq:system}, the system state $x(t)$ is decided by historical state at $x(t-\tau_{y,j})$, historical input $u$ from $t-\delayu-\tau_{y,j}$ to $t-\delayu$, and historical disturbance $r$ from $t-\tau_{y,j}$ to $t$ as:
\begin{align}
    x(t) =& e^{A\tau_{y,j}}x(t-\tau_{y,j}) + \int_{-\delayu-\tau_{y,j}}^{-\delayu} e^{-A(\theta+\delayu)}Bu(t+\theta)\diff \theta \notag \\
    & + \int_{-\tau_{y,j}}^{0} e^{-A\theta}Dr(t+\theta)\diff \theta.
\end{align}
The measurement $y(t)$ is re-written as
\begin{align}
    y(t) = &\sum_{j=1}^{J} C_j e^{-A\tau_{y,j}} x(t)   \notag \\
    & - \sum_{j=1}^{J} C_j e^{-A\tau_{y,j}} \int_{-\delayu-\tau_{y,j}}^{-\delayu} e^{-A(\theta+\delayu)}Bu(t+\theta)\diff \theta \notag \\
    & - \sum_{j=1}^{J} C_j e^{-A\tau_{y,j}}  \int_{-\tau_{y,j}}^{0} e^{-A\theta}Dr(t+\theta)\diff \theta.
\end{align}
Given $Y$ defined as~\eqref{eq:Ytransformed} and $\bar{C}$ defined as~\eqref{eq:Cbar}, we have 
\begin{align}\label{eq:yCbar}
    Y(t) & = \bar{C} x(t).
\end{align}
For the estimation error, we have
\begin{align}
    \dot \epsilon(t) = \left(A-L\bar{C}\right) \epsilon(t).
\end{align}
Since $A - L\bar{C}$ is Hurwitz, the equilibrium $\epsilon=0$ is exponentially stable. 
\end{proof}

Based on the estimated state $\hat{x}$ by the observer \eqref{eq:observer}, we design a predictor as
\begin{align}\label{eq:predict x observer}
    \hat x_p(t) =& e^{A\delayu} \hat x(t) +\int_{-\delayu}^{0} e^{-A\theta} B u(t+\theta) \diff \theta  \notag \\
     &+ \int_{0}^{\delayu} e^{A(\delayu-\theta)} D r(t) \diff \theta,
\end{align}
where $\hat{x}_p(t)$ predicts the system state at $t+\delayu$. We have Theorem \ref{theorem:safety observer} for the safety constraints of the system.

\begin{theorem}[Safety guarantee for mixed autonomy systems with actuator delay,  sensor delay and disturbances] \label{theorem:safety observer}
Under Assumptions \ref{assumption:initial safe}-\ref{assumption:initial estimation error}, for the mixed autonomy system~\eqref{eq:system}, the safe set for CAV $\mathcal{S}_0$~\eqref{eq:safe set} is forward invariant if a Lipschitz continuous controller $u$ satisfies
\begin{align}\label{eq:safe constraint CBF CAV theorem observer}
    & L_fh_0(\hat x_p(t)) + L_gh_0(\hat x_p(t)) \notag \\
    &\ge -\alpha_0( h_0(\hat x_p(t))) + \hat M_0(t) + Z_0(t),
\end{align}
where $\hat x_p \left(t\right)$ is the predicted state given by \eqref{eq:predict x observer}, $\alpha_0$ is an \EKF, the function $\hat M_0(t): \mathbb{R}^{+} \to \mathbb{R}$  is defined similarly to  $M_0(t)$ in \eqref{eq:M0} as:
\begin{align}
    \hat M_0(t) = &\alpha_0(h_0(\hat x_p(t))) -\alpha_0\left(h_0(\hat x_p(t)) +\frac{1}{2} \drlower\delayu^2\right) \notag \\
    &-\frac{\partial h_0(\hat x_p(t))}{\partial \hat x_p} D(r(t) + \drlower \delayu),
\end{align}
the function $Z_0(t): \mathbb{R}^{+} \to \mathbb{R}$ reflects the effect of observer error on safety and is defined as 
\begin{align}
    Z_0(t) =& \alpha_0\left(h_0(\hat x_p(t)) +\frac{1}{2} \drlower\delayu^2\right) \notag \\
    &-  \alpha_0\left(h_0(\hat x_p(t)) +\frac{1}{2} \drlower\delayu^2  - (1+\psi_0) \Vert e^{A\delayu}\Vert \Upsilon \bar{\epsilon} e^{-\lambda t}\right) \notag \\
    &- \frac{\partial h_{0}(\hat x_p(t))}{\partial \hat x_p}   e^{A\delayu}L(Y(t)-\bar{C}\hat x(t) ) \notag \\
    &- \lambda (1+\psi_0) \Vert e^{A\delayu}\Vert \Upsilon \bar{\epsilon} e^{-\lambda t}, \label{eq:Z0}
\end{align}
with   
$\Upsilon$ and $\lambda$  given in Theorem~\ref{theorem:observer convergence}, and $\bar{\epsilon}$ being an upper bound on the  initial estimation error $\Vert \epsilon(0) \Vert $.
If $u$ also satisfies 
\begin{align}
    &L_f h_{i}^{\mathrm{r}} \left(\hat x_p(t)\right) + L_g h_{i}^{\mathrm{r}} \left(\hat x_p(t)\right) u(t) \notag \\
    & \ge - \alpha_i (h_{i}^{\mathrm{r}} \left(\hat x_p(t)\right)) + \hat M_i(t) + Z_i(t) , \label{eq:safe constraint CBF HDV theorem observer}
\end{align}
with $\alpha_i>0$ being an \EKF, the function $\hat M_i(t): \mathbb{R}^{+}\to \mathbb{R}$ being
\begin{align}
    \hat M_i(t) = &\alpha_i(h_{i}^{\mathrm{r}} (\hat x_p(t))) - \alpha_i\left(h_{i}^{\mathrm{r}} (\hat x_p(t)) - \frac{1}{2} \drlower \eta_i\delayu^2\right) \notag \\
    & - \frac{\partial h_{i}^{\mathrm{r}} (  \hat x_p(t))}{\partial  \hat x_p} D (r(t) + \drupper \delayu ), 
\end{align}
and the function $Z_i(t):  \mathbb{R}^{+}\to \mathbb{R}$ being:
\begin{align}\label{eq:Zi}
    Z_i(t) &=  \alpha_i\left(h_{i}^{\mathrm{r}} (\hat x_p(t)) - \frac{1}{2} \drlower \eta_i \delayu^2\right) \notag \\
    &-  \alpha_i\left(h_{i}^{\mathrm{r}} (\hat x_p(t)) -  \frac{1}{2} \drlower \eta_i \delayu^2       - \nu_i \Vert e^{A\delayu}\Vert \Upsilon \bar{\epsilon} e^{-\lambda t}\right) \notag \\
    & - \frac{\partial h_{i}^{\mathrm{r}} (\hat x_p(t))}{\partial \hat x_p}   e^{A\delayu}L(Y(t)-\bar{C}\hat x(t) ) \notag \\
    & - \lambda \nu_i \Vert e^{A\delayu}\Vert \Upsilon \bar{\epsilon} e^{-\lambda t},
\end{align}
with $\nu_i =1-\eta_i + \psi_i - \eta_i\psi_0$, then the safe set for HV-$i$ $\mathcal{S}_i$ \eqref{eq:safe set} is also forward invariant.
\end{theorem}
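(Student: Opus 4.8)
The plan is to build on the two preceding results: Theorem~\ref{theorem:safety}, which already certifies safety under actuator delay and disturbance through the \emph{true} predictor $x_p$, and Theorem~\ref{theorem:observer convergence}, which controls the estimation error $\epsilon = x - \hat{x}$. The guiding idea is to treat the observer-based predictor $\hat{x}_p$ as an exponentially-decaying perturbation of the unavailable $x_p$. Subtracting \eqref{eq:predict x observer} from \eqref{eq:predict x}, the historical-input and current-disturbance integrals coincide, so $x_p(t) - \hat{x}_p(t) = e^{A\delayu}\epsilon(t)$, and Theorem~\ref{theorem:observer convergence} gives $\|x_p(t)-\hat{x}_p(t)\| \le \|e^{A\delayu}\|\,\Upsilon\bar{\epsilon}\,e^{-\lambda t}$. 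Since $h_0$ and $h_i^{\mathrm r}$ are affine, their values at $x_p$ and $\hat{x}_p$ differ only through their constant gradients contracted with $e^{A\delayu}\epsilon$, which is where the sensitivity constants $1+\psi_0 = \|\partial h_0/\partial x\|_1$ and $\nu_i$ will enter.

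First I would differentiate \eqref{eq:predict x observer} along the observer \eqref{eq:observer}. The delayed-input boundary terms cancel and, using $AG = e^{A\delayu}-I$ with $G = \int_0^{\delayu} e^{A(\delayu-\theta)}\diff\theta$, this yields the closed-loop predictor dynamics $\dot{\hat{x}}_p = A\hat{x}_p + Bu + Dr + e^{A\delayu}L(Y-\bar{C}\hat{x}) + GD\dot{r}$, i.e. the delay-free dynamics augmented by the propagated innovation $e^{A\delayu}L(Y-\bar{C}\hat{x})$ and the unknown-acceleration term $GD\dot{r}$. A structural simplification then drives the whole argument: since the disturbance column satisfies $AD=0$ (the CAV gap $\tilde s_0$ never feeds back into the dynamics), we have $e^{As}D = D$ for all $s$, so the disturbance and the entire prediction error live in the $\tilde s_0$ coordinate alone. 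Consequently $\tfrac{\partial h_0}{\partial x}GD = \delayu$, $\tfrac{\partial h_i^{\mathrm r}}{\partial x}GD = -\eta_i\delayu$, and, crucially, the full-gap barrier obeys $h_i(x(t+\delayu)) = h_i(x_p(t))$ for $i\ge 1$ while $h_0(x(t+\delayu)) \ge h_0(x_p(t)) + \tfrac12\drlower\delayu^2$.

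The core is a comparison-lemma argument on the robustified barriers $H_0(t) = h_0(\hat{x}_p) + \tfrac12\drlower\delayu^2 - (1+\psi_0)\|e^{A\delayu}\|\Upsilon\bar{\epsilon}e^{-\lambda t}$ and $H_i(t) = h_i^{\mathrm r}(\hat{x}_p) - \tfrac12\drlower\eta_i\delayu^2 - \nu_i\|e^{A\delayu}\|\Upsilon\bar{\epsilon}e^{-\lambda t}$, whose arguments are precisely the innermost arguments of $\alpha_0$ in $Z_0$ and of $\alpha_i$ in $Z_i$. I would substitute \eqref{eq:safe constraint CBF CAV theorem observer} and \eqref{eq:safe constraint CBF HDV theorem observer} into $\dot H_0$ and $\dot H_i$: the paired $\alpha$-terms inside $\hat M$ and $Z$ telescope, the innovation terms in $Z$ cancel $\tfrac{\partial h}{\partial \hat{x}_p}e^{A\delayu}L(Y-\bar{C}\hat{x})$, the $\hat M$ disturbance terms absorb $Dr + GD\dot{r}$ in the worst case ($\dot{r}=\drlower$ for the CAV, $\dot{r}=\drupper$ for the HV, via Assumption~\ref{assumption:bound rdot}), and the $-\lambda(\cdot)$ terms match $-\tfrac{\diff}{\diff t}$ of the exponential envelope. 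What survives is $\dot H_0 \ge -\alpha_0(H_0) + \delayu(\dot{r}-\drlower) \ge -\alpha_0(H_0)$ and $\dot H_i \ge -\alpha_i(H_i) + \eta_i\delayu(\drupper-\dot{r}) \ge -\alpha_i(H_i)$, so by the Nagumo/comparison reasoning behind Theorem~\ref{theorem:CBF} each $H$ stays nonnegative, with Assumption~\ref{assumption:initial safe} (together with the initial error bound $\bar{\epsilon}$) supplying nonnegativity at the start.

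Finally I would convert $H_0,H_i\ge 0$ into forward invariance. For the CAV, $H_0\ge 0$ combined with $h_0(x_p) \ge h_0(\hat{x}_p) - (1+\psi_0)\|e^{A\delayu}\|\Upsilon\bar{\epsilon}e^{-\lambda t}$ and $h_0(x(t+\delayu)) \ge h_0(x_p)+\tfrac12\drlower\delayu^2$ gives $h_0(x(t+\delayu))\ge 0$, hence $x\in\mathcal S_0$ for all $t\ge 0$. For the HVs I would exploit the cancellation in $h_i = h_i^{\mathrm r}+\eta_i h_0$: adding $H_i\ge 0$ to $\eta_i$ times the CAV bound, the $\tfrac12\drlower\eta_i\delayu^2$ terms cancel and the residual sensitivities combine as $\nu_i + \eta_i(1+\psi_0) = 1+\psi_i$, so $h_i(\hat{x}_p) \ge (1+\psi_i)\|e^{A\delayu}\|\Upsilon\bar{\epsilon}e^{-\lambda t}$, whence $h_i(x_p)\ge 0$ and, since $h_i(x(t+\delayu)) = h_i(x_p)$, also $h_i(x(t+\delayu))\ge 0$. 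I expect the main obstacle to be exactly this coupling: the choice $\nu_i = 1-\eta_i+\psi_i-\eta_i\psi_0$ is forced so that the CAV's estimation-error margin, re-used at weight $\eta_i$ inside the reduced-degree barrier, precisely supplies the sensitivity $1+\psi_i$ of the full barrier $h_i$ to $\epsilon$. Keeping these signs consistent, and correctly pairing $\drlower$ versus $\drupper$ across the two independent error sources, is the delicate bookkeeping, whereas the predictor-derivative computation and the telescoping cancellations become mechanical once $e^{As}D=D$ is established.
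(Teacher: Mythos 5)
Your proposal is correct and takes essentially the same route as the paper's own proof in Appendix~\ref{sec:appendix proof safety observer}: the decomposition $\hat x_p(t) = x_p(t) - e^{A\delayu}\epsilon(t)$ with the exponential error bound from Theorem~\ref{theorem:observer convergence}, the same time-varying robustified barriers (your $H_0$ and $H_i$ are exactly the paper's $h_{0R}^{o}$ and $h_{iR}^{o\mathrm{r}}$), the same predictor dynamics with the innovation term $e^{A\delayu}L(Y-\bar C\hat x)$ and worst-case bounding of $\dot r$ according to the sign of $\frac{\partial h}{\partial x}D$, and the same comparison-principle conclusion. The identities you check explicitly, such as $\nu_i + \eta_i(1+\psi_0)=1+\psi_i$ and the telescoping of the $\alpha$-terms, are precisely the bookkeeping implicit in the paper's construction of $W_i(t)-\eta_i W_0(t)$ and its sufficiency argument.
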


\begin{proof} We also follow the idea and the three steps in the proof for Theorem~\ref{theorem:safety}. The detailed proof is in Appendix~\ref{sec:appendix proof safety observer}.
\end{proof}

\begin{remark}[Comparison of effects of actuator and sensor delays on CBF constraints]

For the two types of delays in practical traffic, actuator delay $\delayu$ and sensor delay $\delaysensor$, their effects on CBF constraints are reflected by $M_i(t)$  and $Z_i(t)$, respectively. 
We note that  $Z_i(t) \to 0 $  as $t\to \infty $, which indicates that the effect of partial feedback and sensor delay wears off. This is because the estimation error by the designed observer~\eqref{eq:observer} converges to zero as stated in Theorem~\ref{theorem:observer convergence}.  For the actuator delay, on the other side, it will have a permanent effect on CBF constraints as the term related to $\delayu$ in $M_i(t)$ remains constant over time. This is because we assume that the accurate value of $\dot r$, the acceleration of the head vehicle, is always unknown and can only be   bounded by fixed bounds $\drlower$ and $\drupper$  in Assumption~\ref{assumption:bound rdot}.
\end{remark}

\begin{example}[RSTC for mixed autonomy systems with actuator delay, sensor delay, and disturbances]
    Based on the safety constraints in Theorem \ref{theorem:safety observer}, we formulate a QP as \eqref{eq:QP observer} to solve a safety-critical control input.
    \begin{align}
        & u = \underset{u \in \mathbb{R}, \sigma_i \geq 0}{\operatorname{argmin}} \;  |u-u_0|^{2} + \sum_{i=1}^N p_i\sigma_i^2 \label{eq:QP observer}\\
        &\quad \text{s.t.} \notag\\
        & L_fh_0(\hat x_p) + L_gh_0(\hat x_p) u   +  \alpha_0 (h_0(\hat x_p)) + \hat{M}_0 -Z_0 \ge 0  \notag \\
        & L_f h_{1}^{\mathrm{r}} (\hat x_p)  + L_g h_{1}^{\mathrm{r}} (\hat x_p)u  +\alpha_1 (h_{1}^{\mathrm{r}} (\hat x_p)) +\hat  M_1 -Z_1 +\sigma_1 \ge 0 \notag\\
        & \qquad \vdots \notag\\
        & L_f h_{N}^{\mathrm{r}} (\hat x_p) \! + \! L_g h_{N}^{\mathrm{r}} (\hat x_p)u   \!  + \! \alpha_N( h_{N}^{\mathrm{r}} (\hat x_p)) \! + \! \hat M_N \! - \! Z_N \! +\sigma_N \! \ge \! 0 .\notag 
    \end{align}
\end{example}

\section{Numerical Simulation} \label{sec:simulation}

In this section, we run numerical simulations to validate the safety guarantee of the proposed RSTC. We first specify the simulation settings in section \ref{sec:subsec:simulation setting}. In section \ref{sec:subsec:simulation main result}, we show that the RSTC avoids rear-end collisions in safety-critical scenarios. And we further analyze the properties of RSTC in section \ref{sec:subsec:simulation analysis}.

\begin{figure*}[t!]
    \centering
    Nominal Controller \\
    \includegraphics[width=0.24\linewidth]{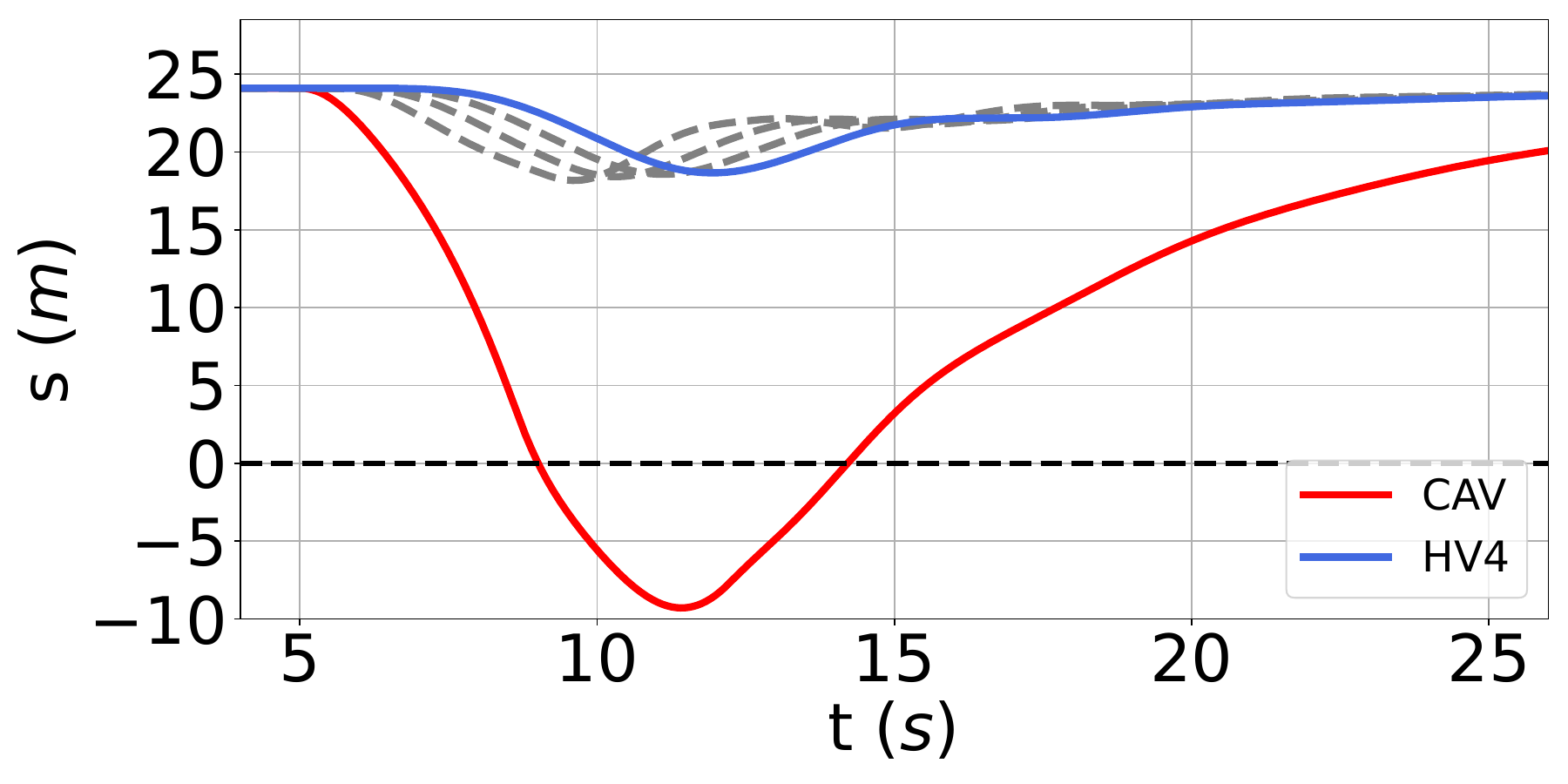}
    \includegraphics[width=0.24\linewidth]{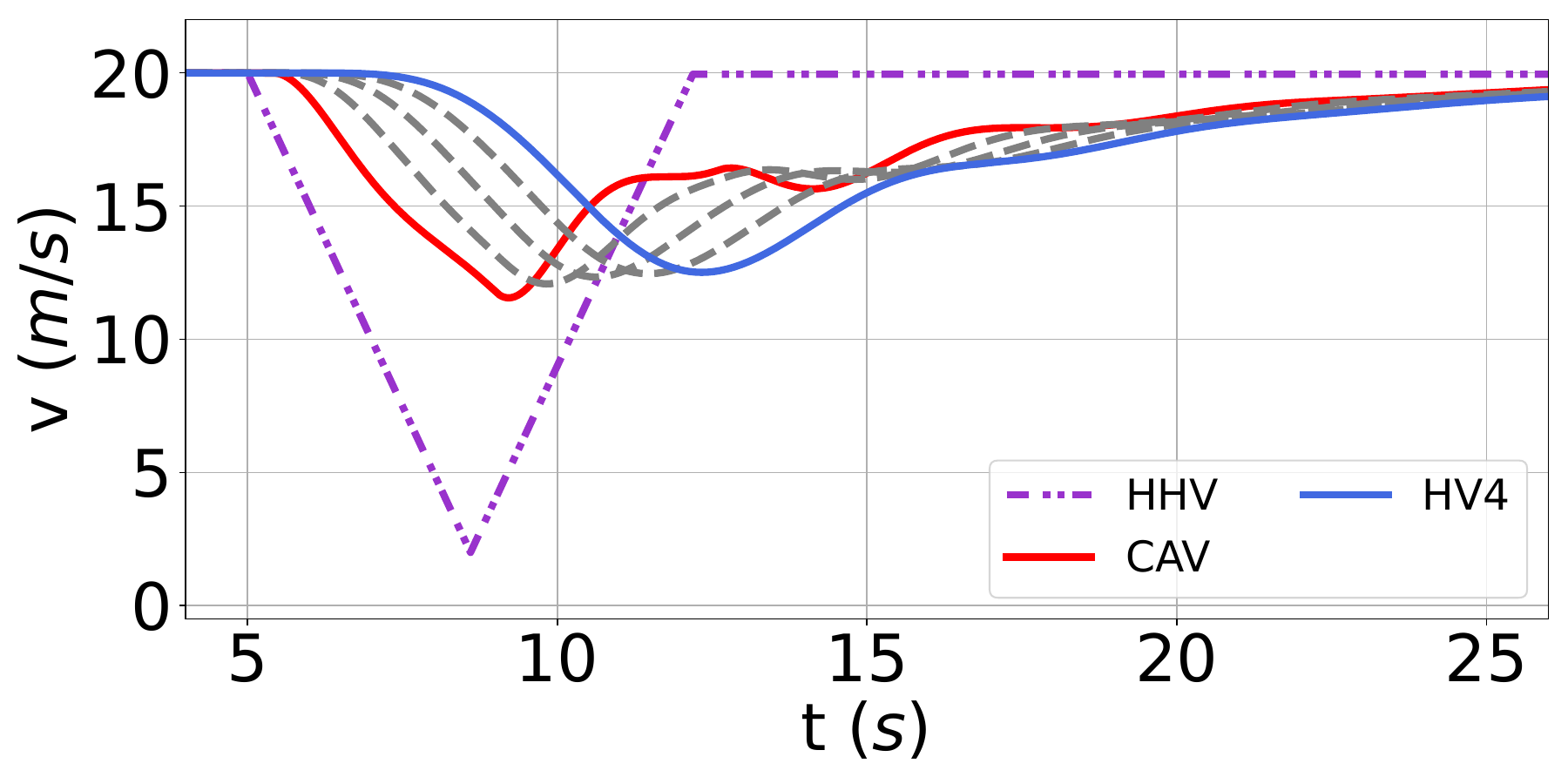}
    \includegraphics[width=0.24\linewidth]{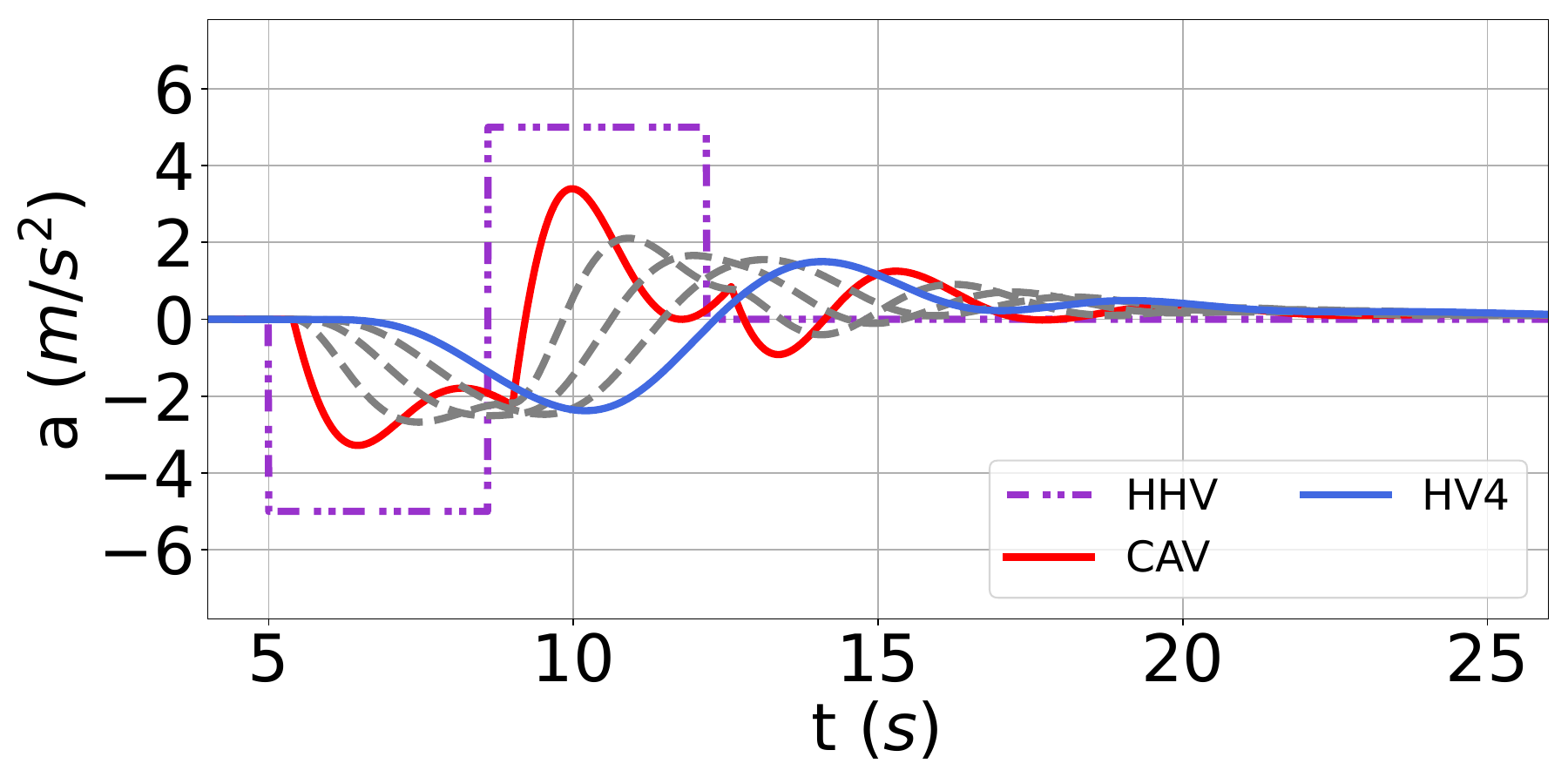}
    \includegraphics[width=0.24\linewidth]{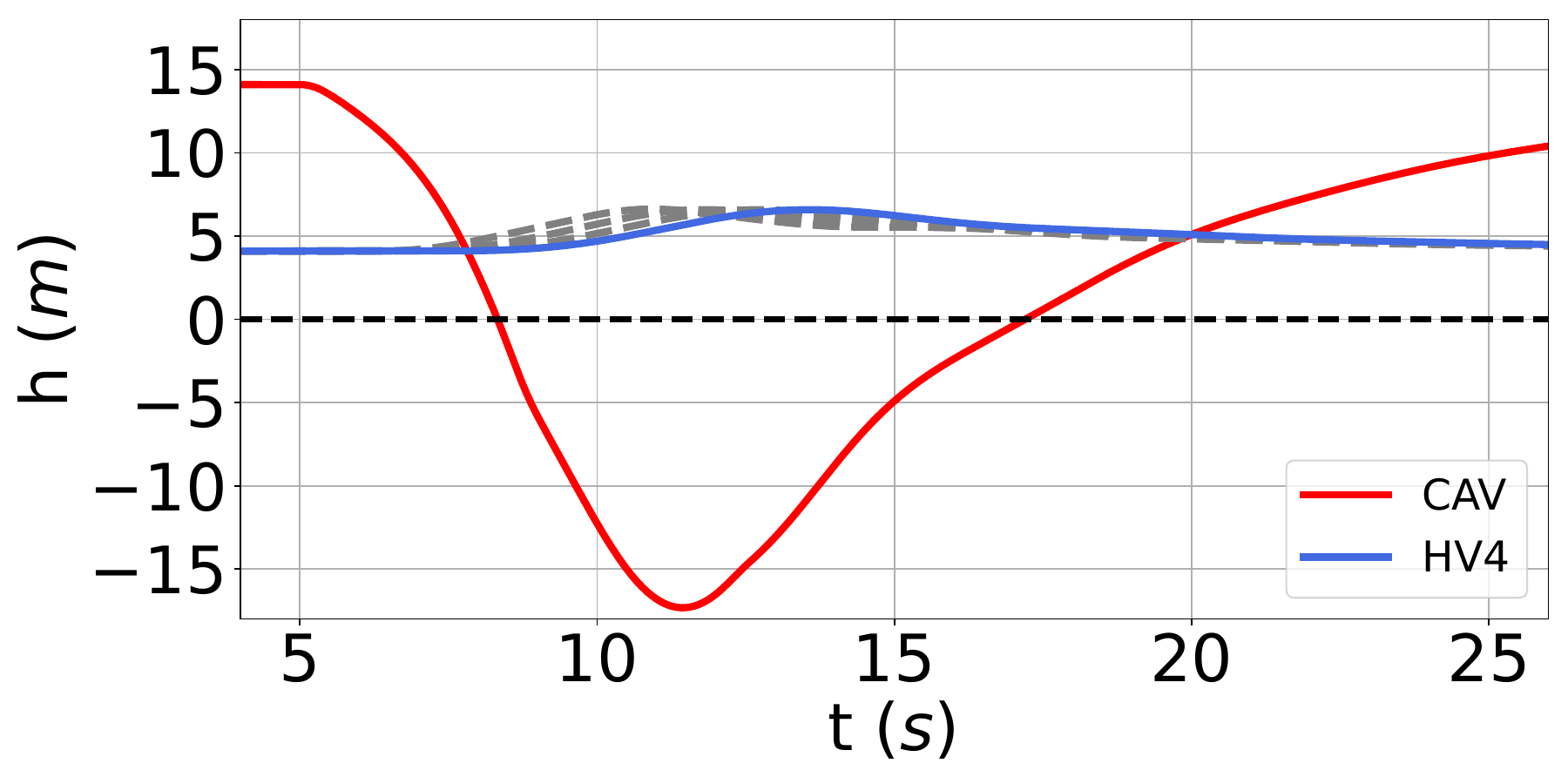}
    \\
    RSTC \\
    \includegraphics[width=0.24\linewidth]{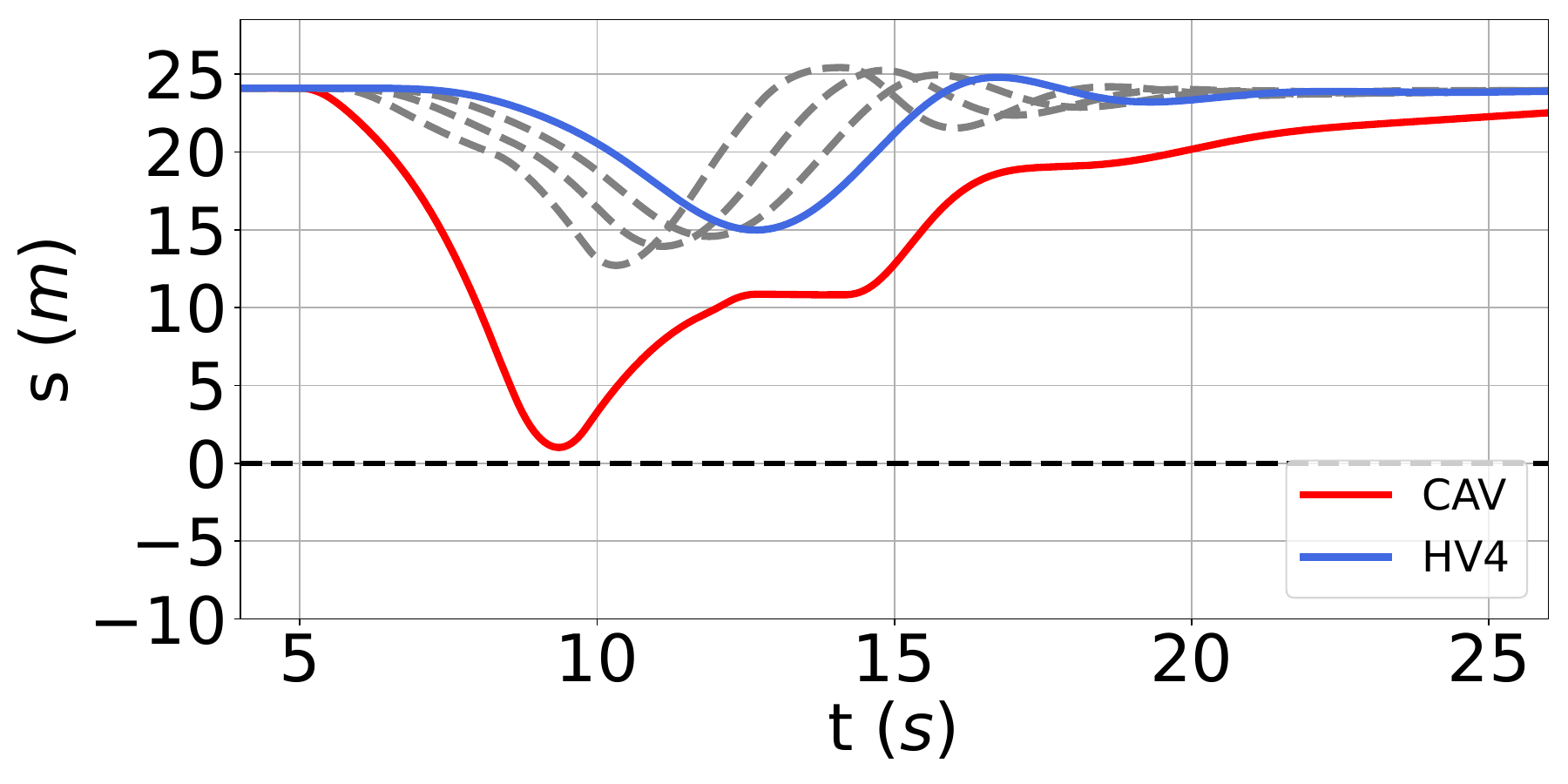}
    \includegraphics[width=0.24\linewidth]{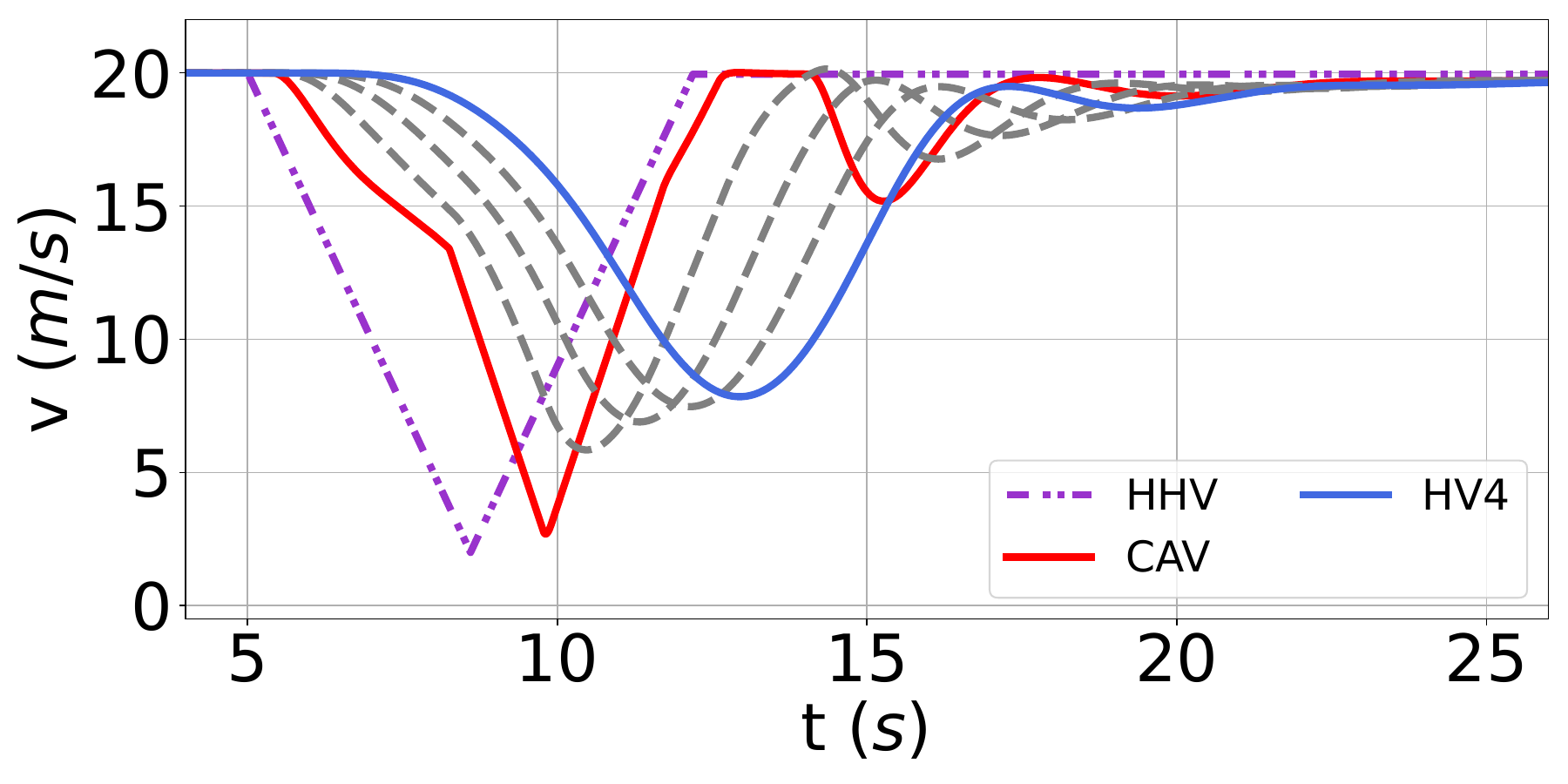}
    \includegraphics[width=0.24\linewidth]{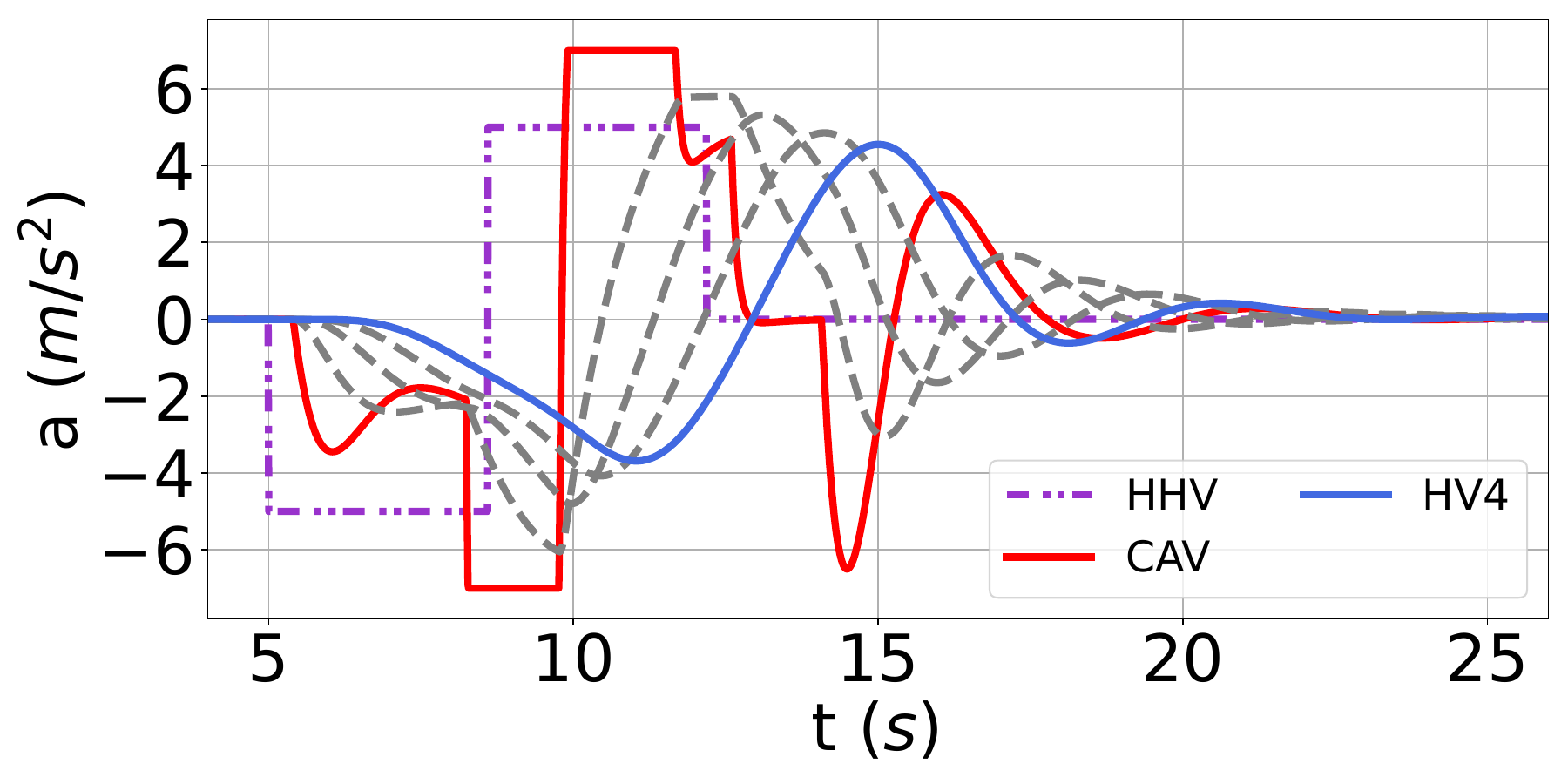}
    \includegraphics[width=0.24\linewidth]{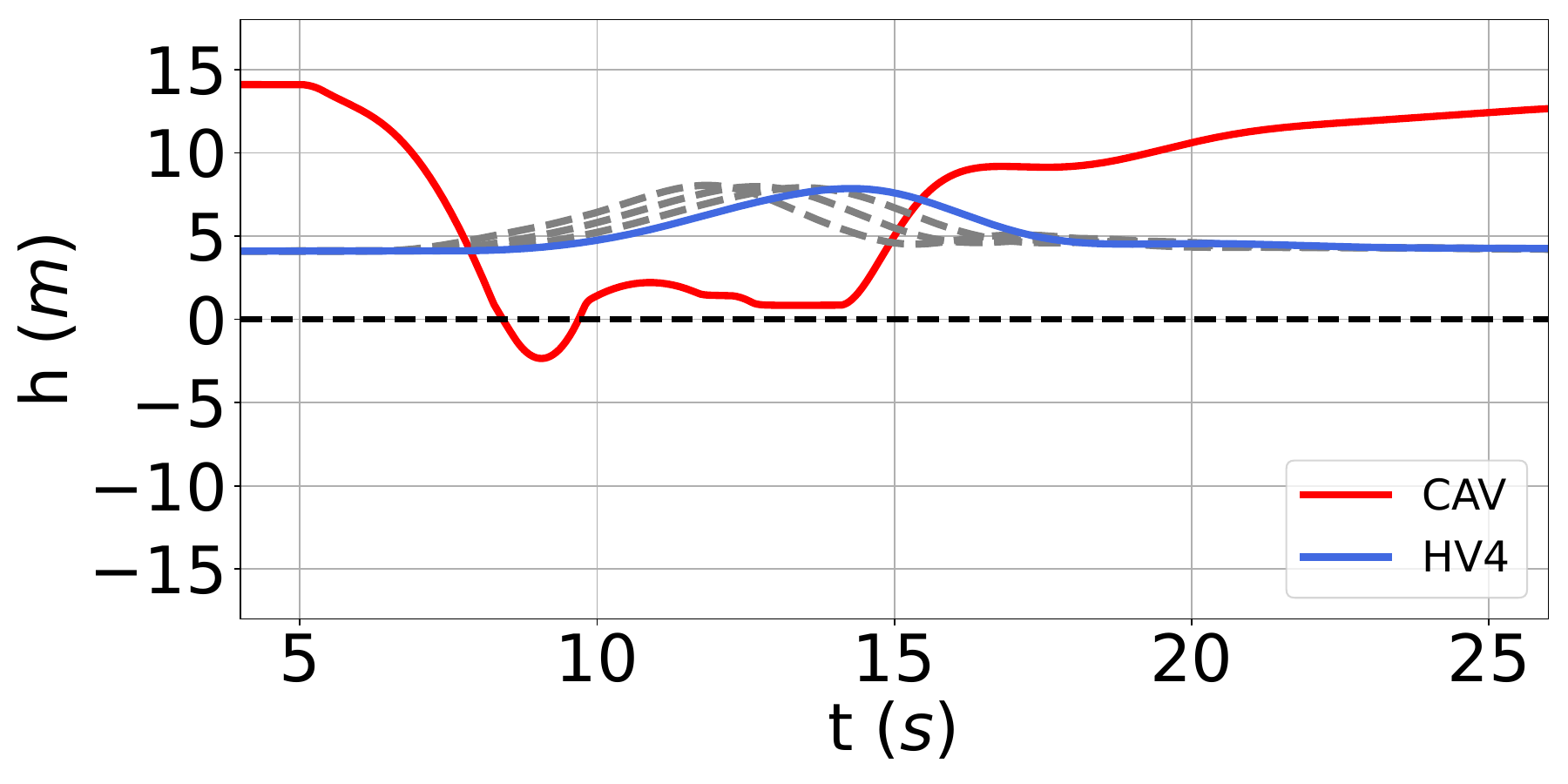}
    \caption{Numerical simulation of RSTC in Scenario 1. The first row gives the trajectory by the nominal controller \eqref{eq:nominal controller}, which stabilizes the traffic but causes a collision. The second row gives the trajectory by the proposed RSTC \eqref{eq:QP}, which guarantees safety.}
    \label{fig:trajectory scenario 1}
\end{figure*}

\begin{figure*}[t!]
    \centering
    Nominal Controller \\
    \includegraphics[width=0.24\linewidth]{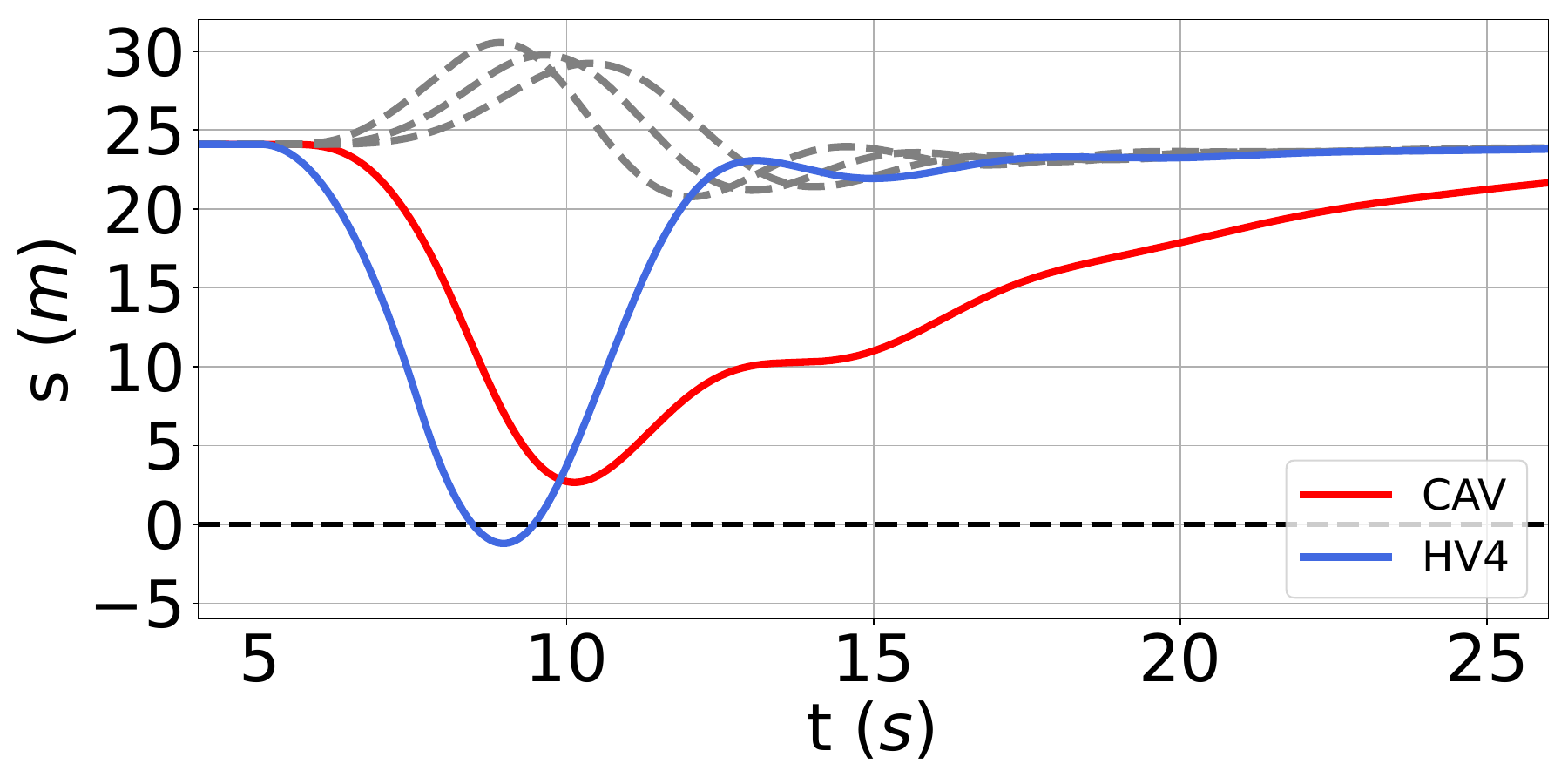}
    \includegraphics[width=0.24\linewidth]{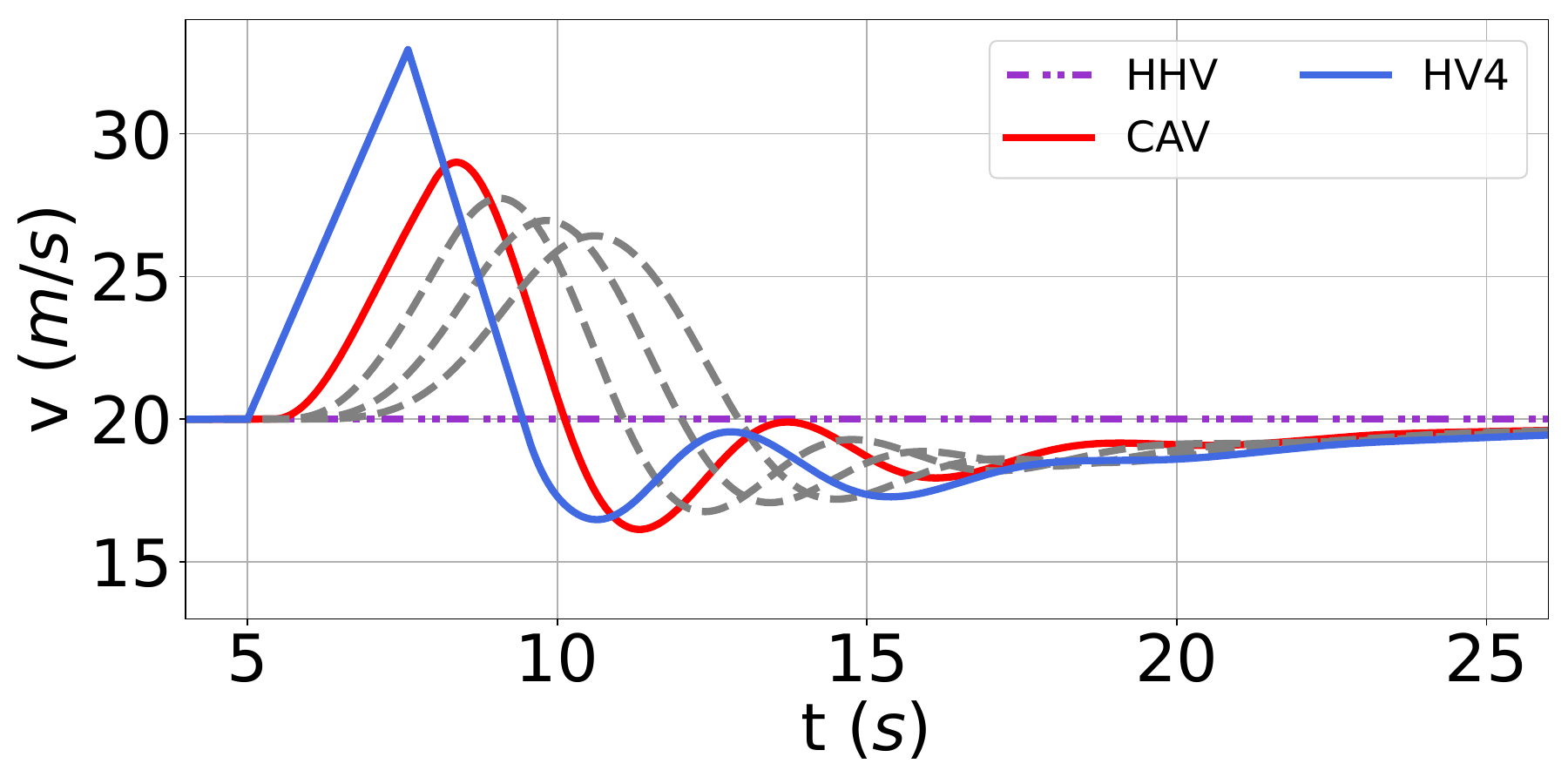}
    \includegraphics[width=0.24\linewidth]{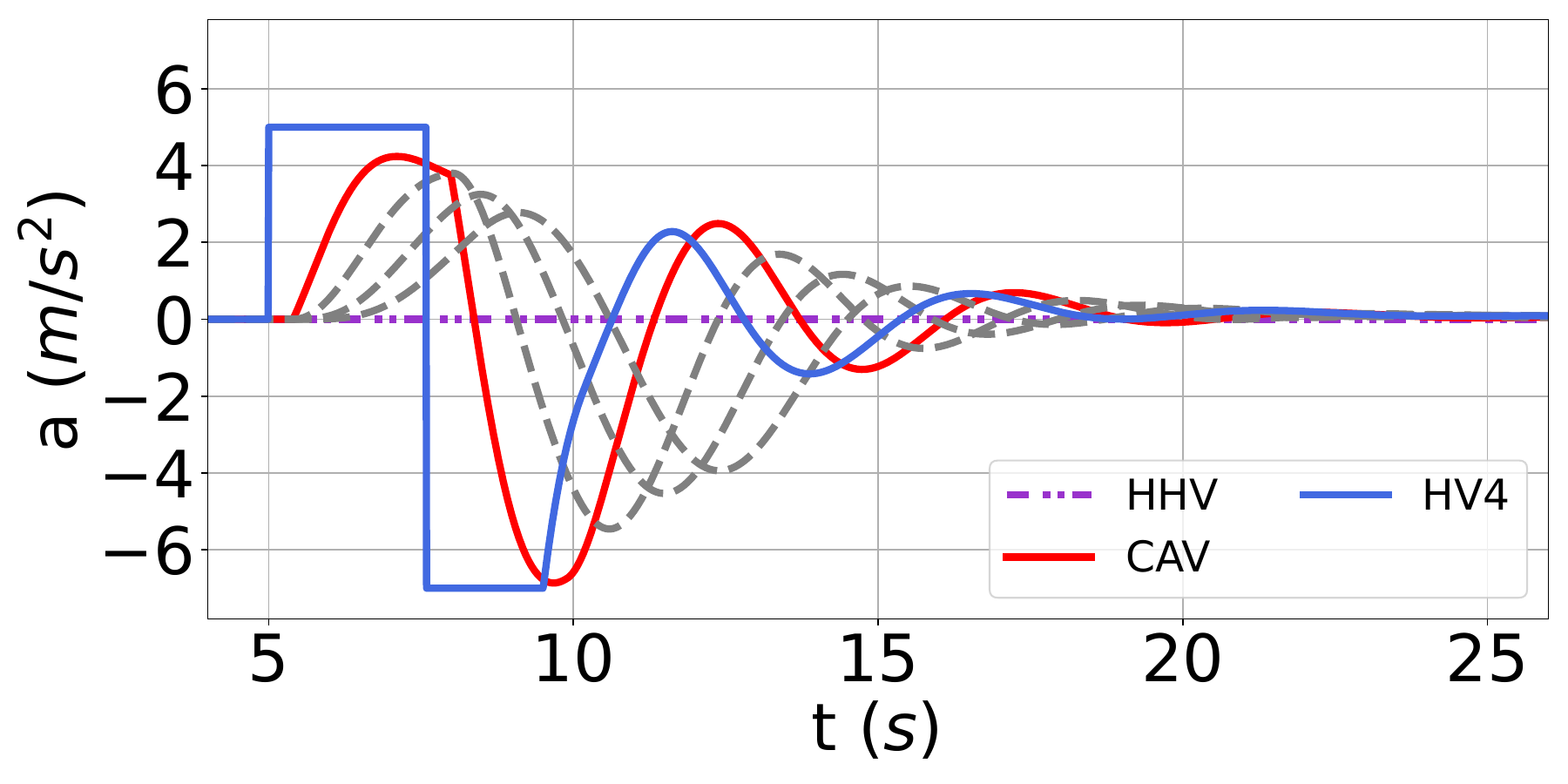}
    \includegraphics[width=0.24\linewidth]{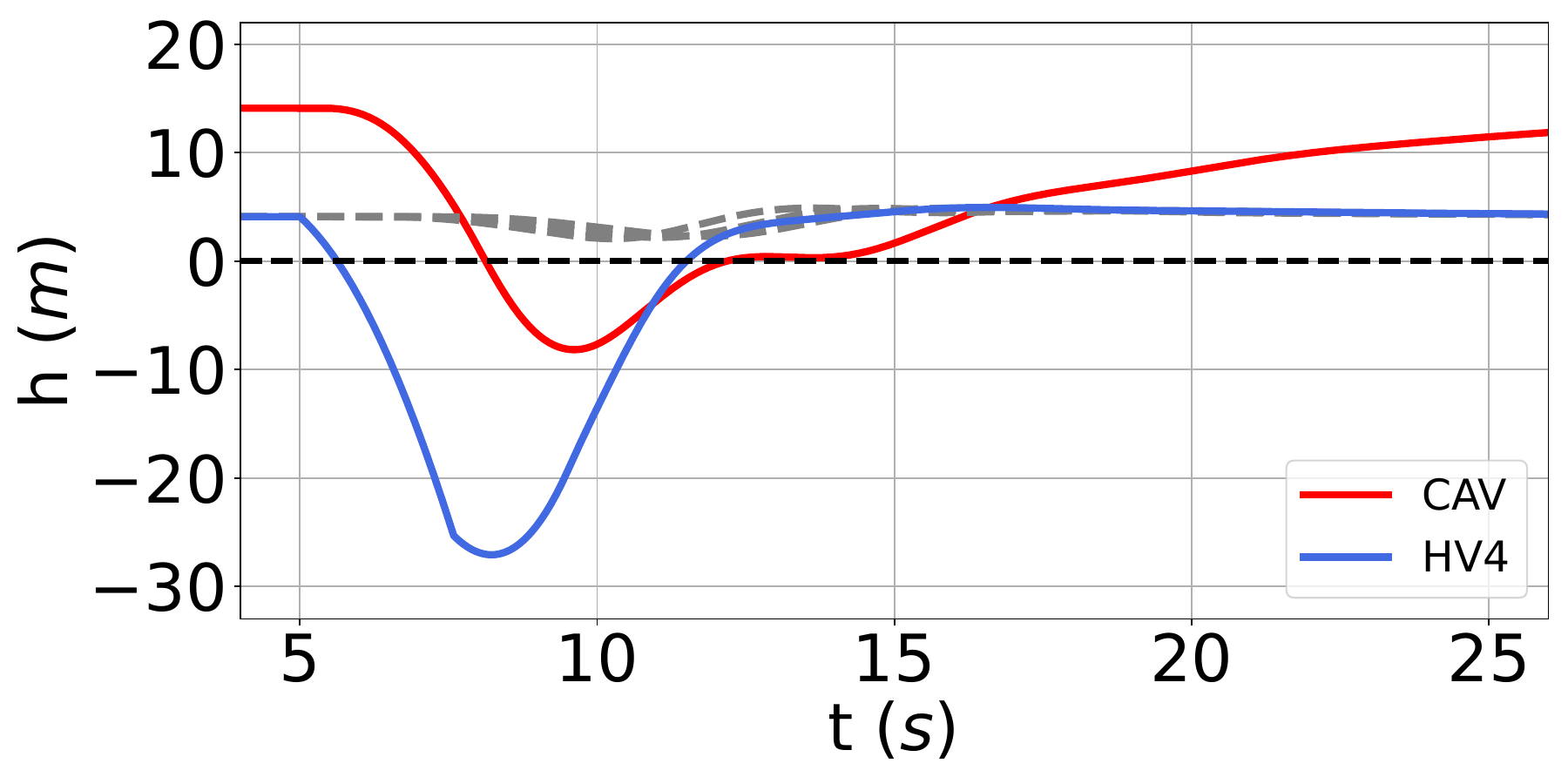}
    \\
    RSTC \\
    \includegraphics[width=0.24\linewidth]{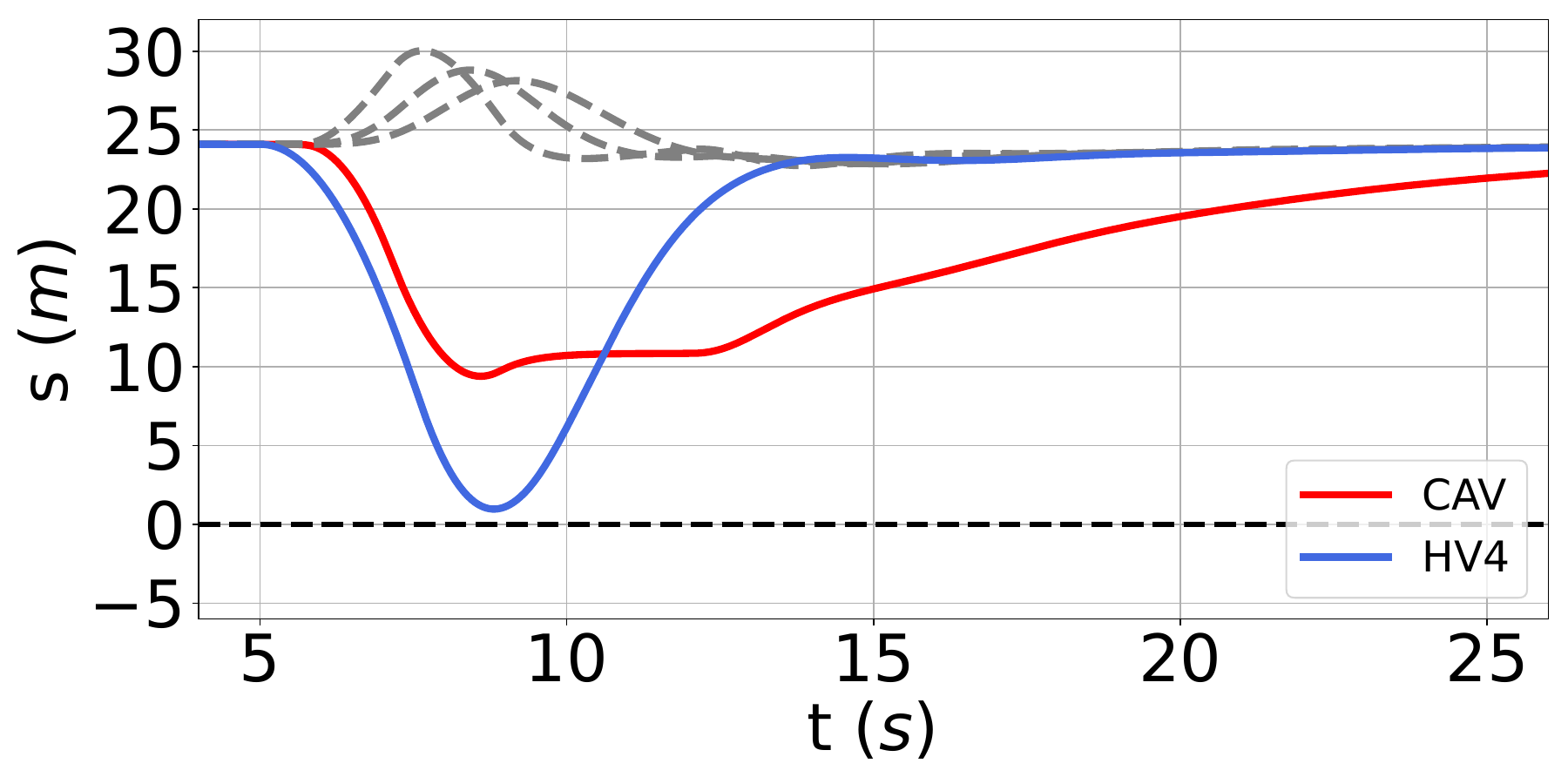}
    \includegraphics[width=0.24\linewidth]{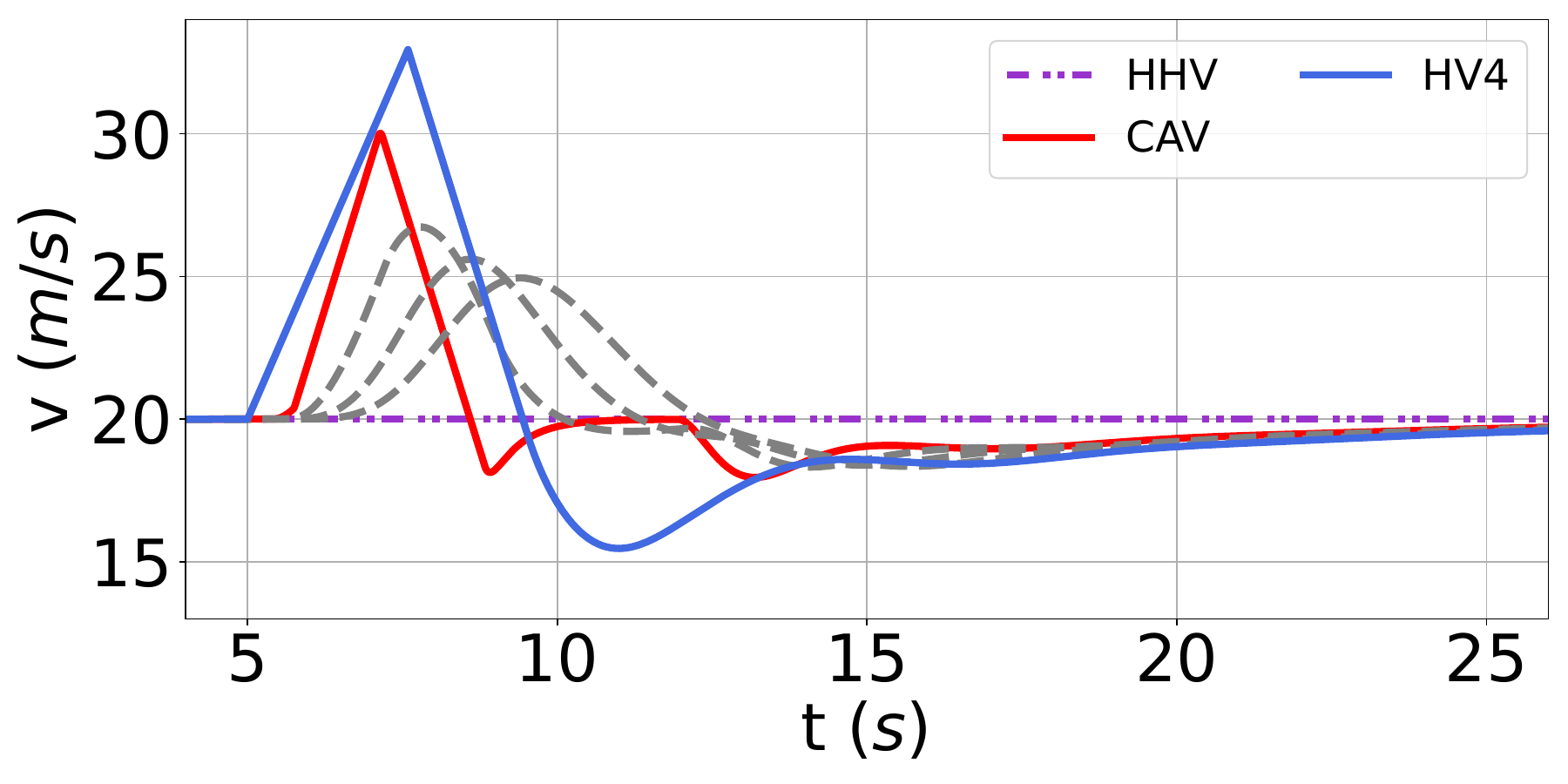}
    \includegraphics[width=0.24\linewidth]{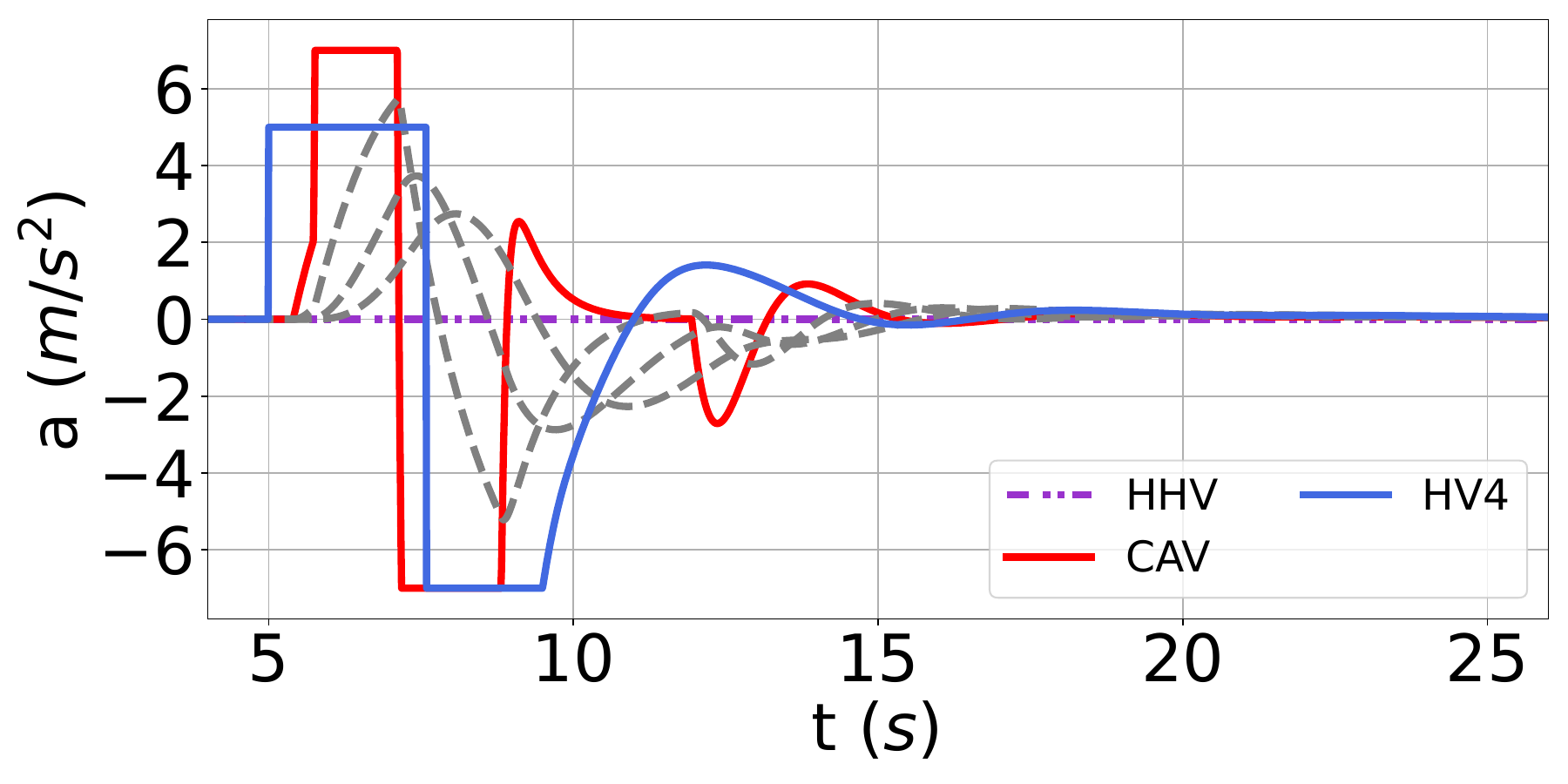}
    \includegraphics[width=0.24\linewidth]{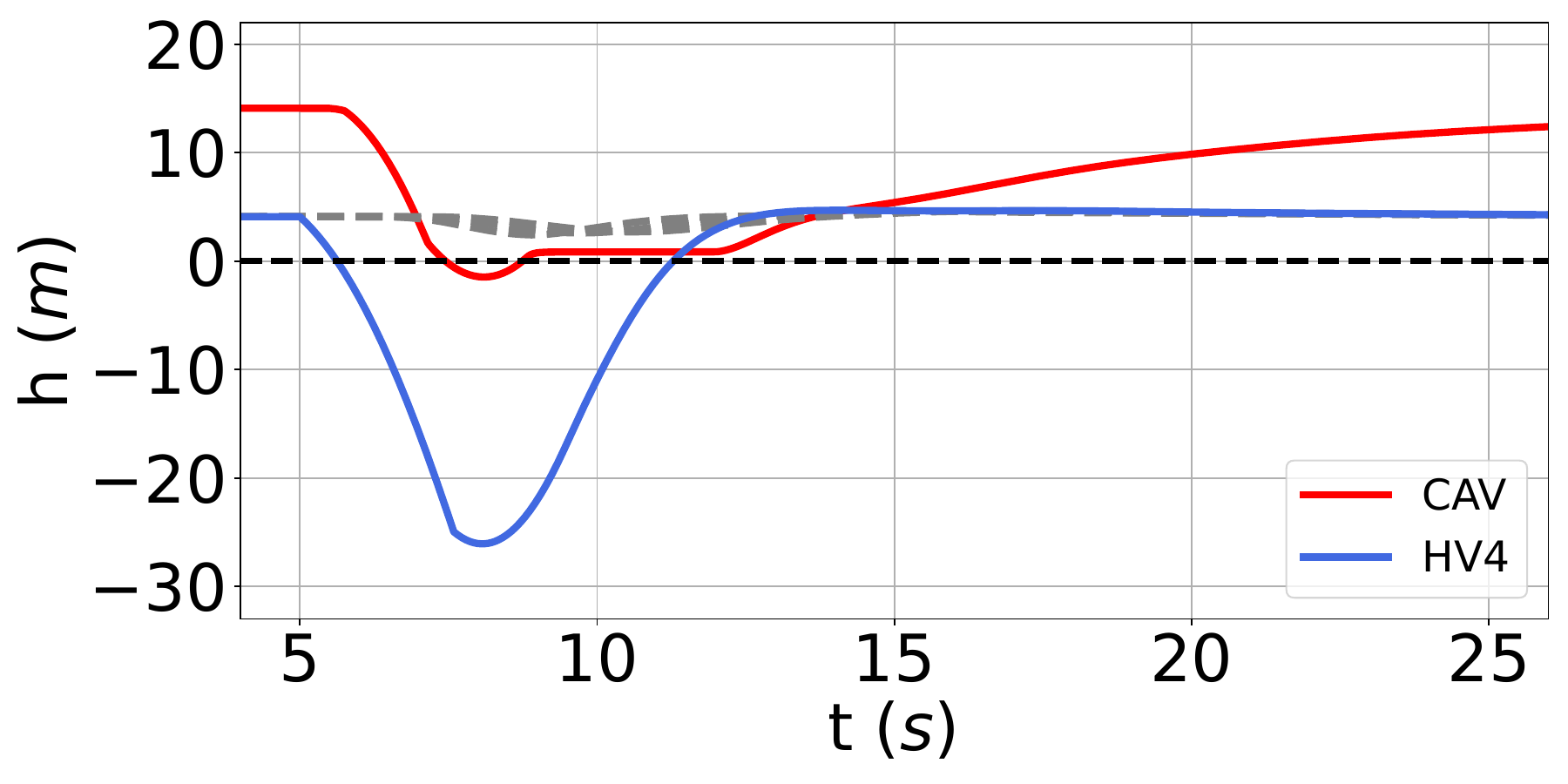}
    \caption{Numerical simulation of RSTC in Scenario 2. The trajectory of HV-4 (blue line) shows that RSTC also improves safety in this scenario.}
    \label{fig:trajectory scenario 2}
\end{figure*}

\subsection{Simulation setting}\label{sec:subsec:simulation setting}

We consider a vehicle chain of six vehicles, with one head vehicle, one CAV, and $N=4$ following vehicles. For the following HVs, we adopt the optimal velocity model (OVM)~\cite{bando1998analysis} as their driving strategy $F_i$:
\begin{equation}
    F_{i}(s_{i},\dot{s}_{i},v_{i})= \alpha \left(V\left(s_{i}\right)-v_{i}\right)+ \beta \dot{s}_{i} \label{eq:OVM}, 
\end{equation}
where $V(s)$ describes the spacing-dependent desired speed, the parameter $\alpha>0$ represents the driver's sensitivity to the mismatch between the desired speed $V(s_i)$ and current speed $v_i$, and the parameter $\beta>0$ reflects the driver's sensitivity to the speed gap with its leader. We take the desired speed-gap relationship as 
\begin{equation}
    V(s)=\left\{
    \begin{array}{ll}
    0, & s \leq s_{\mathrm{st}}, \\
    \frac{v_{\max }}{2}\left(1-\cos \left(\pi \frac{s-s_{\mathrm{st}}}{s_{\mathrm{go}}-s_{\mathrm{st}}}\right)\right), & s_{\mathrm{st}}<s<s_{\mathrm{go}}, \\
   v_{\max }, & s \geq s_{\mathrm{go}},
   \end{array}
   \right.
  \label{eq:Vs}
\end{equation}
where $s_{\mathrm{st}}$, $s_{\mathrm{go}}$, and $v_{\max}$ represent standstill spacing, free flow spacing, and the speed limit, respectively \cite{zhang2016motif}. In the simulation, we take the parameters as $\alpha = 0.6$, $\beta = 0.9$, $s_{\mathrm{st}} = 5$ m, $s_{\mathrm{go}} = 40$ m, and $v_{\max} = 35$ m/s. We set the equilibrium speed  as $v^* = 20$ m/s, and the equilibrium gap is decided by $V(s^*) = v^*$ 
as $s^* = 24$ m.

\begin{figure*}[t!]
    \centering
    Nominal Controller \\
    \includegraphics[width=0.24\linewidth]{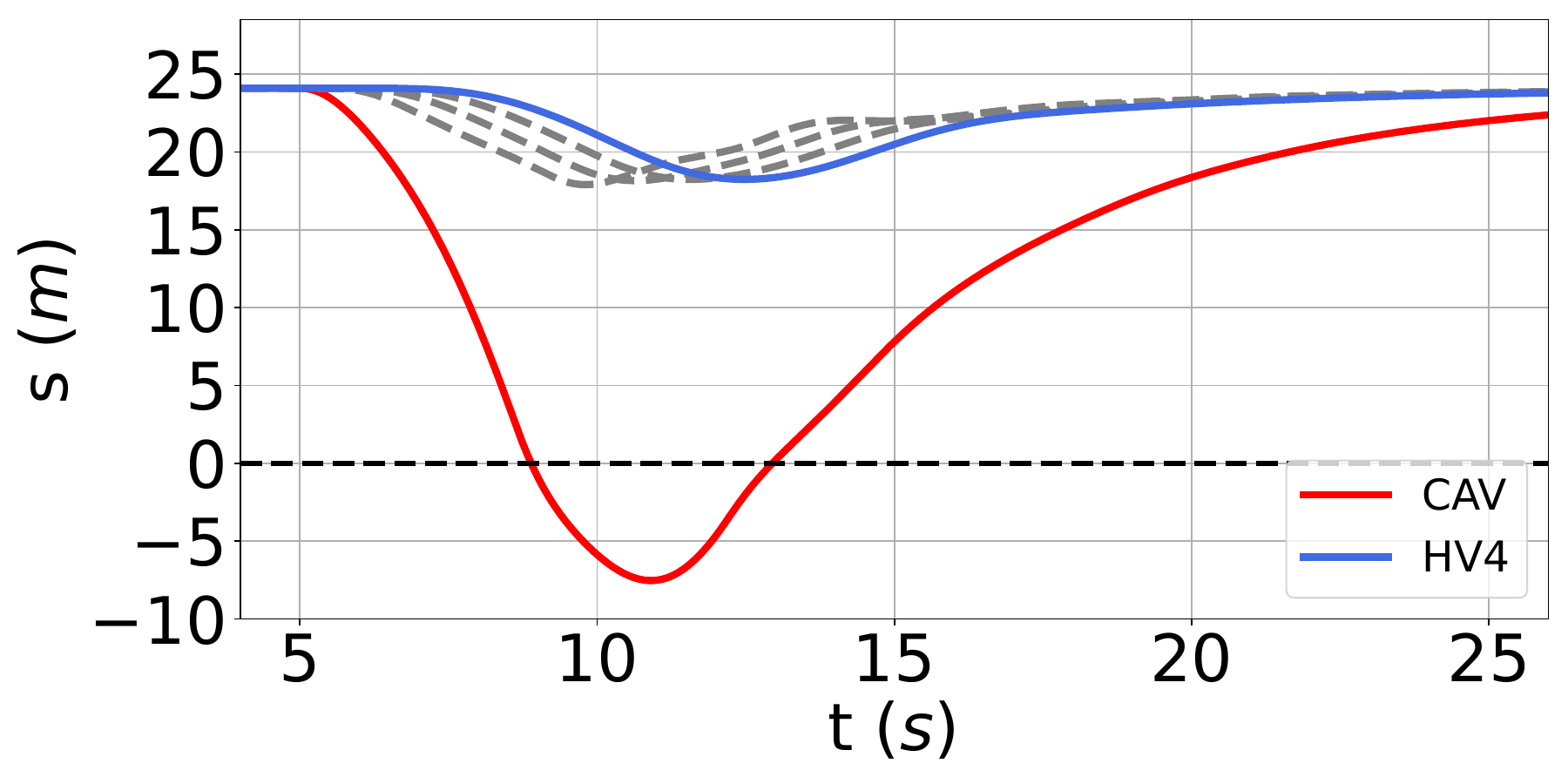}
    \includegraphics[width=0.24\linewidth]{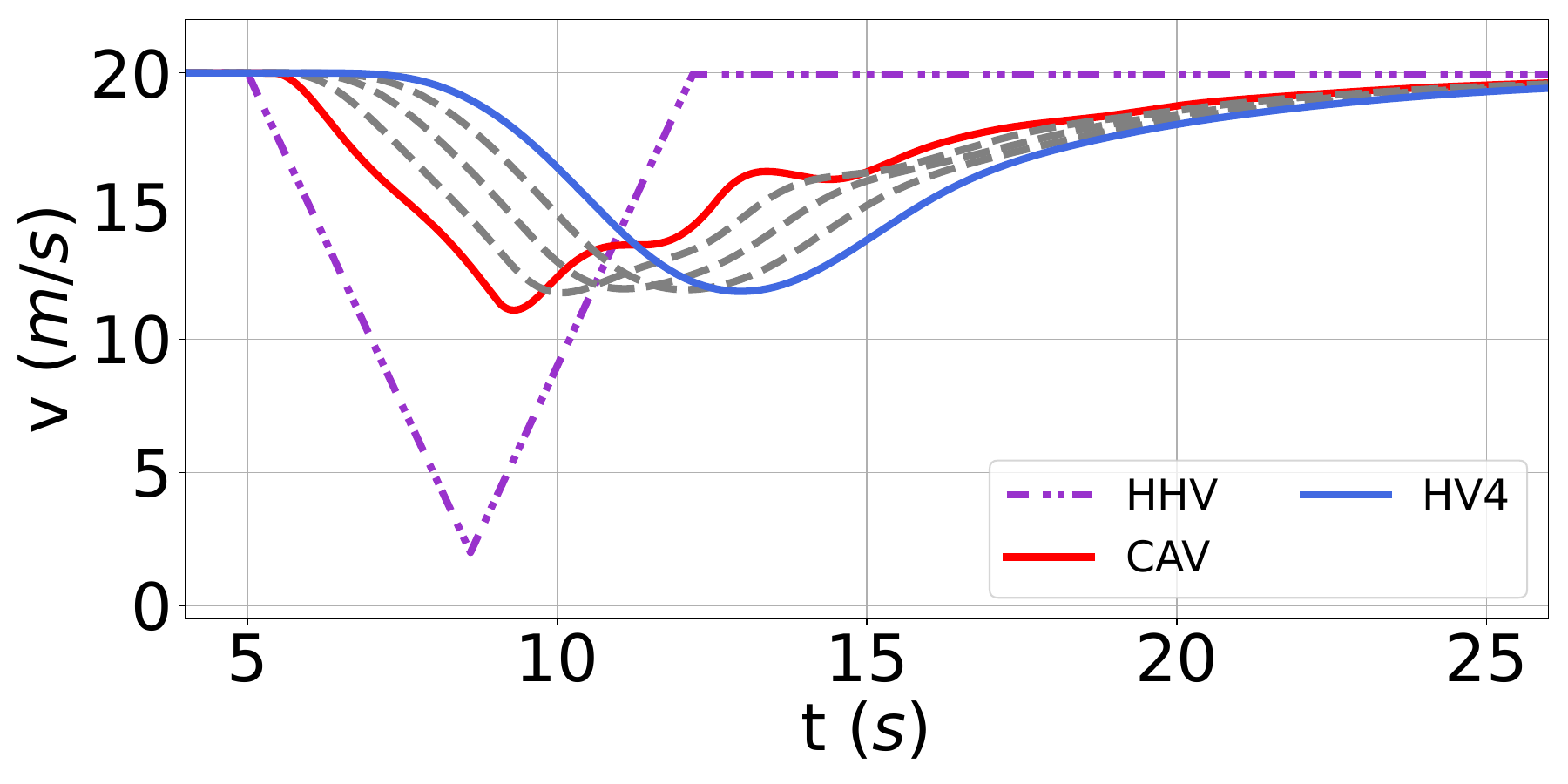}
    \includegraphics[width=0.24\linewidth]{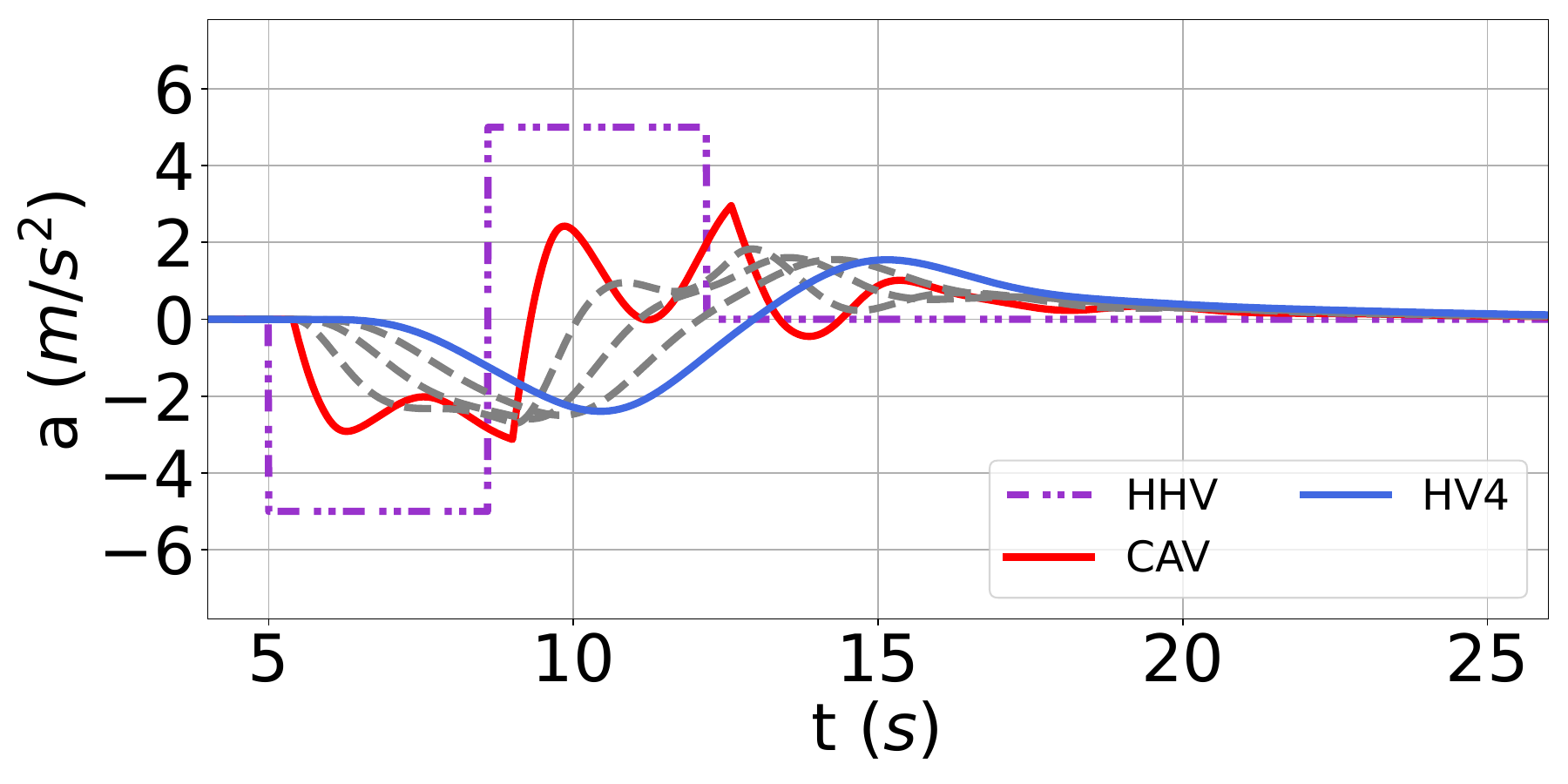}
    \includegraphics[width=0.24\linewidth]{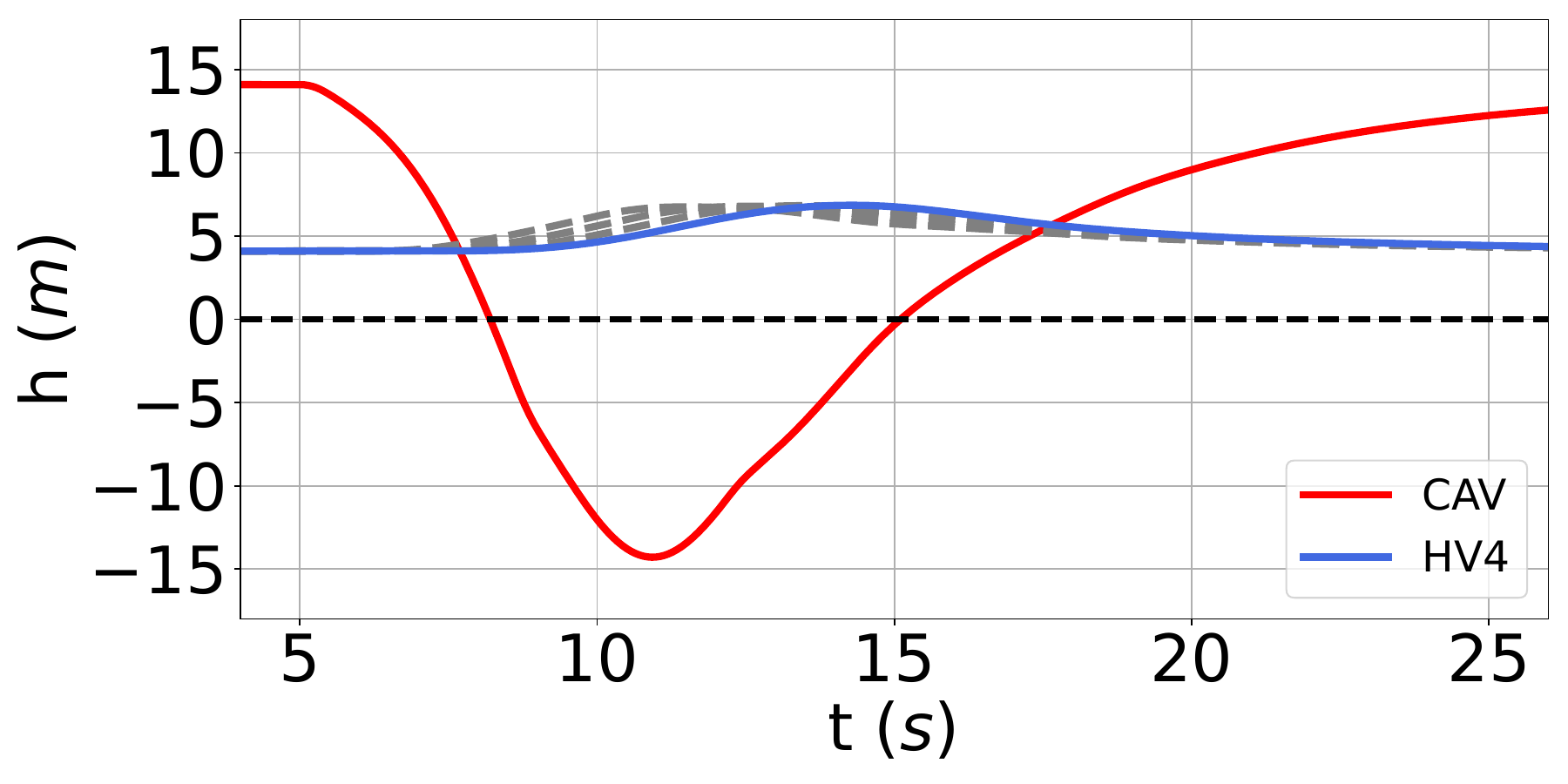}
    \\
    RSTC \\
    \includegraphics[width=0.24\linewidth]{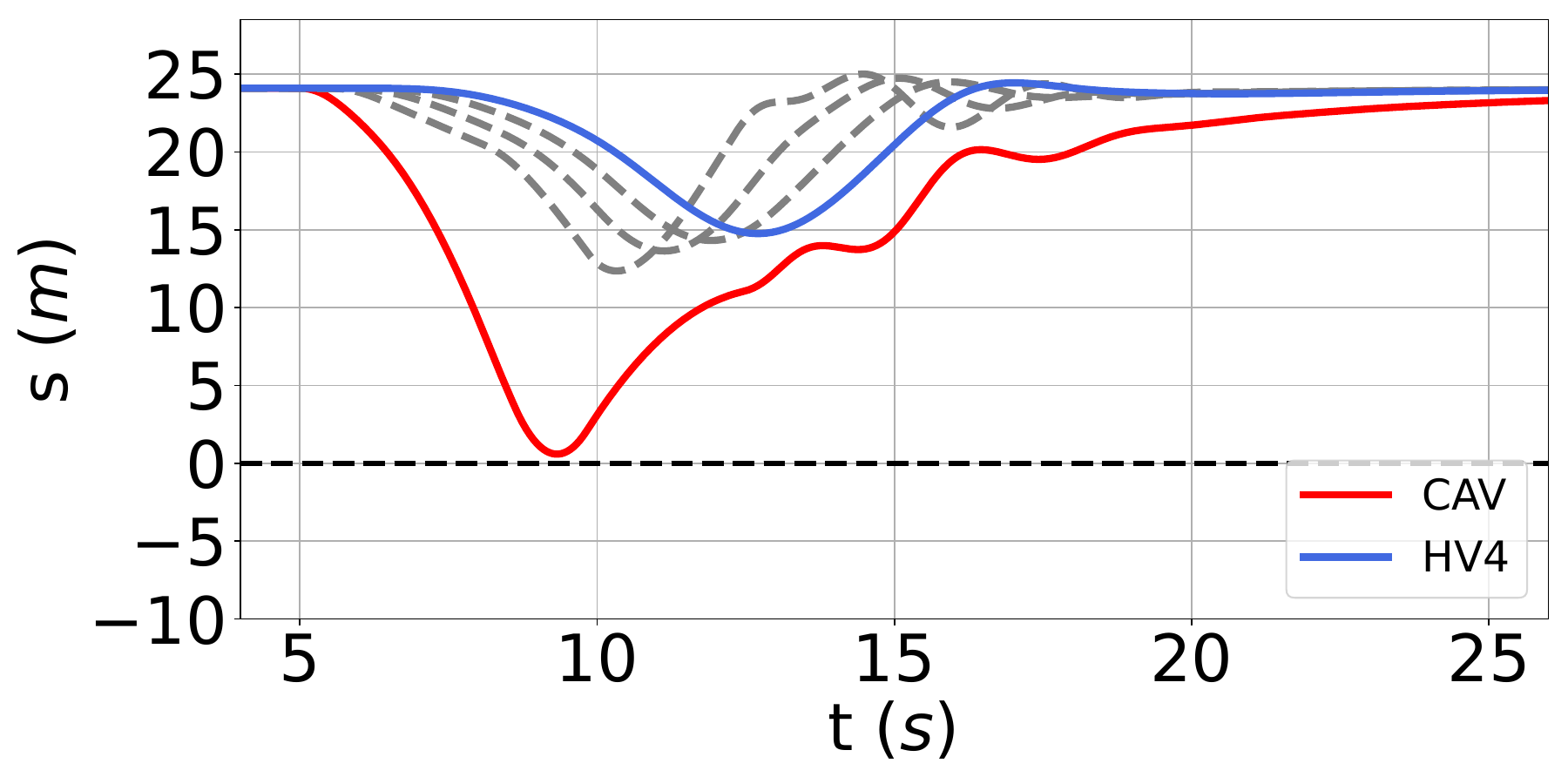}
    \includegraphics[width=0.24\linewidth]{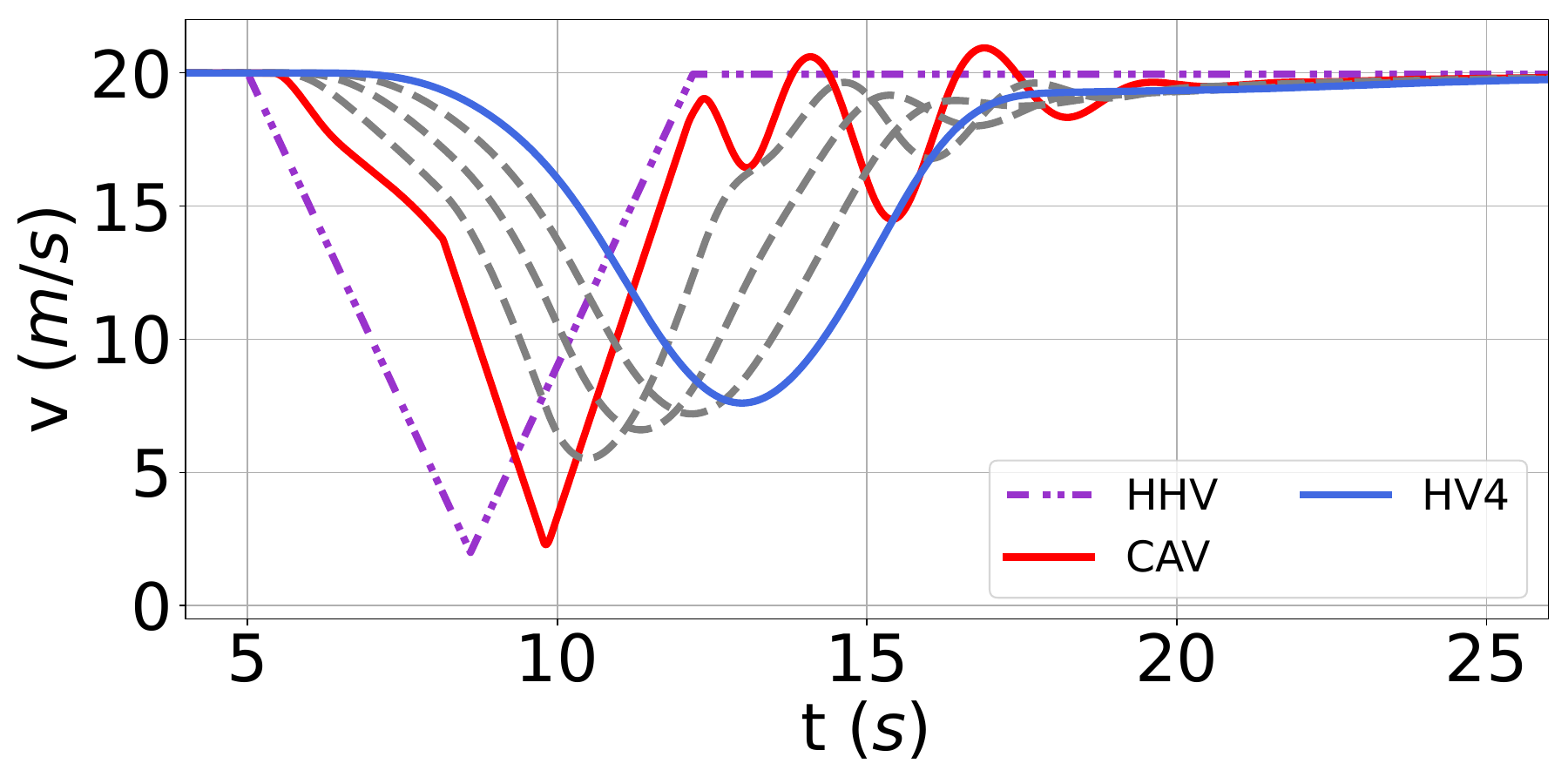}
    \includegraphics[width=0.24\linewidth]{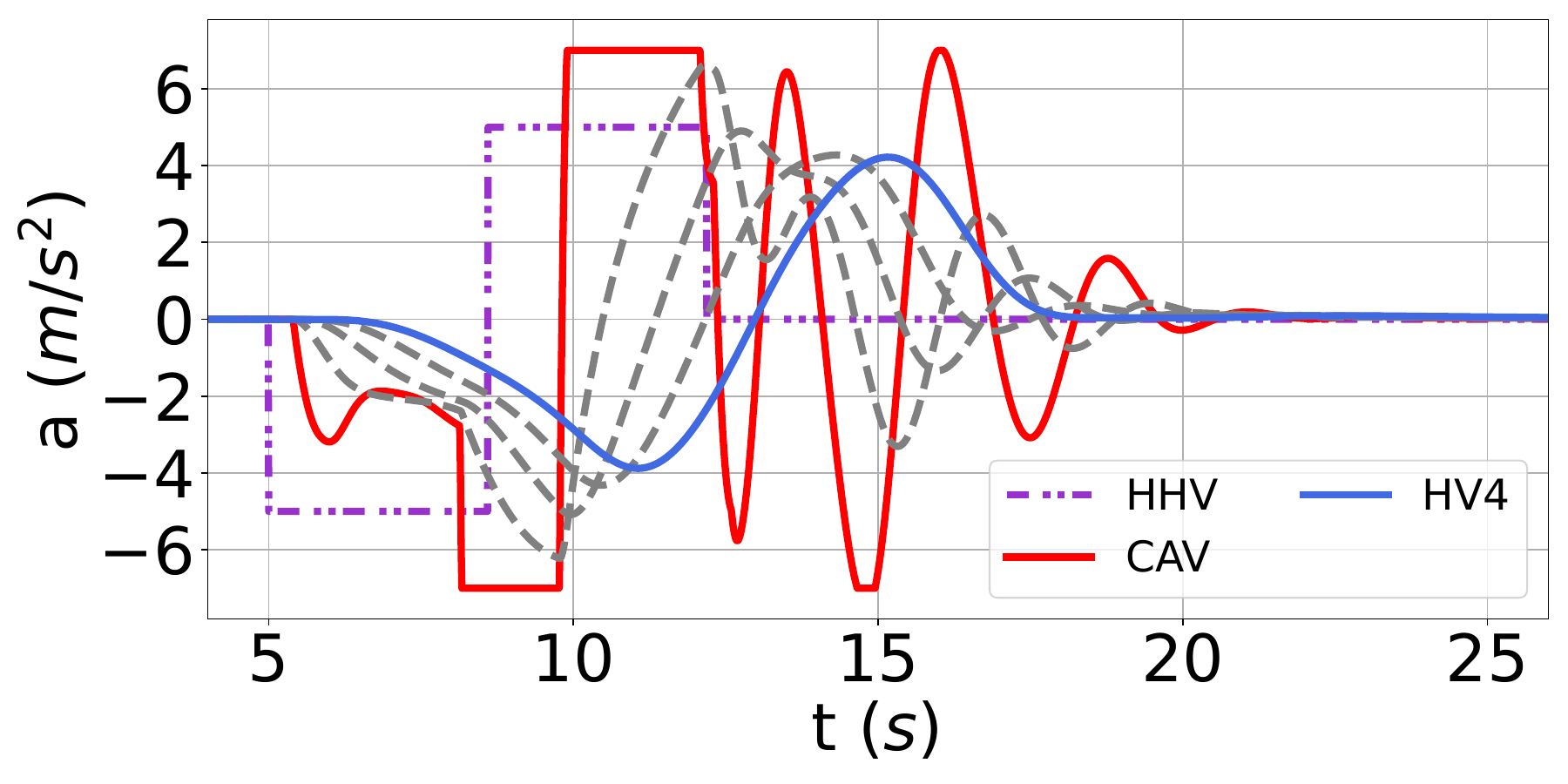}
    \includegraphics[width=0.24\linewidth]{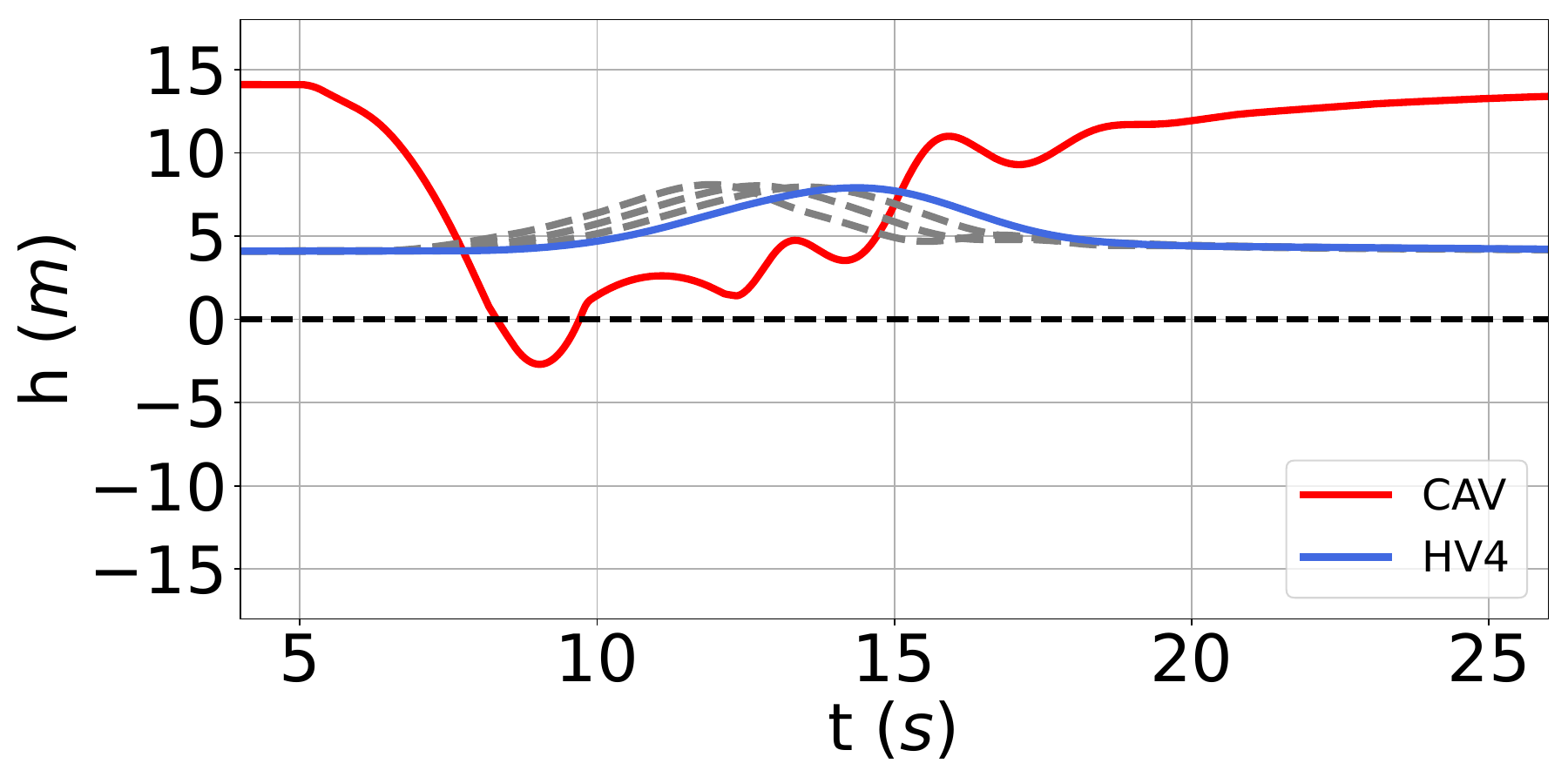}
    \caption{Numerical simulation of RSTC with actuator and sensor delays. We see that the formulated QP \eqref{eq:QP observer} guarantees safety, i.e., the gap keeps positive.}
    \label{fig:trajectory observer scenario 1}
\end{figure*}

\begin{figure*}[t!]
    \centering
    \includegraphics[width=0.24\linewidth]{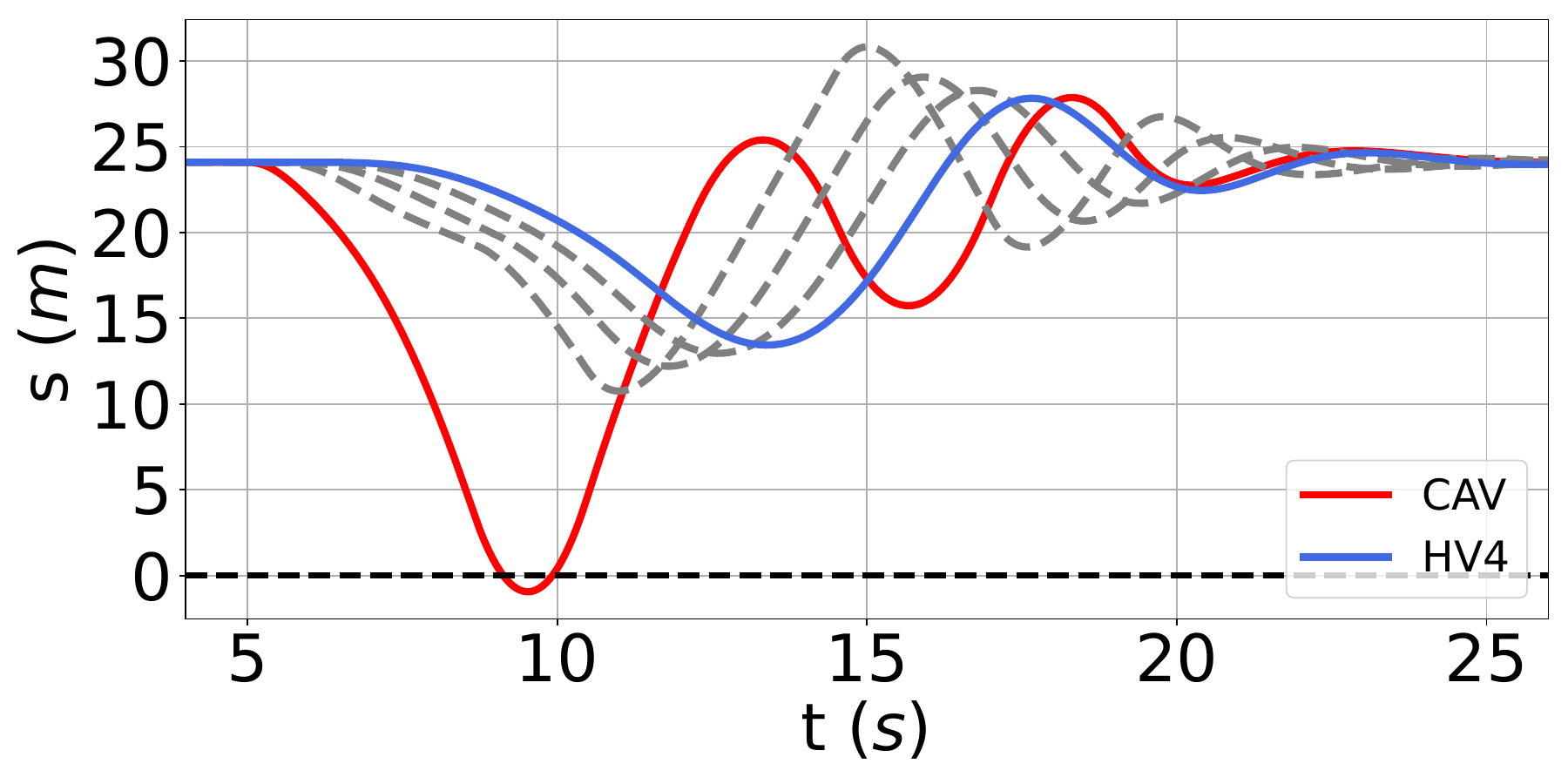}
    \includegraphics[width=0.24\linewidth]{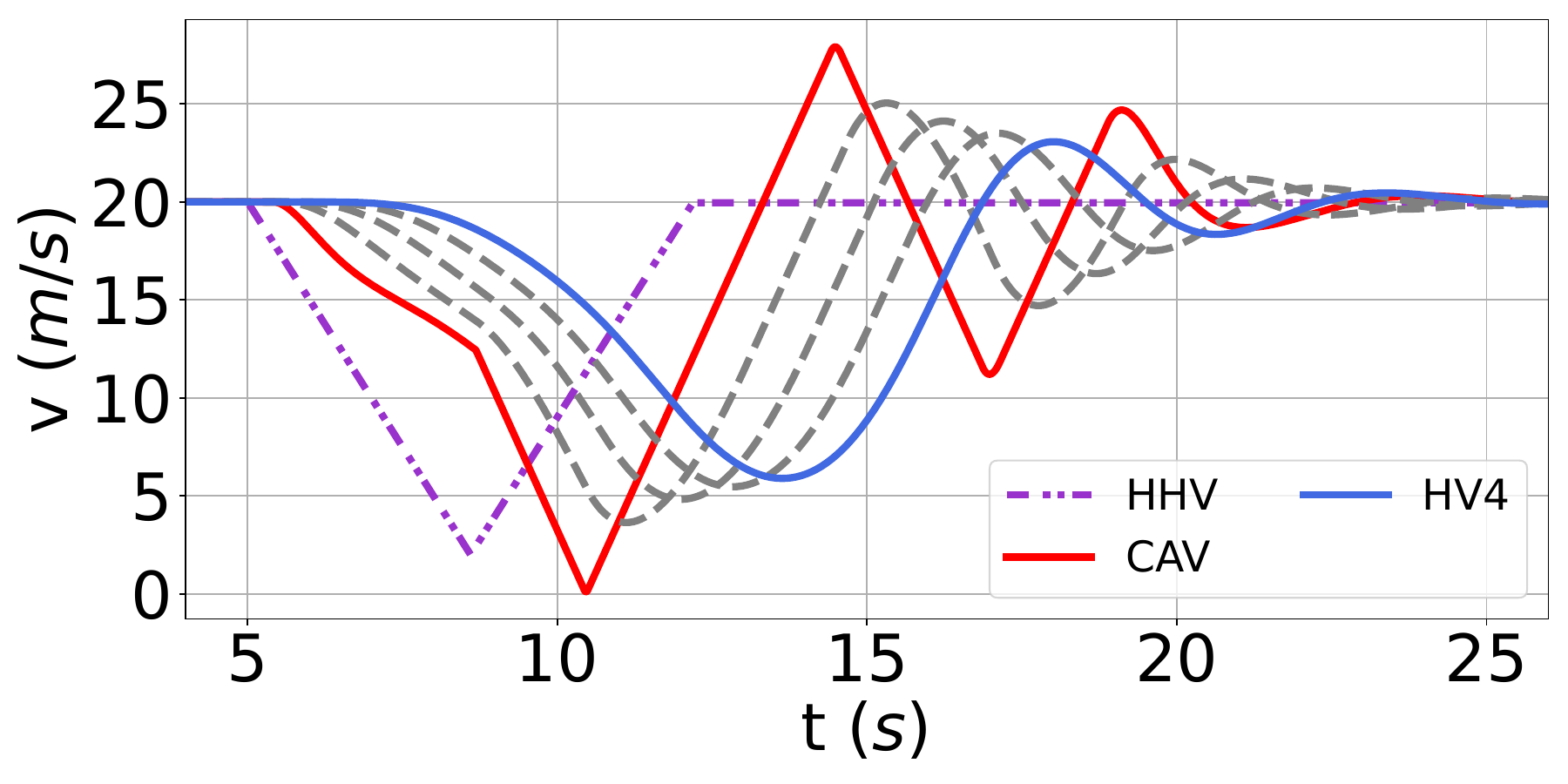}
    \includegraphics[width=0.24\linewidth]{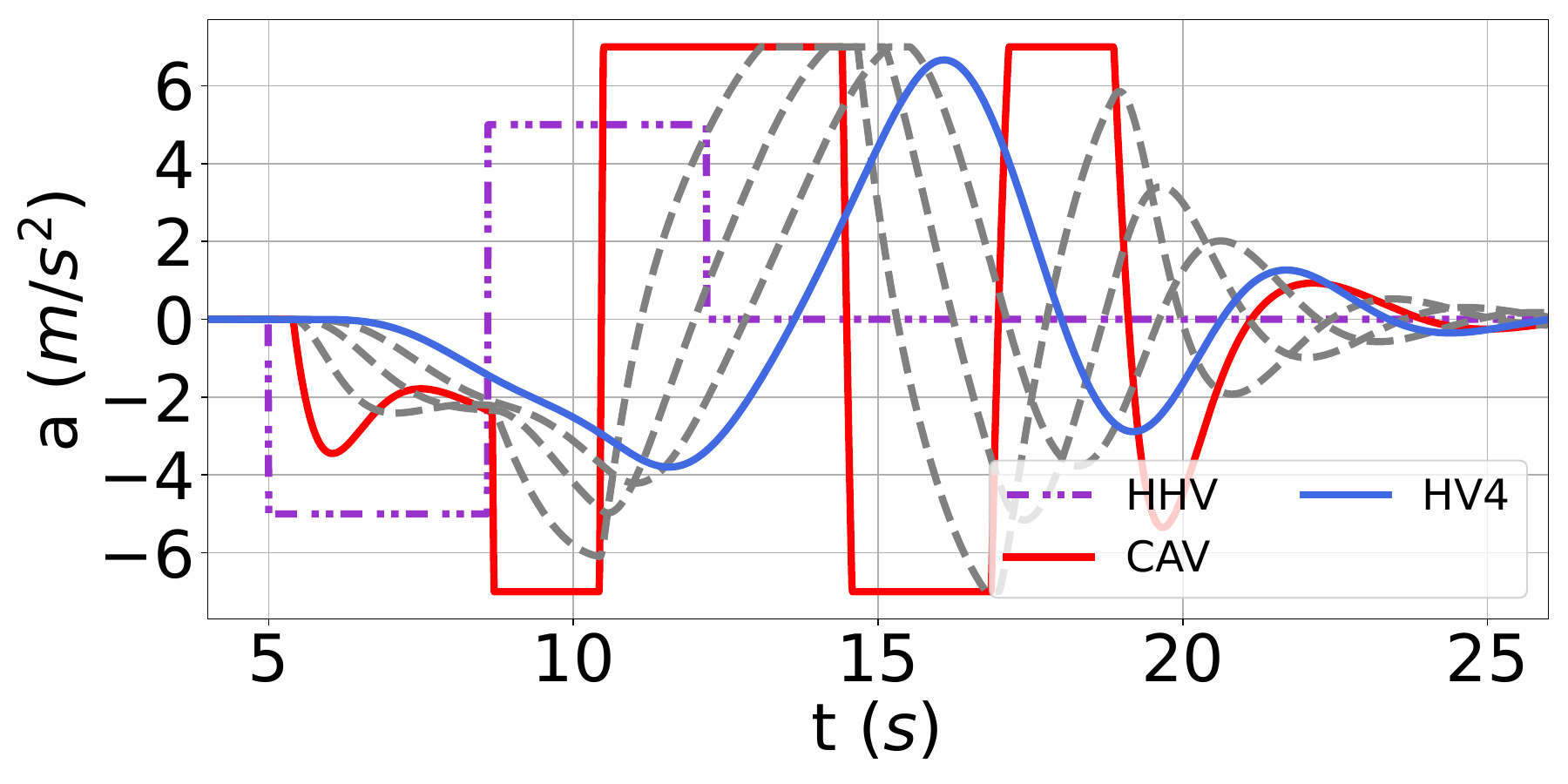}
    \includegraphics[width=0.24\linewidth]{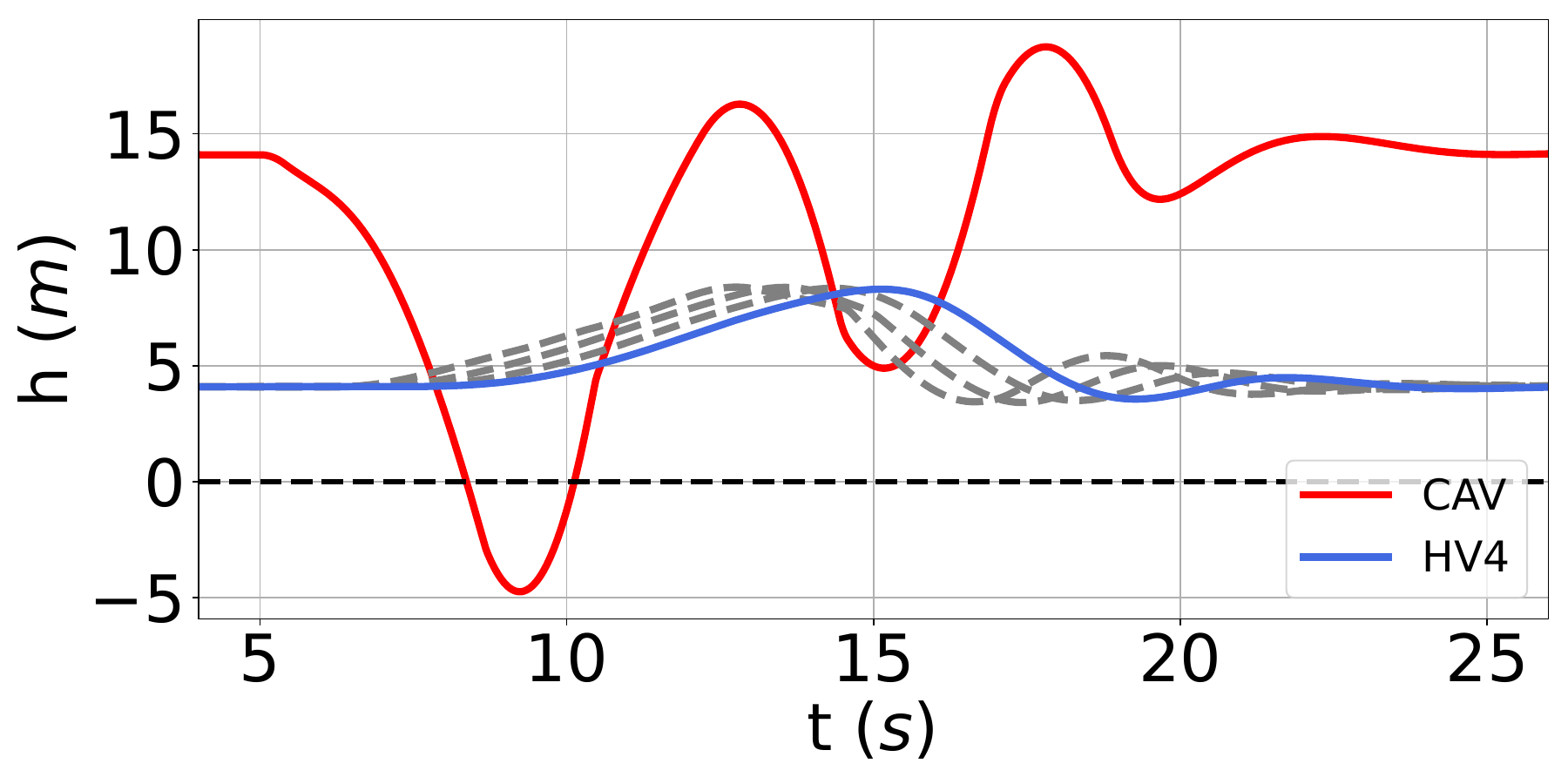}
    \caption{Numerical simulation of STC~\eqref{eq:QP CBFintro traffic} that ignores the actuator delay.}
    \label{fig:trajectory STC nodelay}
\end{figure*}

\begin{figure}
    \centering
    \subfloat[vehicle chain]{\includegraphics[width=0.48\linewidth]{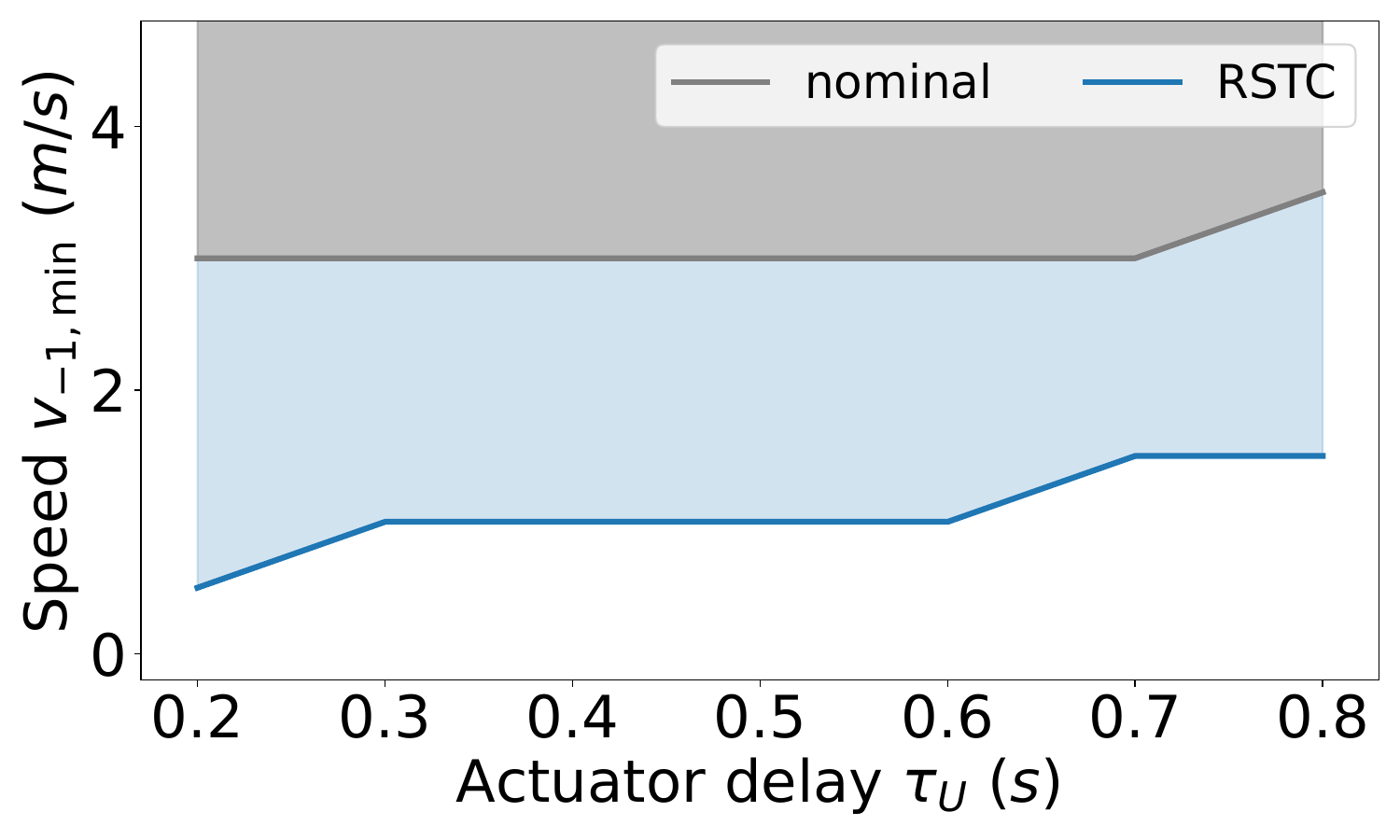}}
    \subfloat[CAV]{\includegraphics[width=0.48\linewidth]{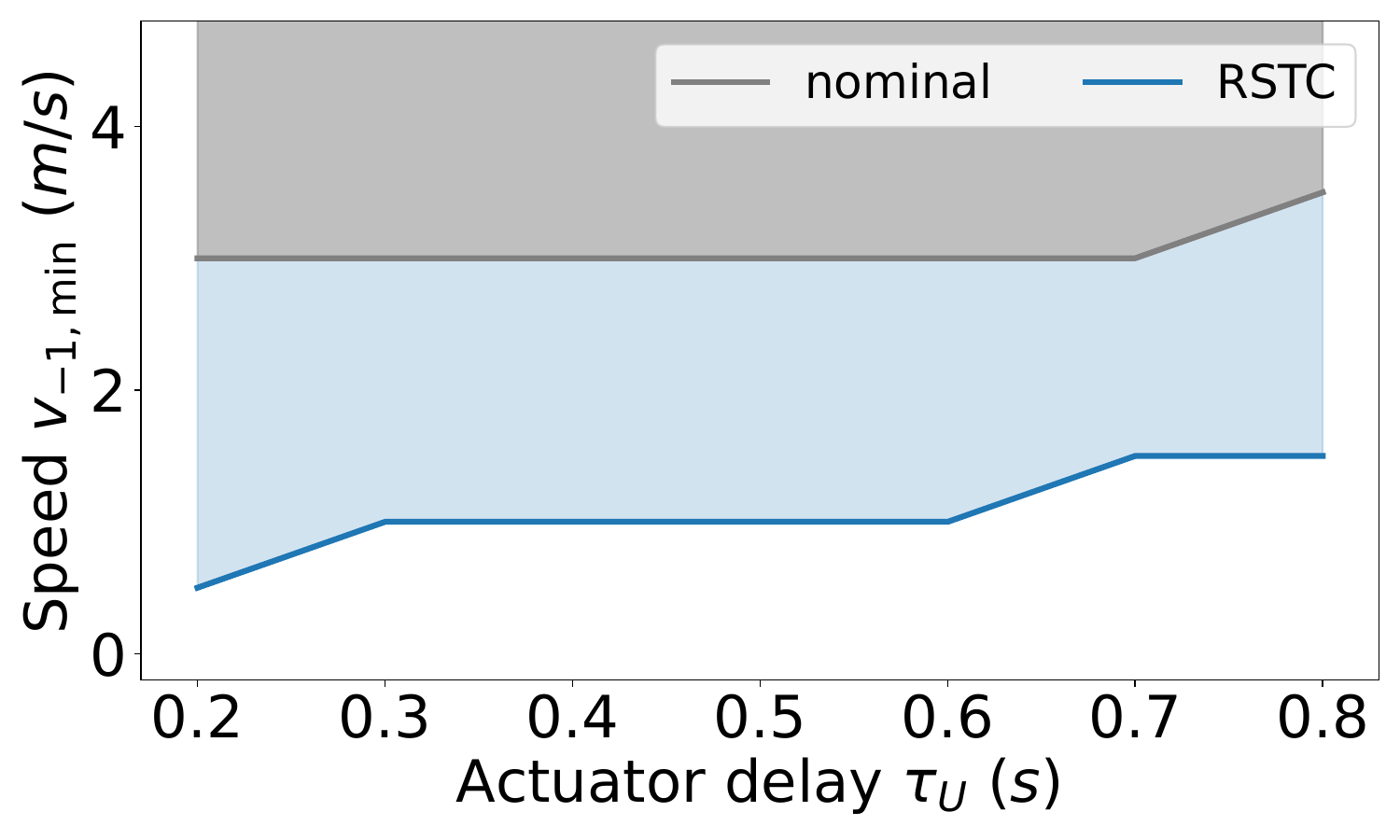}}
    \\
    \subfloat[HV-1]{\includegraphics[width=0.48\linewidth]{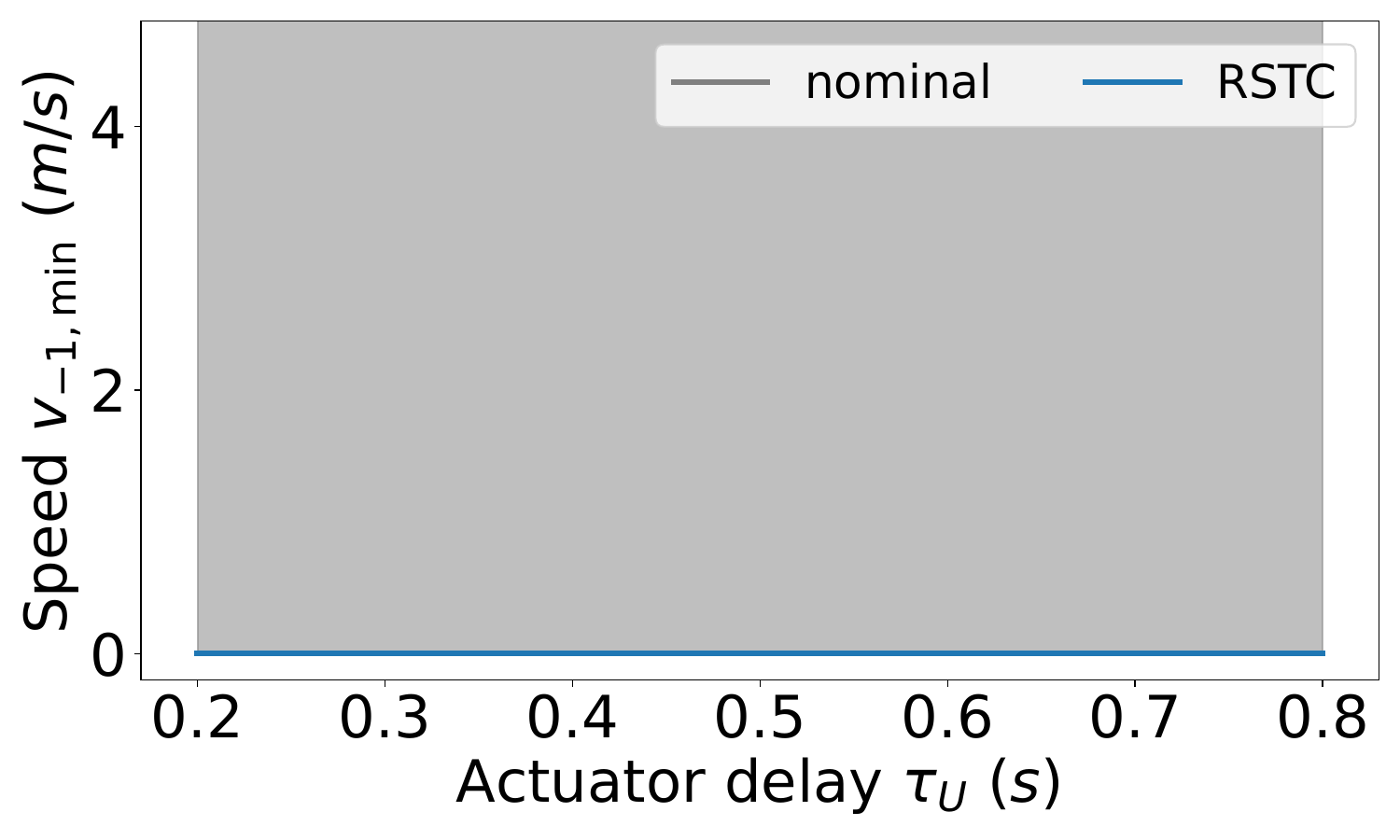}}
    \subfloat[HV-2]{\includegraphics[width=0.48\linewidth]{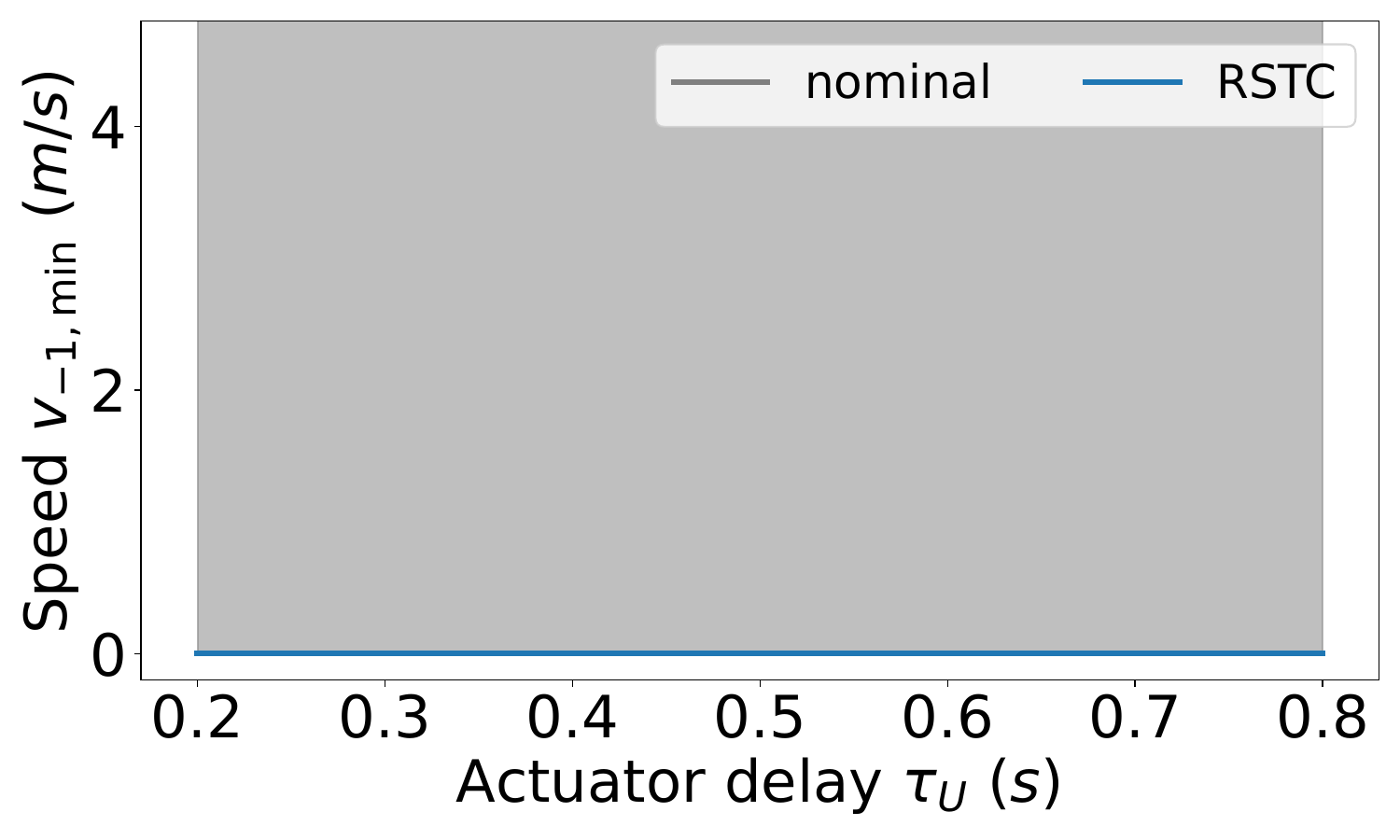}}
    \caption{Safety region, i.e., the range of speed perturbation of head vehicle that causes no rear-end collisions, of Scenario 1 under varying actuator delays $\delayu$. The grey line and area are the boundary and in-domain of the safety region of the nominal controller, respectively. The blue lines represent the boundaries of the safety regions of the proposed RSTC with corresponding time actuator delay $\delayu$. The blue area is the improvement in safety brought by the RSTC over the nominal controller. }
    \label{fig:safe region delayu scenario 1}
\end{figure}

\begin{figure}
    \centering
    \subfloat[vehicle chain]{\includegraphics[width=0.48\linewidth]{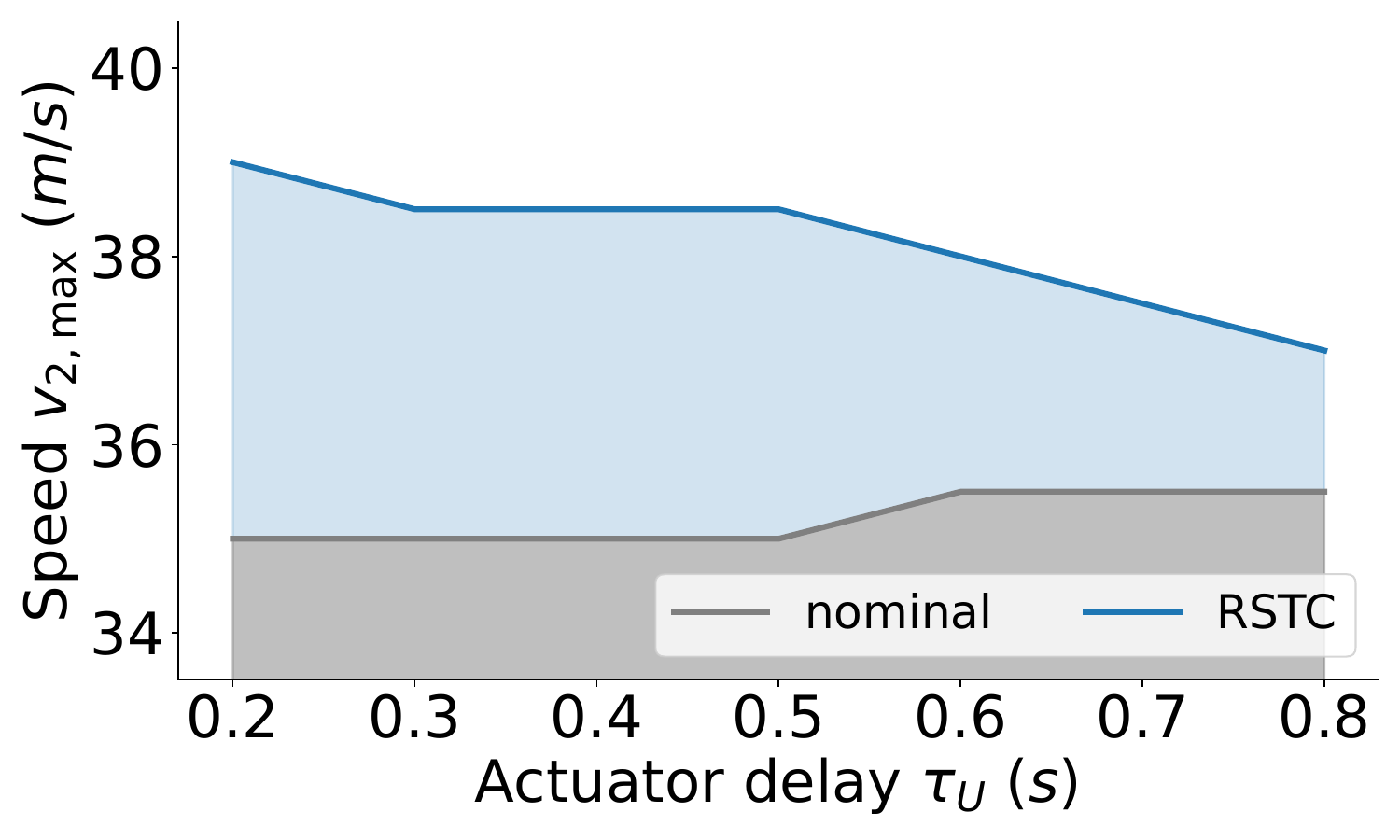}}
    \subfloat[CAV]{\includegraphics[width=0.48\linewidth]{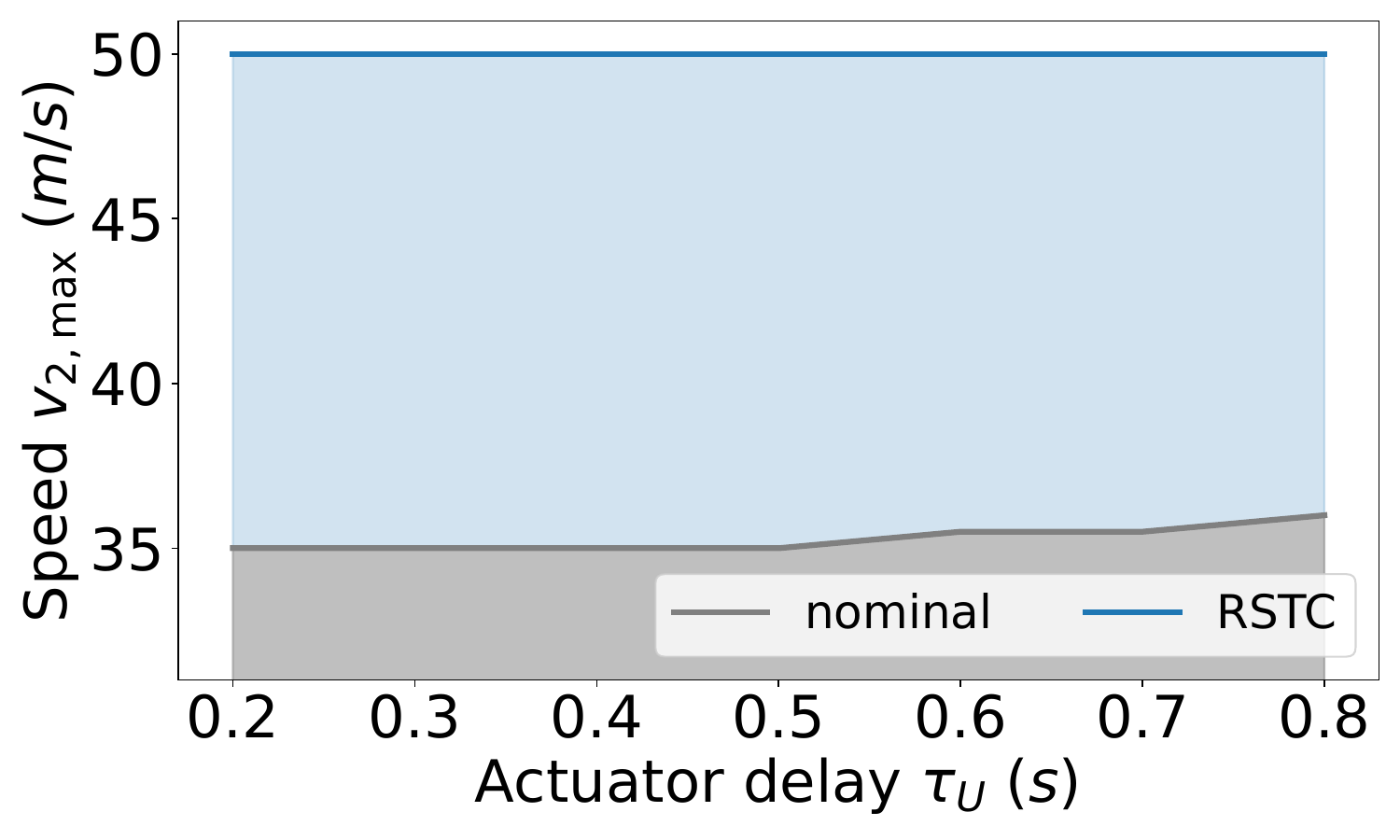}}
    \\
    \subfloat[HV-1]{\includegraphics[width=0.48\linewidth]{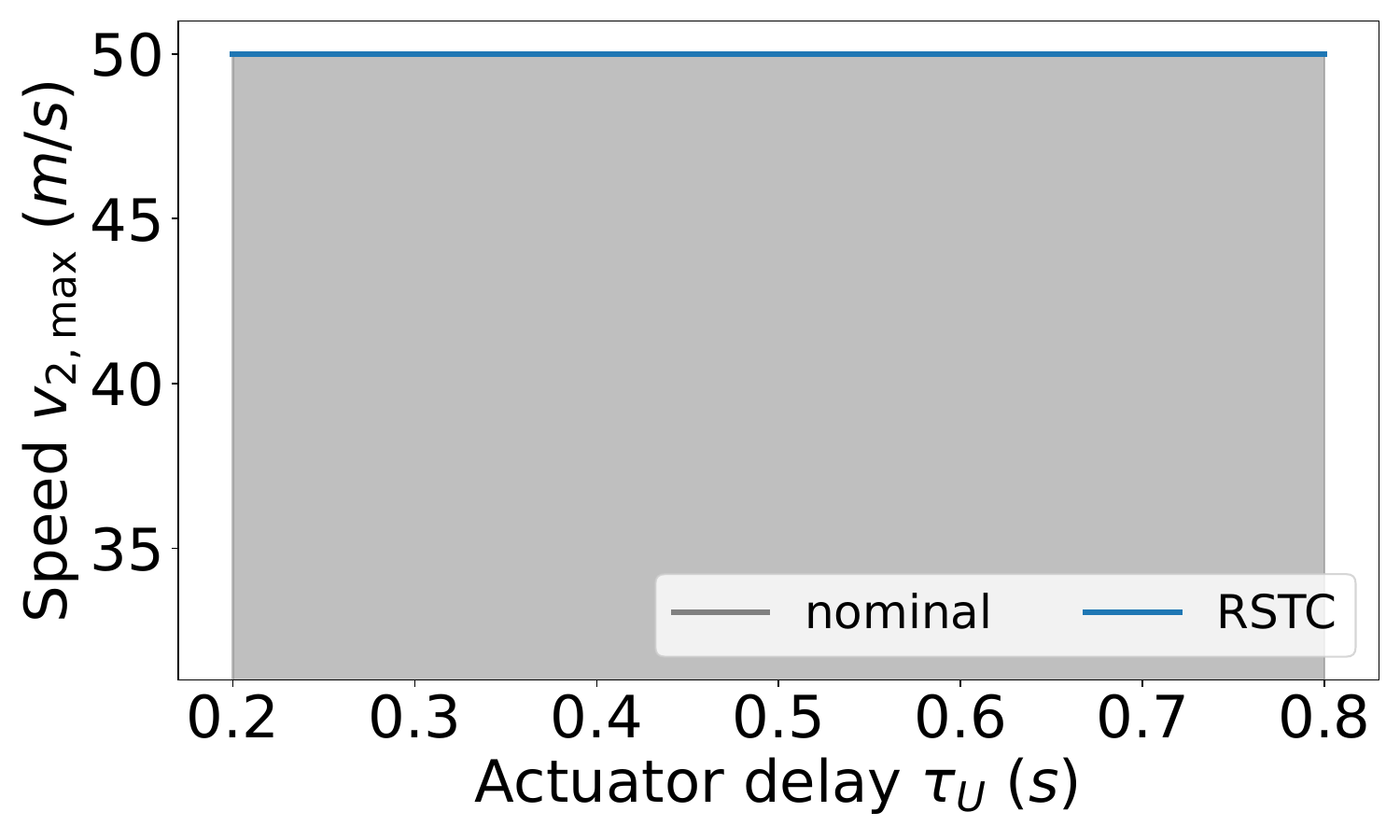}}
    \subfloat[HV-2]{\includegraphics[width=0.48\linewidth]{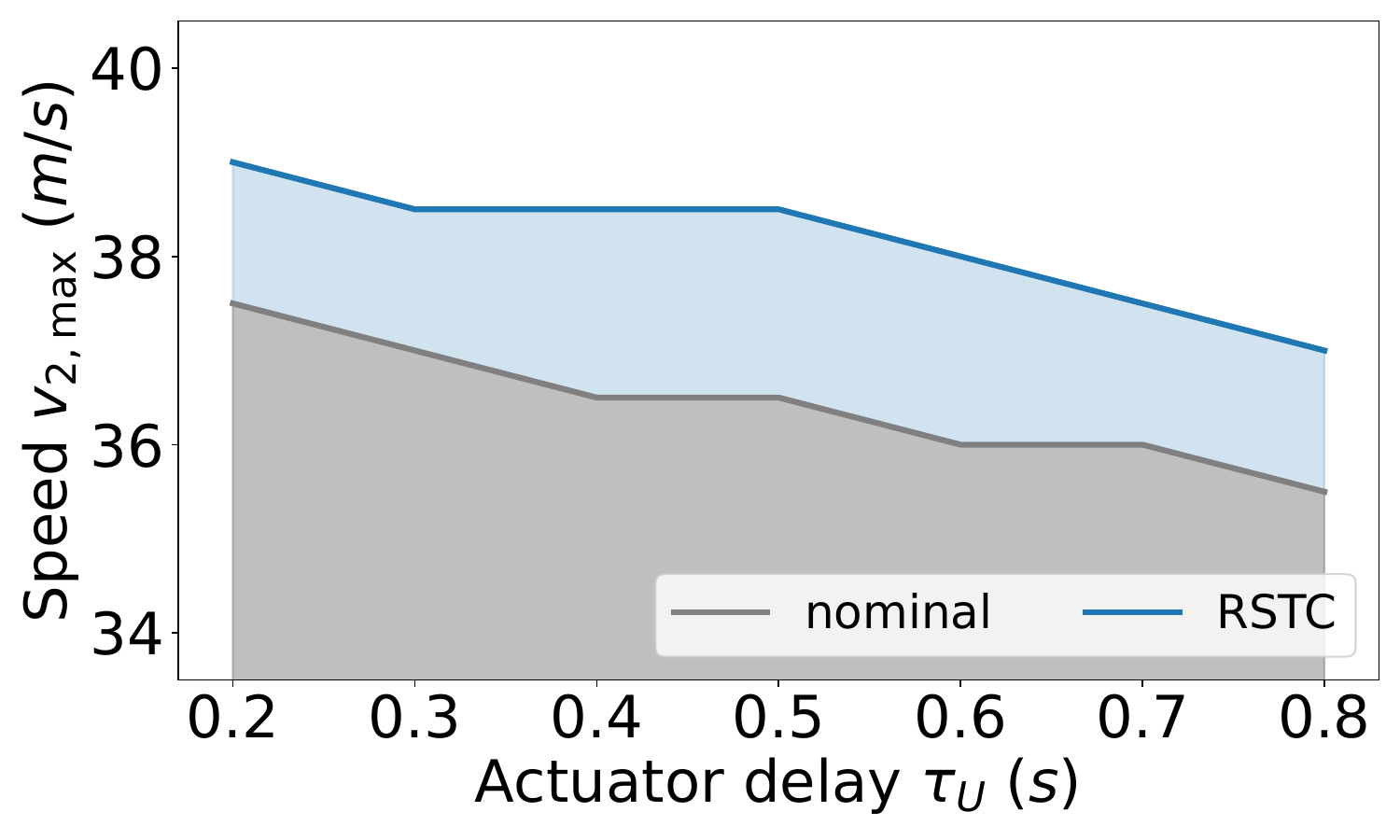}}
    \caption{Safety region of Scenario 2 under varying actuator delays $\delayu$.}
    \label{fig:safe region delayu scenario 2}
\end{figure}

For the nominal controller, we use the feedback controller 
\begin{align}\label{eq:nominal controller}
    u_0(t) = Kx_p(t) + \alpha_3 r(t),
\end{align}
with the feedback gain 
\begin{align}
    K = [\alpha_1,-\alpha_2,-2,0.2,-2,0.2,-2,0.2,-2,0.2].
\end{align}
In this paper, we focus on safety guarantee for the mixed autonomy system~\eqref{eq:system}. The stability of the system by the nominal controller can be referred to~\cite{wang2021leading,beregi2021connectivity}.
We take the actuator delay as $\delayu  = 0.4$ s and the sensor delay as $\delaysensor = 0.8$ s. For the safe spacing policy CTH \eqref{eq:safe s}, we take the time headway for CAV as $\psi_0 = 0.5$ s and for following HVs as $\psi_i = 1$ s. We solve the formulated QP \eqref{eq:QP} and \eqref{eq:QP observer} in  discrete time with the sampling time being 0.01 s.

We consider two safety-critical scenarios that can happen in real traffic.
\begin{itemize}
    \item Scenario 1: sudden deceleration of the head vehicle. This will happen when there is an aggressive cut-in from adjacent lanes. We make the head vehicle decelerate with an acceleration of $-a_{H}$ and duration of $t_{H}$. Then, the head vehicle accelerates to the equilibrium speed $v^*$ with acceleration $a_{H}$.  
    \item Scenario 2: sudden acceleration of the following vehicle. This may happen due to fatigue driving or wrong operations of human drivers. We set the last HV accelerate with an acceleration of $a_{F}$ and duration of $t_{F}$. After that, the HV decides its acceleration based on the model \eqref{eq:OVM}.
\end{itemize}
In Scenario 1 and 2, there is rear-end collision risk for the CAV and following HVs, respectively.

\subsection{Safety guarantee by RSTC}\label{sec:subsec:simulation main result}

We design CBF constraints when there is an actuator delay in Theorem \ref{theorem:safety} and construct a QP as \eqref{eq:QP}. 
In Fig. \ref{fig:trajectory scenario 1}, we give the profile of gap $s$, speed $v$, and acceleration $a$ generated by the nominal controller \eqref{eq:nominal controller} and the proposed RSTC controller \eqref{eq:QP} in Scenario 1 with $a_H = -5 \; \mathrm{m/s}$ and $t_H = 3.5$ s. For the nominal controller, we see from the second column of the speed $v$ profile that it stabilizes the mixed traffic so that the last vehicle (blue line) has a lower speed perturbation than the HHV (purple line). But this causes a rear-end collision between CAV and HHV ($s_0<0$, red line in the first column of the gap $s$ profile). The RSTC, on the other side,  generates a safety-critical controller   by setting a larger deceleration for CAV (red line in the third column of acceleration $a$ profile) to avoid collisions. We also give the profile of $h_i$ in the fourth column in Fig. \ref{fig:trajectory scenario 1}, which shows that RSTC guarantees $h_i\ge 0$.

In Fig. \ref{fig:trajectory scenario 2}, we compare the nominal controller \eqref{eq:nominal controller} and the proposed RSTC controller \eqref{eq:QP} in Scenario 2 with $a_F = 5 \; \mathrm{m/s}$ and $t_F = 2.6$ s. When HV-4 accelerates, we see that the RSTC synthesizes a larger acceleration for CAV, so that the following HV-1 to HV-3 also accelerates to avoid collisions between HV-4 and HV-3.

When only partial state is known with  sensor delay, we design CBF constraints in Theorem  \ref{theorem:safety observer}, and formulate the QP as in \eqref{eq:QP observer}. In Fig.   \ref{fig:trajectory observer scenario 1}, we give the trajectory by the RSTC \eqref{eq:QP observer}. We see that when there are actuator and sensor delays,   the RSTC still ensures safety and avoids collision.

\subsection{Analysis on RSTC}\label{sec:subsec:simulation analysis}

In Theorem \ref{theorem:safety}, we design safety constraints for mixed autonomy traffic when there is an actuator delay $\delayu$. The effect of the actuator delay is characterized by functions $M_i(t)$. To see the necessity of $M_i(t)$ in ensuring safety, we run simulations with the QP~\eqref{eq:QP CBFintro traffic} formulated for delay-free mixed autonomy systems. 
In Fig.~\ref{fig:trajectory STC nodelay}, we give the trajectory by~\eqref{eq:QP CBFintro traffic}. We see that the CBF designed for delay-free systems fails to ensure safety when there are delays.

We analyze how the actuator delay $\delayu$ affects the safety performance. We run simulations with $N=2$ following HVs, since we can see more clearly the safety of each vehicle. To evaluate a controller's performance in guaranteeing safety, we use the notation of ``safety region'', which is the range of speed perturbation with no rear-end collisions. 
\begin{itemize}
    \item For Scenario 1, it is the minimum speed of the head vehicle $v_{-1,\min}$ given as:
\begin{align}
   v_{-1,\min} = v^* - a_H t_H.
\end{align}
\item For Scenario 2, we use the maximum speed of HV-2 $v_{2,\max}$:
\begin{align}
    v_{2,\max} = v^* + a_F t_F.
\end{align}
\end{itemize}
In Fig.~\ref{fig:safe region delayu scenario 1} and Fig.~\ref{fig:safe region delayu scenario 2}, we give the safety region with the actuator delay $\delayu$ varying from $0.2$ s to $0.8$ s. From Fig.~\ref{fig:safe region delayu scenario 1}(a) and Fig.~\ref{fig:safe region delayu scenario 2}(a), we see that RSTC enlarges the safety region over the nominal controller with a wide range of actuator delays. As for the safety of each vehicle in Scenario 1, we see from Fig.~\ref{fig:safe region delayu scenario 1}(c) and Fig.~\ref{fig:safe region delayu scenario 1}(d) that the nominal controller already avoids collisions for the following HV-1 and HV-2, and the RSTC keeps this safety guarantee. For the CAV, the RSTC improves its safety region as shown in Fig.~\ref{fig:safe region delayu scenario 1}(b). For Scenario 2, we see from Fig.~\ref{fig:safe region delayu scenario 2}(b) that RSTC  avoids rear-end collisions for the CAV even when the following HV-2 accelerates to  50 m/s. As for the following HV-1, Fig.~\ref{fig:safe region delayu scenario 2}(c) shows that both the nominal controller and the RSTC ensure its safety. For following HV-2's safety, we see from Fig.~\ref{fig:safe region delayu scenario 2}(d) that RSTC enlarges its safety region under various $\delayu$.

\section{Conclusion}

In this paper, we develop robust-safety-critical traffic controller for mixed autonomy systems to guarantee a collision-free safety, in the presence of actuator and sensor delays, and disturbances from the leading HV.  Both full-state and partial-state feedback are considered for the stabilizing CAV, and safety constraints are designed for the two cases by predictor-based CBF  and predictor-observer-based CBF, respectively. For future work,  it is of interest to authors to address safety when control and coordination of multiple CAVs in mixed traffic is considered. Another extension is to consider safety-critical control in more complex and challenging traffic scenarios in addition to car-following behaviors, such as lane changing and merging.

\appendices
\section{Proof of Theorem \ref{theorem:safety}} \label{sec:appendix proof safety}

We first prove the safety constraints for CAV following the three steps. 

Step \uppercase\expandafter{\romannumeral1}: Bound the prediction error.

From the predictor \eqref{eq:predict x true} and the true future state \eqref{eq:predict x}, we have the prediction error is
\begin{align}
    x_e &= x(t+\delayu) - x_p(t) \notag\\
    &= \int_{0}^{\delayu} e^{A(\delayu-\theta)} D (r(t+\theta)-r(t)) \diff \theta. \label{eq:prediction error}
\end{align}
From \eqref{eq:model ABD}, we have that only the first element in $D$  is non-zero, so we have  
\begin{align}
    e^{A(\delayu-\theta)} D &= \left(e^{A(\delayu-\theta)}\right)_{\cdot,1} \notag\\
    &= \left(I + \sum_{k=1}^{\infty}(\delayu-\theta)^kA^k\right)_{\cdot,1} \notag\\
    & = I_{:,1} + \sum_{k=1}^{\infty}(\delayu-\theta)^k \left(A^k\right)_{\cdot,1},
\end{align}
where for a matrix $A$, we use $A_{\cdot,j}$ to denote the vector of its $j$-th column. 
From the coefficient definition in  \eqref{eq:model ABD}, we see that the first column of $A$ is zero, i.e., $A_{\cdot,1} = 0$. We prove by mathematical induction that
\begin{align}
    \left(A^k\right)_{\cdot,1} = 0, \quad \forall k\ge 1.
\end{align}
Therefore, we have
\begin{align}\label{eq:eAD=100}
    e^{A(\delayu-\theta)} D = I_{\cdot,1} = [1,0,\cdots,0]^{\top},
\end{align}
which indicates that there is only a prediction error for the CAV's gap. Denote $s_{0,p}(t)$  and $v_{0,p}(t)$ as the predicted gap and speed for CAV, we have the prediction error is
\begin{align}
    s_0(t+\delayu) - s_{0,p}(t) &= \int_0^{\delayu} \left(r(t+\theta) - r(t) \right)\diff \theta, \\
    v_0(t+\delayu) - v_{0,p}(t) &= 0. \label{eq:predict error bound v0}
\end{align}
To derive a bound on the prediction error $s_0(t+\delayu) - s_{0,p}(t)$, we note that
\begin{align}
    r(t+\theta) = r(t) + \int_0^{\theta} \dot{r} (t+\xi) \diff \xi.
\end{align}
From Assumption \ref{assumption:bound rdot}, we have
\begin{align}
    r(t) + \theta \drlower \le r(t+\theta) \le  r(t) + \theta \drupper,
\end{align}
which gives
\begin{align}
    s_0(t+\delayu) - s_{0,p}(t) &\le \int_0^{\tau} \theta \drupper \diff \theta = \frac{1}{2}\drupper\delayu^2, \\
    s_0(t+\delayu) - s_{0,p}(t) &\ge \int_0^{\tau} \theta \drlower \diff \theta = \frac{1}{2}\drlower\delayu^2. \label{eq:predict error bound s0 lower}
\end{align}

Step \uppercase\expandafter{\romannumeral2}: Construct a robust safety function.

By the prediction error \eqref{eq:predict error bound v0} and \eqref{eq:predict error bound s0 lower}, we see that to meet the safe gap condition in \eqref{eq:safe s},  it is sufficient to satisfy 
\begin{align}
    s_{0,p}\left(t\right)  + \frac{1}{2}\drlower\delayu^2 \ge \psi_0 v_{0,p} \left(t\right),
\end{align}
If we define a robust safety function $h_{0R}(x):\mathbb{R}^{n} \to \mathbb{R}$ as
\begin{align}
    h_{0R}(x) = h_0(x) +  \frac{1}{2}\drlower\delayu^2,
\end{align}
then we have $h_{0R}(x_p(t)) \ge 0 \implies h_0(x(t+\tau))\ge 0$.

Step \uppercase\expandafter{\romannumeral3}:
Derive CBF constraints.

From \eqref{eq:eAD=100} and the coefficient definition \eqref{eq:model ABD}, we have
\begin{align}\label{eq:eAD=D}
    e^{A(\delayu-\theta)} D = D, \quad \forall \theta.
\end{align}
So we have
\begin{align}
    \int_{0}^{\delayu} e^{A(\delayu-\theta)} D \diff \theta  = \int_{0}^{\delayu} D \diff \theta  = D \delayu,
\end{align}
which gives
\begin{align}
    \int_{0}^{\delayu} e^{A(\delayu-\theta)} D r(t) \diff \theta = D\delayu r(t).
\end{align}
The predictor \eqref{eq:predict x} is re-written as
\begin{align}
    x_p(t) =& e^{A\delayu}x(t) 
     +e^{At}\int_{t-\delayu}^{t} e^{-A\theta} B u(\theta) \diff \theta  + D\delayu r(t) .
\end{align}
The dynamics of the predictor is
\begin{align}
 \dot x_p(t)  =&e^{A\delayu}\dot x(t) + A e^{At}\int_{t-\delayu}^{t} e^{-A(\theta-t)} B u(\theta) \diff \theta \notag \\
    &+ Bu(t) - e^{A\delayu}Bu(t-\delayu) + D \delayu \dot{r}(t) \notag\\
    =& Ax_p(t) + Bu(t) \notag \\
    &+ e^{A\delayu}Dr(t) - AD\delayu r(t) + D \delayu \dot{r}(t).
\end{align}
Since $AD = 0$ and $e^{A\delayu}D = D$, we have
\begin{align}
    \dot{x}_p(t) = Ax_p(t) + Bu(t) + Dr(t) + D \delayu \dot{r}(t).
\end{align}

The time derivative of $h_{0R}(x_p(t))$ is 
\begin{align}
    \dot{h}_{0R}(x_p(t)) = &\frac{\partial h_{0R}(x_p(t))}{\partial x_p} Ax_p(t) + \frac{\partial h_{0R}(x_p(t))}{\partial x_p} Bu(t) \notag \\
    &+  \frac{\partial h_{0R}(x_p(t))}{\partial x_p} D(r (t) +  \delayu \dot{r}(t)).
\end{align}
By the definition of $h_{0R}(x)$ and $h_0(x)$, we have
\begin{align}\label{eq:ph0_px}
    \frac{\partial h_{0R}(x_p)}{\partial x_p} = \frac{\partial h_0(x_p)}{\partial x_p} = \begin{bmatrix}
    H_0 &0_{1\times 2} \cdots 0_{1\times 2}
\end{bmatrix}^{\top},
\end{align}
with $H_0 = [1,-\psi_0]$, which gives
\begin{align}
    \dot{h}_{0R}(x_p(t)) 
    =& L_fh_0(x_p(t)) + L_gh_0(x_p(t))u(t) \notag \\
    & +\frac{\partial h_{0}(x_p(t))}{\partial x_p} D(r (t) +  \delayu \dot{r}(t)).
\end{align}
To guarantee $h_{0R}(x_p(t))\ge 0$, the control input should satisfy
\begin{align}\label{eq:CBF constraint h0R}
    \dot{h}_{0R}(x_p(t)) \ge -\alpha_0(h_{0R}(x_p(t))),
\end{align}
with $\alpha_0$ being an \EKF. However, since $\dot r(t)$, the acceleration of the head vehicle is unknown, $\dot{h}_{0R}(x_p(t))$ is also unknown. 
From the model coefficient $D$ in~\eqref{eq:model ABD}, we have
\begin{align}
    \frac{\partial h_0(x_p(t))}{\partial x_p} D=1 >0.
\end{align}
So we have 
\begin{align}
    \dot{h}_{0R}(x_p(t)) \ge & L_fh_0(x_p(t)) + L_gh_0(x_p(t))u(t) \notag \\
    &+ \frac{\partial h_0(x_p(t))}{\partial x_p} D (r(t) + \delayu \drlower),
\end{align}
where $\drlower$ is a bound on $\dot r$ in Assumption~\ref{assumption:bound rdot}.
To meet \eqref{eq:CBF constraint h0R}, it is sufficient to  have
\begin{align}
    &L_fh_0(x_p(t)) + L_gh_0(x_p(t))u(t) + \frac{\partial h_0(x_p(t))}{\partial x_p} D (r(t) + \delayu \drlower)  \notag \\
    &\ge -\alpha_0(h_{0R}(x_p(t))),
\end{align}
which is \eqref{eq:safe constraint CBF CAV theorem}.

Now we prove the safety constraints for HVs. We also follow  the aforementioned three steps.

Step \uppercase\expandafter{\romannumeral1}: Bound the prediction error.

By the prediction error derived in \eqref{eq:prediction error} and \eqref{eq:eAD=100}, we have that there is no prediction error for HV-$i$'s gap and speed, 
\begin{align}
    s_{i,p}(t) &= s_i(t+\delayu), \\
    v_{i,p}(t) &= v_i(t+\delayu),
\end{align}
where $s_{i,p}(t)$ and $v_{i,p}(t)$ denotes the predicted gap and speed for HV-$i$ respectively.

Step \uppercase\expandafter{\romannumeral2}: Construct a robust safety function.

Since there is no prediction error for HV-$i$'s state, we have $h_i \left( x_p \left(t\right) \right) \ge 0 \implies h_i(x(t+\delayu))\ge 0 $. However, since $h_i$ has a high relative degree, we construct a reduced-degree safety function as    
\begin{align}
    h_{iR}^{\mathrm{r}}(x) &= h_{i}(x) - \eta_i h_{0R}(x) \notag \\
    &= h_{i}^{\mathrm{r}}(x) - \frac{1}{2}\eta_i\drlower\delayu^2,
\end{align}
It is straightforward that if $h_{0R}(x_p)\ge0$ and $ h_{iR}^{\mathrm{r}}(x_p)\ge0$, then $h_i(x_p)\ge 0$, which implies that the original safe criterion $h_i(x(t+\delayu))$ is met.

Step \uppercase\expandafter{\romannumeral3}:
Derive CBF constraints.

The time derivative of ${h}_{iR}^{\mathrm{r}}(x_p(t))$ is
\begin{align}
    \dot{h}_{iR}^{\mathrm{r}}(x_p(t)) = &\frac{\partial h_{iR}^{\mathrm{r}} (x_p(t))}{\partial x_p} Ax_p(t) + \frac{\partial h_{iR}^{\mathrm{r}} (x_p(t))}{\partial x_p} Bu(t) \notag \\
    &+  \frac{\partial h_{iR}^{\mathrm{r}} (x_p(t))}{\partial x_p} D(r (t) +  \delayu \dot{r}(t)),
\end{align}
where
\begin{align}\label{eq:phir_px}
    &\frac{\partial h_{iR}^{\mathrm{r}} (x_p)}{\partial x_p} = \frac{\partial h_{i}^{\mathrm{r}}(x_p)}{\partial x_p} = \begin{bmatrix}
        -\eta_i H_0 \! & \! 0_{1\times 2} \! & \! \cdots \! & \!  H_{i} \! &\! \cdots \! & \!0_{1\times 2}
    \end{bmatrix}^{\top},
\end{align}
with $H_i = [1,-\psi_i]$. So we have 
\begin{align}
    \dot{h}_{iR}^{\mathrm{r}} (x_p(t))= &L_f h_{i}^{\mathrm{r}}(x_p(t)) + L_g h_{i}^{\mathrm{r}}(x_p(t)) u(t) \notag \\
    & + \frac{\partial h_{i}^{\mathrm{r}}(x_p)}{\partial x_p} D ( r(t) + \delayu \dot{r}(t)).
\end{align}

To guarantee $h_{iR}^{\mathrm{r}}(x_p(t))\ge 0$, the control input $u$ should satisfy
\begin{align}\label{eq:CBF constraint hiRr}
    \dot{h}_{iR} \left( x_p\left(t\right) \right) \ge -\alpha_i \left(h_{iR}\left(x_p\left(t\right)\right)\right).
\end{align}
We note that
\begin{align}
    \frac{\partial h_{i}^{\mathrm{r}} (x_p)}{\partial x_p} D = -\eta_i <0,
\end{align}
so a sufficient condition for \eqref{eq:CBF constraint hiRr} is
\begin{align}
    &L_f h_{i}^{\mathrm{r}}(x_p(t)) + L_g h_{i}^{\mathrm{r}}(x_p(t)) u(t) +\frac{\partial h_{i}^{\mathrm{r}}(x_p)}{\partial x_p} D ( r(t) + \delayu \drupper) \notag \\
    &\ge -\alpha_i(h_{iR}^{\mathrm{r}} \left( x_p\left(t\right) \right)),
\end{align}
where $\drupper$ is the bound on $\dot r $ in Assumption~\ref{assumption:bound rdot}. 

\section{Proof of Theorem \ref{theorem:safety observer}}\label{sec:appendix proof safety observer}

We first prove the safe constraints for CAV following the three steps.

Step \uppercase\expandafter{\romannumeral1}: Bound the prediction error.

We re-write the predictor \eqref{eq:predict x observer} as 
\begin{align}
    \hat x_p(t) =& e^{A\delayu}  x(t) - e^{A\delayu}  \epsilon(t)  +\int_{-\delayu}^{0} e^{-A\theta} B u(t+\theta) \diff \theta  \notag \\
     &+ \int_{0}^{\delayu} e^{A(\delayu-\theta)} D r(t) \diff \theta \notag \\
     = & x_p (t) - e^{A\delayu}  \epsilon(t).
\end{align}
The prediction error for \eqref{eq:predict x observer} is
\begin{align}
    \hat x_{e} (t) &= x(t+\delayu)  - \hat{x}_p(t) \notag \\
    &= x(t+\delayu)  - x_p(t) + e^{A\delayu}\epsilon(t) \notag \\
    &= x_{e}(t) + e^{A\delayu}\epsilon(t).
\end{align}
We use $\hat{s}_{i,p}(t)$ and $\hat{v}_{i,p}(t)$ to denote the predicted gap and speed for vehicle $i$, and we have
\begin{align}
     -\eobound e^{-\lambda t} \le \hat s_{i,e}(t) - s_{i,e} (t)
     &\le \eobound e^{-\lambda t}, \label{eq:predict error bound si observer general}\\
     -\eobound e^{-\lambda t} \le \hat v_{i,e}(t) - v_{i,e} (t) &\le \eobound e^{-\lambda t}, \label{eq:predict error bound vi observer general}
\end{align}
with $\eobound = \Vert e^{A\delayu}\Vert \Upsilon \bar{\epsilon}$.

From the bound on $s_{0,e}$ and $v_{0,e}$ in \eqref{eq:predict error bound s0 lower} and \eqref{eq:predict error bound v0}, we have the bound on $\hat s_{0,e}$ and $\hat v_{0,e}$ as
\begin{align}
        \frac{1}{2}\drlower\delayu^2 - \eobound e^{-\lambda t}\le \hat{s}_{0,e} (t)  & \le \frac{1}{2}\drupper \delayu^2 + \eobound e^{-\lambda t}, \label{eq:predict error bound s0 observer}\\
        - \eobound e^{-\lambda t}\le \hat{v}_{0,e} (t) & \le \eobound e^{-\lambda t}. \label{eq:predict error bound v0 observer}
\end{align}

Step \uppercase\expandafter{\romannumeral2}: Construct a robust safety function.

By the prediction error bound in \eqref{eq:predict error bound s0 observer} and \eqref{eq:predict error bound v0 observer}, CAV safety \eqref{eq:safe s} is satisfied if
\begin{align}
    \hat s_{0,p}\left(t\right)  + \frac{1}{2}\drlower\delayu^2 - \eobound e^{-\lambda t}\ge \psi_0 \hat v_{0,p} \left(t\right)+ \psi_0  \eobound e^{-\lambda t}.
\end{align}
If we define an observer-based robust safety function as 
\begin{align}
    h_{0R}^o\left(\hat x_p \left(t\right),t\right) &= h_0(\hat x_p(t)) + \frac{1}{2}\drlower\delayu^2+W_0(t),
\end{align}
with
\begin{align}
    W_0(t) = -(1+\psi_0)\eobound e^{-\lambda t},
\end{align}
then we have
\begin{align}
    h_{0R}^o\left(\hat x_p \left(t\right),t\right)   \ge 0 \implies h_0(x(t+\delayu))\ge 0.
\end{align}

Step \uppercase\expandafter{\romannumeral3}:
Derive CBF constraints.

We re-write the predictor \eqref{eq:predict x observer} as 
\begin{align}
    \hat{x}_p(t) =& e^{A\delayu} \hat x(t) 
     +e^{At}\int_{t-\delayu}^{t} e^{-A\theta} B u(\theta) \diff \theta  + D\delayu r(t) .
\end{align}
The dynamics of $\hat x_p(t)$ is 
\begin{align}
    \dot{\hat{x}}_p(t) =&e^{A\delayu} \dot{\hat{x}} (t) + A e^{At}\int_{t-\delayu}^{t} e^{-A(\theta-t)} B u(\theta) \diff \theta \notag \\
    &+ Bu(t) - e^{A\delayu}Bu(t-\delayu) + D \delayu \dot{r}(t) \notag \\
    =& A\hat{x}_p(t) + Bu(t) + Dr(t) + D  \dot{r}(t)\delayu \notag \\
    &+ e^{A\delayu}L(Y(t)-\bar{C}\hat x(t) ).
\end{align}

By the CBF constraints for time varying functions $h(x,t)$~\cite{xu2018constrained}, to guarantee $h_{0R}^o\left(\hat x_p \left(t\right),t\right)   \ge 0$, the control input $u$ should satisfy 
\begin{align}\label{eq:CBF constraint h0R observer}
    \dot{h}_{0R}^o\left(\hat x_p \left(t\right),t\right) \ge -\alpha_0( h_{0R}^o \left(\hat x_p \left(t\right)\right)). 
\end{align}
The time derivative of $h_{0R}^o\left(\hat x_p \left(t\right),t\right)$ is 
\begin{align}
    \dot{h}_{0R}^o\left(\hat x_p \left(t\right),t\right) =& \dot{h}_{0} \left(\hat x_p \left(t\right)\right)+\dot{W}_0(t) \notag \\
    =& \frac{\partial h_{0}(\hat x_p(t))}{\partial \hat x_p} A\hat x_p(t) + \frac{\partial h_{0}(\hat x_p(t))}{\partial \hat x_p} Bu(t) \notag \\
    &+  \frac{\partial h_{0}(\hat x_p(t))}{\partial \hat x_p} D(r (t) +  \delayu \dot{r}(t)) \notag\\
    &+\frac{\partial h_{0}(\hat x_p(t))}{\partial \hat x_p} e^{A\delayu}L(Y(t)-\bar{C}\hat x(t) ) \notag \\
    &+\dot{W}_0(t).
\end{align}
From the gradient of $h_{0R}$ in \eqref{eq:ph0_px} and the bound on $\dot r$ in Assumption \ref{assumption:bound rdot}, we have
\begin{align}
    \dot{h}_{0R}^o\left(\hat x_p \left(t\right),t\right) \ge & L_fh_0(\hat x_p(t))  + L_gh_0(\hat x_p(t)) \notag \\
    &+ \frac{\partial h_{0}(\hat x_p(t))}{\partial \hat x_p} D (r(t) + \delayu \drlower) \notag\\
    &+\frac{\partial h_{0}(\hat x_p(t))}{\partial \hat x_p} e^{A\delayu}L(Y(t)-\bar{C}\hat x(t) ) \notag \\
    &+\dot{W}_0(t).
\end{align}
Therefore, to meet \eqref{eq:CBF constraint h0R observer}, it is sufficient to have \eqref{eq:safe constraint CBF CAV theorem observer}.

Now we prove the safe constraints for HVs.

Step \uppercase\expandafter{\romannumeral1}: Bound the prediction error.

From the prediction error on \eqref{eq:predict error bound si observer general} and \eqref{eq:predict error bound vi observer general}, since $s_{i,e}=0$ and $v_{i,e} = 0$ as \eqref{eq:prediction error} shows, we have 
\begin{align}
        - \eobound e^{-\lambda t}\le \hat{s}_{i,e} (t) & \le \eobound e^{-\lambda t},\quad \forall i=1,\cdots,N, \label{eq:predict error bound si observer}\\
        - \eobound e^{-\lambda t}\le \hat{v}_{i,e} (t) & \le \eobound e^{-\lambda t},\quad \forall i=1,\cdots,N. \label{eq:predict error bound vi observer}
\end{align}

Step 
\uppercase\expandafter{\romannumeral2}: Construct a robust safety function.

By the prediction error bound on gap and speeed in \eqref{eq:predict error bound si observer}  and \eqref{eq:predict error bound vi observer}, the safety of HVs is guaranteed if
\begin{align}
    \hat s_{i,p}  \left(t\right) - \eobound e^{-\lambda t} \ge \hat v_{i,p}  \left(t\right) + \psi_i \eobound e^{-\lambda t}.
\end{align}
If we define an observer-based robust safety function for HV-$i$ as
\begin{align}
    h_{iR}^o(x) = h_i (x) + W_i(t),
\end{align}
with
\begin{align}
    W_i(t) = - (1+\psi_i) \eobound e^{-\lambda t},
\end{align}
then 
\begin{align}
    h_{iR}^o(x_p(t)) \ge 0\implies h_i(x(t+\delayu)) \ge 0.
\end{align}
We note that $h_{iR}^o(x)$  also has a high relative degree, so we introduce reduced order CBF candidates as 
\begin{align}
    h_{iR}^{o\mathrm{r}}(x) &= h_{iR}^o(x) - \eta_i h_{0R}^o(x) \notag \\
    &=h_{i}^{\mathrm{r}}(x) -\frac{1}{2}\drlower\eta_i\delayu^2 +W_i(t) - \eta_i W_0(t).
\end{align}

Step \uppercase\expandafter{\romannumeral3}:
Derive CBF constraints.

The safety condition ${h}_{iR}^{o\mathrm{r}}(\hat x_p(t)) \ge 0 $ is satisfied if
\begin{align}\label{eq:CBF constraint hiR observer}
    \dot{h}_{iR}^{o\mathrm{r}}(\hat x_p(t)) \ge -\alpha_i({h}_{iR}^{o\mathrm{r}}(\hat x_p(t))).
\end{align}
The time derivative of $\dot{h}_{iR}^{o\mathrm{r}}(\hat x_p(t),t)$ is 
\begin{align}
    \dot{h}_{iR}^{o\mathrm{r}}(\hat x_p(t),t) &= \dot{h}_{i}^{\mathrm{r}}(\hat x_p(t)) +\dot W_i(t) - \eta_i \dot{W}_0(t) \notag \\
    &+\frac{\partial h_{i}^{\mathrm{r}}(\hat x_p(t))}{\partial \hat x_p} e^{A\delayu}L(Y(t)-\bar{C}\hat x(t) ).
\end{align}
From the gradient of $h_{i}^{\mathrm{r}}$ in \eqref{eq:phir_px}, we have 
\begin{align}
    \dot{h}_{iR}^{o\mathrm{r}}(\hat x_p(t),t) \ge & L_f h_{i}^{\mathrm{r}}(\hat x_p(t)) + L_gh_{i}^{\mathrm{r}}(\hat x_p(t)) u(t) \notag \\
    & + \frac{\partial h_{i}^{\mathrm{r}}(\hat x_p(t))}{\partial \hat x_p}  ( r(t)  + \delayu \drupper )\notag \\
    & +\frac{\partial h_{i}^{\mathrm{r}}(\hat x_p(t))}{\partial \hat x_p} e^{A\delayu}L(Y(t)-\bar{C}\hat x(t) ) \notag \\
    & +\dot W_i(t) - \eta_i \dot{W}_0(t).
\end{align}
So \eqref{eq:CBF constraint hiR observer} is satisfied if we have \eqref{eq:safe constraint CBF HDV theorem observer}.

\bibliographystyle{IEEEtranS}
\bibliography{ref}

\end{document}